\newcommand{\ignore}[1]{}
\journal{Discrete Applied Mathematics}
\let\OLDthebibliography\thebibliography
\renewcommand\thebibliography[1]{
  \OLDthebibliography{#1}
  \setlength{\parskip}{0pt}
  \setlength{\itemsep}{0pt plus 0.3ex} }
\definecolor{gruen}{rgb}{0,.5,0}
\crefname{appendix}{}{}
\crefname{lem}{Lemma}{Lemmas}
\crefname{thm}{Theorem}{Theorems}
\crefname{rem}{Remark}{Remarks}
\crefname{ex}{Example}{Examples}
\crefname{probl}{Problem}{Problems}
\crefname{clm}{Claim}{Claims}
\crefname{prop}{Proposition}{Propositions}
\crefname{figure}{Figure}{Figures}
\crefname{fact}{Fact}{Facts}
\crefname{obs}{Observation}{Observations}
\newtheorem{thm}{Theorem}
\newtheorem{lem}{Lemma}
\newtheorem*{hlem}{Homogenization Lemma}
\newtheorem*{envlem}{Envelope Lemma}
\newtheorem*{decomplem}{Decomposition Lemma}
\newtheorem{fact}{Fact}
\newtheorem{probl}{Problem}
\theoremstyle{definition}
\newtheorem{dfn}{Definition}
\theoremstyle{remark}
\newtheorem{rem}{Remark}
\newtheorem{ex}{Example}
\newtheorem*{notation}{Notation}
\def\qed{\mbox{}\hfill\vbox{\hrule height0.6pt\hbox{%
  \vrule height1.3ex width0.6pt\hskip0.8ex
   \vrule width0.6pt}\hrule height0.6pt
  }}
\def\deg#1{\mathrm{deg}(#1)}
\let\geq\geqslant
\let\leq\leqslant
\let\daug\odot
\let\suma\oplus
\let\bar\overline
\newcommand{\plus}{\mbox{\footnotesize $+$}}
\newcommand{\NN}{\mathbb{N}}
\newcommand{\RR}{\mathbb{R}}
\newcommand{\f}{\mathcal{F}}
\newcommand{\pRR}{\mathbb{R}_{\plus}}
\newcommand{\Exp}[1]{B_{#1}}
\newcommand{\Low}[1]{A_{#1}}
\newcommand{\pos}[1]{{#1}^+}
\newcommand{\supp}[1]{\mathrm{sup}(#1)}
\newcommand{\Supp}[1]{\mathrm{Sup}(#1)}
\newcommand{\const}[1]{c_{#1}}
\newcommand{\Up}[1]{#1^{\uparrow}}
\newcommand{\lenv}[1]{\lfloor #1\rfloor}
\newcommand{\arithm}[1]{\mathsf{Arith}(#1)}
\newcommand{\BBool}[1]{\mathsf{B}(#1)}
\newcommand{\linBool}[1]{\mathsf{B}_{\mathrm{lin}}(#1)}
\newcommand{\mlinBool}[1]{\mathsf{B}^{\plus}_{\mathrm{lin}}(#1)}
\newcommand{\Bool}[2]{\mathsf{B}_{#2}(#1)}
\newcommand{\sBool}[2]{\mathsf{B}^{\ast}_{#2}(#1)}
\newcommand{\BP}[2]{\mathsf{BP}_{#2}(#1)}
\newcommand{\kMin}[2]{\mathsf{Min}_{#2}(#1)}
\newcommand{\perm}[1]{\mathrm{Per}_{#1}}
\newcommand{\match}[1]{\mathrm{Match}_{#1}}
\newcommand{\cov}[1]{\mathrm{Cov}_{#1}}
\newcommand{\lines}[1]{\mathrm{Lines}_{#1}}
\newcommand{\skal}[1]{\langle #1\rangle}
\newcommand{\Item}[1]{\item[\mbox{\rm (#1)}]} 
\newcommand{\vienas}{\mathbb{1}}%
\newcommand{\cf}[1]{{#1}^{o}}
\newcommand{\up}[1]{#1^{\triangledown}}
\DeclareRobustCommand{\bigO}{%
  \text{\usefont{OMS}{cmsy}{m}{n}O}}
\newcommand{\mon}[1]{\mathrm{mon}(#1)}
\newcommand{\mdeg}[1]{\mathrm{dg}(#1)} 
\newcommand{\dual}[1]{{#1}^{\ast}}
\newcommand{\Ln}{{\cal L}} 
\newcommand{\ml}[1]{\mathlarger{#1}}
\newcommand{\T}[1]{T(#1)} 
\newcommand{\Hom}[2]{{#1}[#2]}
\newcommand{\PI}[1]{\mathrm{PI}(#1)}
\def\markov{
\begin{figure}[t]
   \[
   \xygraph {
   !{<0cm,0cm>;<1.5cm,0cm>:<0cm,-1.5cm>::}
    !{(-0.2,0)}*+={s}="s" !{(0,0)}*{\bullet}="a1"
    !{(1,0)}*{\circ}="a2" !{(2,0)}*{\circ}="a3"
    !{(2.5,0)}*{\cdots}="dots" !{(3,0)}*{\circ}="a4"
    !{(4,0)}*{\circ}="a5"
    !{(0,0.5)}*{\circ}="b1" !{(1,0.5)}*{\circ}="b2"
    !{(2,0.5)}*{\circ}="b3" !{(2.5,0.5)}*{\cdots}="dots"
    !{(3,0.5)}*{\circ}="b4" !{(4,0.5)}*{\circ}="b5"
    !{(0,1)}*{\circ}="c1" !{(1,1)}*{\circ}="c2" !{(2,1)}*{\circ}="c3"
    !{(2.5,1)}*{\cdots}="dots" !{(3,1)}*{\circ}="c4"
    !{(4,1)}*{\circ}="c5"
    !{(0,1.3)}*{\vdots}="d1" !{(1,1.3)}*{\vdots}="d2"
    !{(2,1.3)}*{\vdots}="d3" !{(3,1.3)}*{\vdots}="d4"
    !{(4,1.3)}*{\vdots}="d5"
    !{(0,1.7)}*{\circ}="e1" !{(1,1.7)}*{\circ}="e2"
    !{(2,1.7)}*{\circ}="e3" !{(2.5,1.7)}*{\cdots}="dots"
    !{(3,1.7)}*{\circ}="e4" !{(4,1.7)}*{\bullet}="e5"
    !{(4.2,1.7)}*+={t}="t"
    "a1":"a2"^{\ml{x_1}} "a2":"a3"^{\ml{x_2}} "a4":"a5"^{\ml{x_r}}
    "b1":"b2"^{\ml{x_2}} "b2":"b3"^{\ml{x_3}} "b4":"b5"^{\ml{x_{r+1}}}
    "c1":"c2"^{\ml{x_3}} "c2":"c3"^{\ml{x_4}} "c4":"c5"^{\ml{x_{r+2}}}
    "a1":"b1" "a2":"b2" "a3":"b3" "a4":"b4" "a5":"b5"
    "b1":"c1" "b2":"c2" "b3":"c3" "b4":"c4" "b5":"c5"
    "e1":"e2"^{\ml{x_{n-r+1}}} "e2":"e3"^{\ml{x_{n-r+2}}} "e4":"e5"^{\ml{x_n}} }
  \]
  \caption{\footnotesize A monotone read-$1$ branching program
    computing the threshold-$r$ function $\mathrm{Th}^n_r(x)=1$ iff
    $x_1+x_2+\cdots+x_n\geq r$. Unlabeled edges are rectifiers (are
    labeled by constant $1$). This BP is even a \emph{syntactically}
    read-$1$ BP, and has $r(n-r+1)$ switches. On the other hand,
    Markov~\cite{markov} has shown that \emph{every} monotone BP for
    $\mathrm{Th}^n_r$ must have at least this number $r(n-r+1)$ of
    switches.  Thus, at least for $k=1$, monotone read-$k$ BPs
   \emph{can} be optimal among all monotone BPs.}
  \label{fig:markov}
\end{figure}
}
\begin{document}

\begin{frontmatter}

  \title{Notes on Boolean Read-k and Multilinear Circuits}

  \author{Stasys Jukna}
  \ead{stjukna@gmail.com}
  \ead[url]{https://web.vu.lt/mif/s.jukna/}

  \affiliation{organization={Faculty of Mathematics and Computer
      Science, Vilnius University}, city={Vilnius},
    country={Lithuania}}

  \begin{abstract}
    A monotone Boolean $(\lor,\land)$ circuit computing a monotone
    Boolean function $f$ is a read-$k$ circuit if the polynomial
    produced (purely syntactically) by the arithmetic $(+,\times)$
    version of the circuit has the property that for every prime
    implicant of $f$, the polynomial contains at least one monomial
    with the same set of variables, each appearing with degree $\leq
    k$. Every monotone circuit is a read-$k$ circuit for some~$k$.  We
    show that already read-1 $(\lor,\land)$ circuits are not weaker than
   monotone arithmetic constant-free  $(+,\times)$ circuits computing
    multilinear polynomials, are not weaker than non-monotone
    multilinear $(\lor,\land,\neg)$ circuits computing monotone
   Boolean functions, and have the same power as tropical $(\min,+)$ circuits solving $0/1$ minimization
    problems.
     Finally, we show that read-2
    $(\lor,\land)$ circuits can be exponentially smaller than read-1
    $(\lor,\land)$ circuits.
  \end{abstract}

\begin{keyword}
  Arithmetic circuits \sep multilinear circuits \sep tropical circuits
  \sep lower bounds

  \MSC 68Q17 \sep 94C11
\end{keyword}

\end{frontmatter}

\section{Introduction}
\label{sec:intro}

Proving lower bounds on the size of arithmetic $(+,\times,-)$ circuits
as well as on the size of Boolean $(\lor,\land,\neg)$ circuits remains
a notoriously hard problem.  Although the problem has received a great
deal of attention for decades, the best known lower bounds for
arithmetic circuits computing explicit multilinear $n$-variate polynomials
with $0$-$1$ coefficients  remain barely super-linear bounds $\Omega(n\log n)$ proved by Baur and Strassen~\cite{Baur1983} already in 1983. For Boolean
circuits, known lower bounds are even not super-linear.
The first lower bound $2n$ was proved by Schnorr~\cite{schnorr74} in 1974,
and improved to $3n$ by Blum~\cite{blum84} in 1984;
the best known lower bound $4.5n-o(n)$ was proved by Lachish and Raz~\cite{LachishR01} in 2001, and  improved to $5n-o(n)$ by Iwama and Morizumi~\cite{IwamaM02}.
In both circuit models (arithmetic and Boolean), super-polynomial lower bounds are known only for restricted circuits  such as bounded-depth and monotone circuits. The books \cite{BCS97,myBFC-book,wegener} and  recent surveys
\cite{ChenKW11,Shpilka2010} provide wide coverage of  Boolean and arithmetic circuits.

This lack of proofs of strong lower bounds for unrestricted arithmetic and
Boolean circuits happens mainly because such circuits can use
cancellations $x-x=0$ in the arithmetic, and can use cancellations
$x\land\bar{x}=0$ in the Boolean case. Understanding the role of
cancellations in arithmetic and Boolean circuits remains the ultimate
goal of circuit complexity.

Monotone arithmetic $(+,\times)$ circuits cannot use cancellations $x-x=0$, while
monotone Boolean $(\lor,\land)$ circuits cannot use cancellations
$x\land\bar{x}=0$. Still, the task of proving lower bounds even for
monotone \emph{Boolean} circuits turned out to be much more difficult
than that for monotone \emph{arithmetic} circuits.  Although super-polynomial lower bounds on the size
of monotone arithmetic $(+,\times)$ circuits
were known starting with the notable paper by
Schnorr~\cite{schnorr} of 1976, it took a decade until such lower
bounds for monotone Boolean $(\lor,\land)$ circuits were proved by
Razborov~\cite{razb-clique,razb-perm} in 1985; until
then, the best known lower bound on the size of monotone Boolean circuits was only
$4n$ proved by Tiekenheinrich~\cite{Tiekenheinrich84} in 1984.  This happens because Boolean circuits can use
idempotence laws $x\lor x=x$ and $x\land x=x$ as well as the
absorption law $x\lor xy=x$, while arithmetic circuits cannot use any
of these laws.  The current paper attempts to identify a possible
source for this discrepancy: the presence of multiplicative
idempotence and absorption in Boolean circuits.

It turned out that the absence of \emph{additive} idempotence $x\lor
x=x$ in arithmetic circuits (where $x+x\neq x$) is not a crucial
issue: most  lower bounds (albeit not all, \cite{Yehudayoff19} being a nice exception)  on the monotone
arithmetic $(+,\times)$ circuit complexity,
including~\cite{gashkov,GS12,jerrum,juk-SIDMA,Raz2011,
schnorr,shamir1980,tiwari1994,valiant80}
were proved by only using the \emph{structure} of monomials and fully
ignoring actual values of their (nonzero) coefficients.

But the absence of \emph{multiplicative} idempotence $x\land x=x$ and
absorption $x\lor xy=x$ in the arithmetic world turned out to be
crucial even in the case of monotone circuits. The goal of this
article is to show that already a \emph{very restricted} use of
multiplicative idempotence, in combination with
absorption, makes a big difference between Boolean and arithmetic
circuits.

To fine grain the ``degree'' of multiplicative idempotence, and by
analogy with read-$k$ branching programs, we introduce (in
\cref{sec:readk}) the so-called ``read-$k$'' $(\lor,\land)$ circuits.
Our goal is to show that already read-$1$ $(\lor,\land)$ circuits
capture the power of three different types of circuits: monotone
\emph{arithmetic} $(+,\times)$ circuits, Boolean \emph{multilinear} DeMorgan
$(\lor,\land,\neg)$ circuits, as well as \emph{tropical}\footnote{
The adjective ``tropical'' was coined by French mathematicians in honor of Imre Simon who lived in Sao Paulo (south tropic).  Tropical algebra and tropical geometry are now intensively studied topics in
mathematics.} $(\min,+)$
circuits.

The latter model of (tropical) circuits is motivated by
dynamic programming (DP).  Namely, many classical DP algorithms for
minimization problems are ``pure'' in that they only use $(\min,+)$
operations in their recursion equations. Prominent examples of pure DP
algorithms are the Bell\-man--Ford--Moore shortest $s$-$t$ path
algorithm, the Roy--Floyd--Warshall all-pairs shortest paths
algorithm, the Bellman--Held--Karp traveling salesman algorithm, the
Dreyfus--Levin--Wagner Steiner tree algorithm, and many others.  On
the other hand, pure DP algorithms are just \emph{special}
(recursively constructed) tropical $(\min,+)$ circuits.
Thus, any lower bound on the size of $(\min,+)$ circuits is also a lower bound on the minimum number of operations that any pure DP algorithm solving a given minimization problem must perform, be the designer of an algorithm even omnipotent.

First lower
bounds on the size of tropical circuits were proved already decades
ago, including Kerr~\cite{Kerr1970}, Jerrum and Snir~\cite{jerrum}, as well as recently,
including Grigoriev and Koshevoy~\cite{GrigorievK16}, Grigoriev and Podolskii~\cite{GrigorievP20}, Mahajan, Nimbhorkar
and Tawari~\cite{MahajanNT17,MahajanNT19},
Jukna and Seiwert~\cite{juk-SIDMA,JS20a}. In fact, as shown by Jerrum and Snir~\cite{jerrum},
if an arithmetic polynomial $P$ is multilinear
and homogeneous, then the $(\min,+)$ circuit complexity of the corresponding minimization problem is at least the monotone arithmetic $(+,\times)$ circuit complexity of the polynomial $P$. Thus, many other lower bounds for tropical $(\min,+)$ circuits follow from the aforementioned earlier lower bounds on the monotone arithmetic $(+,\times)$ circuit complexity of the corresponding polynomials, including bounds shown many years ago by
Schnorr~\cite{schnorr}, Shamir and Snir~\cite{shamir1980},
Valiant~\cite{valiant80}, Tiwari and Tompa~\cite{tiwari1994} and other authors.

\section{Results}
\label{sec:results}

The model of read-$k$ circuits is quite natural and is by
 analogy with the intensively investigated model of
\emph{read-$k$ branching programs}.  Intuitively, ``read-$k$'' means 
that one cannot ``benefit'' from (multiplicative) idempotence for more than $k$ times.

Let $F$ be a monotone Boolean $(\lor,\land)$ circuit. Throughout the article, we assume that constants $0$ and $1$ are not are not used as inputs in Boolean circuits: such inputs are not necessary when computing non-trivial (non-constant) Boolean functions.
An \emph{arithmetic version} of $F$ is a monotone arithmetic
$(+,\times)$ circuit obtained from $F$ by replacing every OR gate
with an addition gate, and every AND gate with a multiplication gate.
The obtained arithmetic circuit produces (purely syntactically) some polynomial $P_F$, which we call the \emph{formal polynomial} of the Boolean circuit $F$. A \emph{shadow monomial} of a Boolean term $t=\bigwedge_{i\in S}x_i$  is a monomial $p=\prod_{i\in S}x_i^{d_i}$ with the same set of variables as $t$ and all degrees $d_i\geq 1$.
For example, $x^2y^3$ is a shadow monomial of $xy$.
It is easy to show (see \cref{lem:struct-bool}) that
the circuit $F$ computes a monotone Boolean function $f:\{0,1\}^n\to\{0,1\}$ if and only if
the formal polynomial $P_F$ of the circuit $F$ has the following two properties:
\begin{itemize}
\item[(i)] every monomial of $P_F$ contains all variables of at least
  one prime implicant of $f$;

\item[(ii)] every prime implicant of $f$ has at least one shadow
  monomial in $P_F$.
\end{itemize}
Intuitively, (i) reflects the absorption property $x\lor xy=x$ (longer monomials ``do not matter''), while
(ii) reflects the multiplicative idempotence $x\land x=x$ (large degree ``does not matter'').  In
\emph{read}-$k$ circuits we strengthen the property (ii) and require that
every prime implicant of $f$ has at least one shadow monomial in $P_F$
in which each variable appears with degree $\leq k$. There are no
restrictions on the degrees of other monomials of~$P_F$.

\begin{ex}\label{ex:readk}
  The Boolean $(\lor,\land)$ circuit $F=(x\lor
  y)(x\lor z)(y\lor z)$ computes the Boolean function
  $f(x,y,z)=1$ iff $x+y+z\geq 2$ (the majority function of three
  variables). The arithmetic $(+,\times)$ version of $F$ is the
  circuit $F'=(x+y)(x+z)(y+z)$, and the polynomial produced by it is
  $P_F=x^2y + xy^2 + x^2z + y^2z + xz^2 +yz^2 +2xyz$.  The (Boolean)
  circuit $F$ is a read-$2$ but not a read-$1$ circuit, because, for
  example, both shadow monomials $x^2y$ and $xy^2$ of the prime
  implicant $xy$ of $f$ in the polynomial $P_F$ have a variable of degree
  $>1$.  Note,
  however, that the Boolean circuit $H=xy\lor xz\lor
  yz$ (the dual of the circuit $F$) also computes $f$ but already is a read-$1$ circuit: the
  corresponding to this circuit polynomial is $P_H=xy+xz+yz$.  \qed
\end{ex}

Our main results are the following. A \emph{DeMorgan circuits} is a
Boolean  $(\lor,\land,\neg)$ circuit with negations only
applied to input variables. That is, such a circuit is a monotone $(\lor,\land)$ circuit whose inputs are variables $x_1,\ldots,x_n$
and their negations $\bar{x}_1,\ldots,\bar{x}_n$.
A DeMorgan $(\lor,\land,\neg)$
circuit is \emph{multilinear} if the two Boolean functions $g$ and $h$ computed at
the inputs to any AND gate depend on disjoint sets of variables. For
example, the functions $g=x\lor xy$ and $h=\bar{y}\lor x\bar{y}$
depend on disjoint sets of variables: $g$ depends only on $x$, while
$h$ depends only on~$y$. Every family  $\f\subseteq 2^{[n]}$ of subsets of
$[n]:=\{1,2,\ldots,n\}$ defines a multilinear $n$-variate polynomial $P_{\f}(x):=\sum_{S\in \f}\prod_{i\in S} x_i$. A polynomial $Q$ is
 \emph{similar} to the polynomial $P_{\f}$ if it is of the form
$Q(x)=\sum_{S\in \f}\const{S}\prod_{i\in S}x_i$
for
some integer coefficients $\const{S}\geq 1$; in particular, the polynomial $P_{\f}$ is similar to itself (then all $\const{S}=1$). An arithmetic $(+,\times)$ circuit is \emph{constant-free} if it has no constants among the inputs.
We prove the following, where $\f\subseteq 2^{[n]}$ is an arbitrary antichain (no two sets of $\f$ are comparable under the inclusion).

\begin{enumerate}
\item Read-$1$ $(\lor,\land)$ circuits are \emph{not weaker} than
monotone \emph{arithmetic} $(+,\times)$ circuits computing multilinear polynomials in the following sense: if a monotone arithmetic constant-free  $(+,\times)$ circuit computes a  polynomial similar to $P_{\f}$, then
a read-$1$ $(\lor,\land)$ circuit of the same size computes
the Boolean version $f(x)=\bigvee_{S\in\f}\bigwedge_{i\in S}x_i$ of $P_{\f}$. If the polynomial $P_{\f}$ is homogeneous (all sets of $\f$ have the same number of elements), then
the minimum size of a monotone  arithmetic constant-free $(+,\times)$ circuit
computing a polynomial similar to $P_{\f}$ even \emph{coincides} with the
minimum size of a read-$1$ $(\lor,\land)$ circuit computing~$f$ (\cref{thm:envel}).

\item Read-$1$ $(\lor,\land)$ circuits have the \emph{same} power as
  \emph{tropical} $(\min,+)$ circuits in the following sense: the
  minimum size of a $(\min,+)$ circuit computing the tropical polynomial
  $P(x)=\min_{S\in \f}\sum_{i\in S}x_i$ \emph{coincides} with the
  minimum size of a read-$1$ $(\lor,\land)$ circuit computing the Boolean version
  $f(x)=\bigvee_{S\in\f}\bigwedge_{i\in S}x_i$ of $P$
  (\cref{thm:trop-read-once}).

\item Read-$1$ $(\lor,\land)$ circuits are \emph{not weaker} than
  multilinear DeMorgan $(\lor,\land,\neg)$ circuits in the following
  sense: if a multilinear $(\lor,\land,\neg)$ circuit computes a
  Boolean function $f(x)$, then a read-$1$ $(\lor,\land)$ circuit of
  the same size computes the monotone function
  $\up{f}(x):=\bigvee_{z\leq x}f(z)$ (\cref{thm:multilinear}); note
  that $\up{f}=f$ if $f$ is monotone. If the function $f$ is monotone
  and homogeneous (all prime implicants of $f$ have the same number of
  variables), then the minimum size of a multilinear
  $(\lor,\land,\neg)$ circuit computing $f$ even \emph{coincides} with the
  minimum size of a read-$1$ $(\lor,\land)$ circuit computing~$f$
  (\cref{thm:mon-multilin}).

\item Already read-$2$ $(\lor,\land)$ circuits can be exponentially
  smaller than read-$1$ $(\lor,\land)$ circuits and, hence,
  exponentially smaller than tropical $(\min,+)$, monotone arithmetic
  $(+,\times)$, and multilinear $(\lor,\land,\neg)$
  circuits~(\cref{lem:gap}).
\end{enumerate}

\paragraph{Organization} In the preliminary
\cref{sec:preliminaries}, we recall one simple but important
concept---the set of exponent vectors ``produced'' (purely
syntactically) by a circuit over any semiring. Read-$k$ circuits are
introduced in \cref{sec:readk}. The aforementioned relation of
read-$1$ circuits to monotone arithmetic circuits is established in
\cref{sec:arithmetic}. In \cref{sec:explicit} we recall one relatively
simple argument to show large lower bounds for monotone arithmetic
$(+,\times)$ circuits; this is only aimed to demonstrate that the absence
of idempotence and absorption in such circuits is indeed a severe
restriction.  The relation of read-$1$ circuits to tropical circuits
is established in \cref{sec:tropical}, and the relation of read-$1$
circuits to multilinear DeMorgan $(\lor,\land,\neg)$ circuits is
established in \cref{sec:multilin}.  An exponential gap between
read-$1$ and read-$2$ circuits is shown in \cref{sec:gaps}.  In the concluding \cref{sec:conclusions}, several
open problems are formulated. \Cref{app:blocking-lines} contains
a construction of so-called ``blocking lines'' functions as possible candidates to attack these problems.
All proofs are fairly simple.

\section{Preliminaries}
\label{sec:preliminaries}

In this section, we recall the classical model of (combinational) circuits
over arbitrary semirings, and introduce one simple but useful concept: the set of (exponent) vectors ``produced'' (purely syntactically) by a circuit.

Recall that a (commutative) \emph{semiring} $(R,\suma,\daug)$ consists
of a set $R$ closed under two associative and commutative binary
operations ``addition'' $x\suma y$ and ``multiplication'' $x\daug y$,
where ``multiplication'' distributes over ``addition:'' $x\daug(y\suma
z)=(x\daug y)\suma (x\daug z)$.  That is, in a semiring, we can
``add'' and ``multiply,'' but neither ``subtraction'' nor ``division''
are necessarily possible.  We will assume that semirings contain a
multiplicative neutral element $\vienas\in R$ such that $x\daug
\vienas=\vienas\daug x=x$.

A \emph{circuit} $F$ over a semiring $(R,\suma,\daug)$ is a directed
acyclic graph; parallel edges joining the same pair of nodes are
allowed.  Each indegree-zero node (an \emph{input} node) holds either
one of the variables $x_1,\ldots,x_n$ or a semiring element
$\const{}\in R$.
A circuit is \emph{constant-free} if only variables  $x_1,\ldots,x_n$ are used as inputs.  Every other node, a \emph{gate}, has indegree two
and performs one of the two semiring operations $\suma$ or $\daug$ on
the values computed at the two gates entering this gate.  The
\emph{size} of a circuit is the total number of gates in it.  A
circuit $F$ \emph{computes} a function $f:R^n\to R$ if $F(x)=f(x)$
holds for all $x\in R^n$.

In this article, we will consider circuits over the following three
semirings $(R,\suma,\daug)$: the arithmetic semiring
$(\pRR,+,\times)$, where $\pRR$ is the set of nonnegative real
numbers, the tropical semiring $(\pRR,\min,+)$, and the Boolean
semiring $(\{0,1\},\lor,\land)$. That is, we will consider the
following three types of circuits\footnote{An exception is
  \cref{sec:multilin}, where we also consider non-monotone Boolean
  $(\lor,\land,\neg)$ circuits.}:
\begin{itemize}
\item[$\circ$] $x\suma y:=x+y$ and $x\daug y:=xy$ (monotone arithmetic
  circuits);
\item[$\circ$] $x\suma y:=x\lor y$ and $x\daug y:=x\land y$ (monotone
  Boolean circuits);
\item[$\circ$] $x\suma y:=\min\{x,y\}$ and $x\daug y:=x+y$ (tropical
  circuits).
\end{itemize}
Note that, also over the tropical semiring, ``multiplication''
$(\daug)$ distributes over ``addition'' $(\suma)$ because
$x+\min\{y,z\}=\min\{x+y, x+z\}$.  Also, note that the multiplicative
neutral element $\vienas$ is constant $1$ in arithmetic and Boolean
semirings, but is constant $0$ in the tropical semiring (because
$x+0=x$).

\paragraph{Produced polynomials}
Every circuit $F(x_1,\ldots,x_n)$
over a semiring $(R,\suma,\daug)$ not only computes some function
$f:R^n\to R$, but also \emph{produces} (purely syntactically) an
$n$-variate polynomial over this semiring in a natural way.  Namely,
at each source node holding a constant $\const{}\in R$, the
constant polynomial $P=\const{}$ is produced, and at a source node
holding a variable $x_i$, the polynomial $P=x_i$ is produced.
At an ``addition'' $(\suma)$ gate, the ``sum'' $P\suma Q$
of the polynomials $P$ and $Q$ produced at its inputs is
produced. Finally, the polynomial produced at a
``multiplication'' $(\daug)$ gate is obtained from $P\daug Q$
by the distributivity of $\daug$ over $\suma$; that is, we
``multiply'' $(\daug)$ every monomial of $P$ with every monomial of
$Q$, and take the ``sum'' $(\suma)$ of the obtained monomials. No
terms are canceled along the way. The polynomial produced by the
entire circuit $F$ is the polynomial
$P(x)=\sum_{b\in B}\const{b} \prod_{i=1}^n x_i^{b_i}$
produced at the output gate of $F$; here, $B\subseteq
\NN^n$ is the  set of \emph{exponent vectors} of the polynomial $P$,
 and $x_i^{b_i}$ stands for
the $b_i$-times ``multiplication'' $x_i\daug x_i\daug\cdots\daug x_i$,
and $x_i^0=\vienas$ (the multiplicative neutral element). Since by our
assumption, the underlying semiring contains the multiplicative neutral element~$\vienas$, coefficients $\const{b}$ are semiring
elements\footnote{Because then, by distributivity, we have $x\suma
  x=(\vienas\daug x)\suma(\vienas\daug x)=(\vienas\suma\vienas)x$,
  where $\vienas\suma\vienas$ \emph{is} a semiring element. An example, where
  this is not the case is the semiring $(R,+,\times)$ with
  $R\subset\NN$ being the set of all even integers: then the coefficient
  ``$3$'' of $x+x+x=3x$ is not a semiring element.}.

\begin{rem}
Let us stress the difference between
what circuits \emph{compute} (as functions) and what they
actually \emph{produce} (as formal expressions). The point is that, unlike
when computing polynomial \emph{functions},
no terms are canceled when
  \emph{producing} polynomials. Thus, the
  polynomial function $f$ \emph{computed} by a circuit $F$ may be different from the produced polynomial $P$ due to apparent cancellations of some terms of $P$: unlike for the polynomial $P$ (which only depends on the circuit $F$ itself),
   the function $f$ computed by $F$ already depends on the underlying semiring. For example, over the arithmetic semiring $(\RR,+,\times)$,
   the polynomial produced by the circuit $F=(x+y)(x-y)$
  is $P=x^2+c_1xy+c_2y^2$ with the set $B=\{(2,0), (1,1),(0,2)\}$ of exponent vectors, and coefficients $c_1=1-1=0$ and $c_2=-1$, while the polynomial function \emph{computed} by the circuit $F$ is $f=x^2-y^2$ whose set of exponent vectors is
  $A=\{(2,0),(0,2)\}$.
Note, however, that in monotone Boolean $(\lor,\land)$ and tropical $(\min,+)$ circuits we have no
 cancellations like arithmetic $x-x=0$ because there are no analogs of arithmetic subtraction in the corresponding semirings.
In these circuits, we only have cancellations via absorption
 $x\lor xy=x$ or $\min\{x,x+y\}=x$ when going from
 the \emph{produced} polynomials $P$ to the functions $f$ actually \emph{computed} by the circuits.
 In monotone arithmetic $(+,\times)$ circuits we have no cancellations at all: then $f=P$ (as formal expressions, see \cref{fact2} in \cref{sec:arithmetic}).
  \qed
\end{rem}

\paragraph{Produced sets of exponent vectors}
Of interest for us will be not as much the polynomials
$P(x)=\sum_{b\in B}\const{b} \prod_{i=1}^n x_i^{b_i}$ produced
by circuits $F$ themselves but rather the sets
$B=\Exp{F}\subseteq\NN^n$ of exponent vectors of these
polynomials. These sets can be inductively obtained as follows, where
$\vec{0}$ is the all-$0$ vector, and $\vec{e}_i\in\{0,1\}^n$ has
exactly one $1$ in the $i$th position:

\begin{itemize}
\item[$\circ$] if $F=c\in R$ (a constant), then
 $\Exp{F}=\{\vec{0}\}$;

\item[$\circ$] if $F=x_i$ (input variable), then
 $\Exp{F}=\{\vec{e}_i\}$;

\item[$\circ$] if $F=G\suma H$, then $\Exp{F}=\Exp{G}\cup\Exp{H}$
  (set-theoretic union);

\item[$\circ$] if $F=G\daug H$, then
  $\Exp{F}=\Exp{G}+\Exp{H}:=\{a+b\colon a\in \Exp{G}, b\in \Exp{H}\}$
  (Minkowski sum).
\end{itemize}
Since the set of exponent vectors
of a ``product'' $(\daug)$ of two polynomials is the Minkowski sum of
their sets of exponent vectors, and since no cancellations are performed when producing polynomials,
the set $\Exp{F}\subseteq\NN^n$ of vectors produced by a circuit $F$
is the set of
exponent vectors of the polynomial produced by the
circuit~$F$.

It is clear that the same
circuit with ``addition'' $(\suma$) and ``multiplication'' $(\daug)$
gates can compute \emph{different} functions over different semirings.
Say, the circuit $F = (x\daug y)\suma z$ computes $xy+z$ over the
arithmetic $(+,\times)$ semiring, but computes $\min\{x+y,z\}$ over
the tropical $(\min,+)$ semiring, and computes the Boolean function
$xy\lor z$ over the Boolean $(\lor,\land)$ semiring. It is, however,
important to note that:

\begin{itemize}
\item[$\circ$] The set $\Exp{F}\subseteq\NN^n$ of exponent vectors of the polynomial
  \emph{produced} by a circuit $F$ over any semiring is always the
  same---it only depends on the circuit $F$ itself, not on the underlying
  semiring.
\end{itemize}
This simple observation turned out to be useful when comparing the powers
of circuits over \emph{different} semirings.

\begin{notation}
  We will use standard terminology and notation regarding Boolean
  functions (see, for example, the books by Wegener~\cite{wegener} or
  Crama and Hammer~\cite{hammer}). In particular, for two Boolean
  functions $f,g:\{0,1\}^n\to\{0,1\}$, we write $g\leq f$ iff
  $g(a)\leq f(a)$ holds for all $a\in\{0,1\}^n$.  A Boolean
  \emph{term} is an AND
  of a nonempty set of \emph{literals}, each being a variable $x_i$ or
  its negation~$\bar{x}_i$.  A  term is a \emph{zero term} if it
  contains a variable and its negation.  An \emph{implicant} of a
  Boolean function $f(x_1,\ldots,x_n)$ is a nonzero term $t$ such that
  $t\leq f$ holds, that is, $t(a)=1$ implies $f(a)=1$.  An implicant
  $t$ of $f$ is a \emph{prime implicant} of $f$ if no proper subterm
  $t'$ of $t$ has this property, that is, if $t\leq t'\leq f$ implies
  $t'=t$.  For example, if $f=xy\lor x\bar{y}z$, then $xy$,
  $x\bar{y}z$ and $xz$ are implicants of $f$, but $x\bar{y}z$ is not a
  prime implicant (since $x\bar{y}z\leq xz\leq f$). A Boolean function
  $f$ is \emph{monotone} if $a\leq b$ implies $f(a)\leq f(b)$.
  It is
  well known and easy to show (see, for example,
  \cite[Theorem~1.21]{hammer}) that prime implicants of monotone
  Boolean functions do not contain negated variables.
\end{notation}

\section{Read-k Circuits}
\label{sec:readk}

A monotone Boolean circuit is a circuit over the Boolean semiring
$(R,\suma,\daug)$ with $x\suma y:=x\lor y$ and $x\daug y:= x\land y$;
the domain is $R=\{0,1\}$. To avoid trivialities, we will only
consider monotone Boolean circuits computing \emph{non-constant} Boolean functions $f$. In every $(\lor,\land)$ circuit computing such a function $f$, constant
inputs $0$ and $1$ can be easily eliminated without increasing the
circuit size by iteratively applying $1\land x=x$, $0\land x=0$,
$1\lor x=1$ and $0\lor x=x$. Thus:

\begin{itemize}
\item[$\circ$]  We will always (implicitly) assume that monotone Boolean $(\lor,\land)$ circuits are \emph{constant-free}, that is, have no constants $0$ or $1$ as inputs.
\end{itemize}
Let us fix some notation. The \emph{support} of a vector $a\in\NN^n$ is the set
\[
\supp{a}:=\{i\in[n]\colon a_i\neq 0\}
\]
of its nonzero positions.
For a set $A\subseteq\NN^n$ of vectors, let
\[
\Supp{A}:=\{\supp{a}\colon a\in A\}\subseteq 2^{[n]}
\]
denote the family of supports of vectors of~$A$.
The \emph{upward closure} of a set  $A\subseteq\NN^n$ of vectors
is the set
\[
\Up{A}:=\{b\in\NN^n\colon \mbox{$b\geq a$ for some $a\in A$}\}
\]
of vectors  containing at least one vector of~$A$;
a vector $b$ \emph{contains} a vector $a$  if
$b_i\geq a_i$ holds for all positions $i\in\{1,\ldots,n\}$.

Now let $f:\{0,1\}^n\to\{0,1\}$ be a monotone Boolean function
 A \emph{lowest one} of $f$ is a vector $a\in
\{0,1\}^n$ such that $f(a)=1$ but $f(b)=0$ for every vector $b\leq a$,
$b\neq a$.  We will denote the set of all lowest ones of
$f$ by $\Low{f}$.  Note that the set $\Low{f}$ is always an
\emph{antichain}: $a,b\in \Low{f}$ and $b\leq a$ imply $a=b$.
Since the function $f$ is monotone, for every input vector $x\in\{0,1\}^n$, we have
\[
\mbox{$f(x)=1$ iff $x\geq a$ for some $a\in\Low{f}$ iff $x\in \Up{(\Low{f})}$.}
\]
It is
also easy to see that $a\in f^{-1}(1)$ iff the term
$t_a=\bigwedge_{i\in\supp{a}}x_i$ is an implicant of~$f$.  Thus, $a\in\Low{f}$ iff the term $t_a$ is a \emph{prime}
implicant of~$f$. It is, therefore, useful to keep in mind that if we
view implicants of $f$ as sets of their variables, then
\[
\Low{f} = \mbox{ set of characteristic $0$-$1$ vectors of prime
  implicants of $f$.}
\]
For example, if $f=xz\lor y$, then $\Low{f}=\{(1,0,1), (0,1,0)\}$.
Let us stress that the only reason why  we use such a ``vector-representation'' of prime
implicants is to \emph{unify} (and simplify) the forthcoming comparisons of the powers of \emph{different} types of circuits: Boolean, arithmetic and tropical.

Our starting point is the following simple structural property of sets
of exponent vectors produced by monotone Boolean circuits.

\begin{lem}[Folklore]\label{lem:struct-bool}
  Let $f:\{0,1\}^n\to\{0,1\}$ be a monotone Boolean function,
   $F$ be a monotone Boolean $(\lor,\land)$ circuit, and
  $B_F\subseteq\NN^n$ be the set of exponent vectors produced
  by $F$. Then the following two assertions are equivalent.
  \begin{itemize}
    \Item{i} The circuit $F$ computes $f$.

    \Item{ii} Inclusions $\Supp{\Low{f}}\subseteq\Supp{B_F}$ and
    $B_F\subseteq\Up{(\Low{f})}$ hold.
  \end{itemize}
\end{lem}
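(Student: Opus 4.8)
The plan is to prove both implications by induction on the structure of the circuit $F$, but first reducing the statement to a cleaner claim about the produced polynomial $P_F$ and prime implicants, matching properties (i) and (ii) from the introduction. Recall that the Boolean function $F$ computes is, by the semiring rules, exactly the Boolean function whose set of satisfying assignments is $\Up{(\Supp{B_F})}$ viewed inside $\{0,1\}^n$ — more precisely, for $a\in\{0,1\}^n$ we have $F(a)=1$ iff there is some exponent vector $b\in B_F$ with $\supp{b}\subseteq\supp{a}$. This is the key translation: in the Boolean semiring, substituting a $0/1$ vector $a$ makes a monomial $\prod x_i^{b_i}$ evaluate to $1$ precisely when $a_i\ge 1$ for all $i\in\supp{b}$, i.e. $\supp{b}\subseteq\supp{a}$, and an OR of monomials is $1$ iff at least one of them is. I would establish this translation by a one-line structural induction on $F$ (using the inductive description of $\Exp{F}$ given in the Preliminaries: union at $\suma$-gates, Minkowski sum at $\daug$-gates), or simply cite it as the standard correspondence between a monotone circuit and its formal polynomial.

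Granting that translation, the proof becomes purely about sets. Write $g$ for the monotone function with $g(a)=1\iff \exists\,b\in B_F:\supp{b}\subseteq\supp{a}$. We must show $g=f$ iff $\Supp{\Low{f}}\subseteq\Supp{B_F}$ and $B_F\subseteq\Up{(\Low{f})}$. For the direction (i)$\Rightarrow$(ii): assume $g=f$. To get $B_F\subseteq\Up{(\Low{f})}$, take any $b\in B_F$ and let $a$ be its $0/1$ "support vector" (the characteristic vector of $\supp{b}$); then $\supp{b}\subseteq\supp{a}$, so $g(a)=1$, so $f(a)=1$, hence by monotonicity $a\ge a'$ for some lowest one $a'\in\Low{f}$, and then $b\ge a\ge a'$, giving $b\in\Up{(\Low{f})}$. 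Wait — one must be careful: $b\ge a$ need not hold coordinatewise since $b$ may have entries $\ge 2$ and $a$ only has $0/1$ entries; but precisely because $a$ is the $0/1$ support vector, $b_i\ge 1=a_i$ on $\supp{b}$ and $b_i=0=a_i$ off it, so indeed $b\ge a\ge a'$. For $\Supp{\Low{f}}\subseteq\Supp{B_F}$: take a prime implicant of $f$, i.e. $a'\in\Low{f}$; then $f(a')=1$, so $g(a')=1$, so some $b\in B_F$ has $\supp{b}\subseteq\supp{a'}$; but then the $0/1$ support vector of $b$ is $\le a'$ and still makes $g$ (hence $f$) equal $1$, so by minimality of $a'$ we get $\supp{b}=\supp{a'}$, i.e. $\supp{a'}\in\Supp{B_F}$.

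For the converse (ii)$\Rightarrow$(i): assume both inclusions; show $g=f$. If $f(a)=1$, pick $a'\in\Low{f}$ with $a'\le a$; by the first inclusion $\supp{a'}=\supp{b}$ for some $b\in B_F$, and then $\supp{b}=\supp{a'}\subseteq\supp{a}$, so $g(a)=1$. Conversely if $g(a)=1$, pick $b\in B_F$ with $\supp{b}\subseteq\supp{a}$; by the second inclusion $b\ge a'$ for some $a'\in\Low{f}$, whence $\supp{a'}\subseteq\supp{b}\subseteq\supp{a}$, so $a\ge$ the support vector of $a'$, which is $\ge a'$ is not quite it — rather $\supp{a'}\subseteq\supp{a}$ means $a_i\ge 1\ge a'_i$ for $i\in\supp{a'}$, hence $a\ge a'$ coordinatewise, and by monotonicity of $f$ together with $f(a')=1$ we get $f(a)=1$. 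So $g=f$.

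The only genuinely delicate point — and the step I would flag as the main obstacle to state cleanly — is the passage between exponent vectors $b\in\NN^n$ with arbitrary large entries and the $0/1$ world: one must consistently work with supports rather than with the vectors themselves, and verify that replacing $b$ by the characteristic vector of $\supp{b}$ never changes whether a given $0/1$ input satisfies the corresponding monomial. Once that bookkeeping is set up, everything else is a routine unwinding of definitions, so I would keep the structural-induction part to a single sentence and spend the bulk of the argument on the set-theoretic equivalence above.
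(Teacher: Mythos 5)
Your proof is correct and follows essentially the same route as the paper: both rest on the translation that $F(a)=1$ iff $\supp{b}\subseteq\supp{a}$ for some $b\in B_F$, and then reduce the equivalence to set-theoretic bookkeeping about supports. The only difference is cosmetic: in (i)$\Rightarrow$(ii) you establish $\Supp{\Low{f}}\subseteq\Supp{B_F}$ directly from the lowest-one minimality of $a'$, whereas the paper argues by contradiction using the antichain property of $\Low{f}$; both are sound.
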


\begin{proof}
  Let $A:=\Low{f}$ and $B:=B_F$.  Our Boolean function $f$ is of the form
  $f(x)=\bigvee_{a\in A}\bigwedge_{i\in\supp{a}}x_i$, and the
  Boolean function computed by the circuit $F$ is of the form
  $F(x)=\bigvee_{b\in B}~\bigwedge_{i\in \supp{b}}x_i$.  The
  (ii)~$\Rightarrow$~(i) implication directly follows from a simple
  observation: for every input $x\in\{0,1\}^n$, we have $f(x)=1$ iff
  $\supp{x}\supseteq \supp{a}$ holds for some $a\in A$. Hence, $B\subseteq
  \Up{A}$ yields $F(x)\leq f(x)$, while $\Supp{A}\subseteq\Supp{B}$
  yields $f(x)\leq F(x)$.

  To show the (i)~$\Rightarrow$~(ii) implication, assume that the
  circuit $F$ computes $f$. If $b\not\in\Up{A}$ held for some vector
  $b\in B$, then we would have $\supp{a}\setminus\supp{b}\neq\emptyset$ for all $a\in A$.
  But then, on the input
   $x\in\{0,1\}^n$ with $x_i=1$ iff
  $i\in\supp{b}$, we would have
   $f(x)=0$ while $F(x)=1$, a contradiction. To show the
  inclusion $\Supp{A}\subseteq\Supp{B}$, suppose for the sake of contradiction
  that there is a vector $a\in A$ such that $\supp{b}\neq\supp{a}$
  holds for all vectors $b\in B$.
   Since $B\subseteq \Up{A}$, $a'\leq b$ holds for some vector $a'\in A$.
   So, the proper inclusion  $\supp{b}\subset\supp{a}$ cannot hold, for otherwise, we would have $\supp{a'}\subseteq\supp{b}\subset\supp{a}$ and, hence, also $a'\leq a$  and $a'\neq a$, a contradiction with the set $A$
   being an antichain.
    So, we have
  $\supp{b}\setminus\supp{a}\neq\emptyset$ for all vectors $b\in
  B$. But then $F(a)=0$ while $f(a)=1$, a contradiction.
\end{proof}

A \emph{shadow} of a vector $x\in\RR^n$ is a vector $y\in\RR^n$ with
$\supp{x}=\supp{y}$ (with the same set of nonzero positions as
$x$). The inclusion $\Supp{\Low{f}}\subseteq\Supp{B_F}$ in
\cref{lem:struct-bool} means that every lowest one of
$f$ has at least one shadow in $B_F$.  In general, these
shadows may have large entries, even exponential in the circuit size. In
read-$k$ circuits, we restrict the magnitude of entries of shadows.
Say that a vector $b\in\NN^n$ is $k$-\emph{bounded} if $b_i\leq k$
holds for all $i\in\supp{b}$.

\begin{dfn}[Read-$k$ circuits]\label{def:readk}
  \rm Let $F$ be a monotone $(\lor,\land)$ circuit,
and $B_F\subseteq\NN^n$ be the set of exponent vectors produced by
$F$. The circuit $F$ is a \emph{syntactically read-$k$ circuit} if
 all vectors of $B_F$ are $k$-bounded, and is
a (semantically) \emph{read-$k$ circuit} if every lowest one
of the Boolean function $f:\{0,1\}^n\to\{0,1\}$ computed by $F$ has at least one $k$-bounded shadow in~$B_F$. In particular, $F$ is a read-$1$ circuit
  iff the inclusions $\Low{f}\subseteq B_F\subseteq \Up{(\Low{f})}$
  hold.
\end{dfn}
Thus, $F$ is a (semantically) read-$k$ circuit if
$\Supp{\Low{f}}\subseteq\Supp{B_F\cap\{0,1,\ldots,k\}^n}$ holds, and
is a syntactically read-$k$ circuit if also $B_F\subseteq
 \{0,1,\ldots,k\}^n$ holds.
It is clear that every syntactically read-$k$ circuit is also a (semantically) read-$k$ circuit, but not the vice versa.
Intuitively, a monotone circuit $F$ computing a
monotone Boolean function $f$ is a read-$k$ circuit\footnote{The term
  ``read-$k$ circuit'' is by analogy with the well-known term
  ``read-$k$ times branching program;'' see the discussion at the end
  of \cref{sec:conclusions} (after \cref{probl:dual}).} if it can distinguish each vector
$a\in\Low{f}\subseteq f^{-1}(1)$ from all vectors in $f^{-1}(0)$ by
``reading/accessing'' each $1$-entry of the vector $a$ at most $k$ times.

For a monotone Boolean function $f$, let
\begin{align*}
  \Bool{f}{k} := &\mbox{ min size of a monotone read-$k$
    $(\lor,\land)$ circuit computing $f$.}
\end{align*}

\begin{rem}
  Note that already read-$1$ circuits are ``universal:'' every
  monotone Boolean function $f$ \emph{can} be computed by a read-$1$
  circuit, and even, by a \emph{syntactically} read-$1$ circuit, for example, as an OR of all prime implicants of $f$. But
  read-$k$ circuits for small $k$ can be very inefficient: we will
  show in \cref{sec:gaps} that already the gap
  $\Bool{f}{1}/\Bool{f}{2}$ can be exponential.  \qed\end{rem}

\subsection{Viewing Boolean Circuits as Arithmetic Circuits}

As we already mentioned in \cref{sec:results}, an equivalent and,
apparently, more intuitive definition of read-$k$ $(\lor,\land)$
circuits comes by looking at their arithmetic versions. An arithmetic
circuit is \emph{constat-free} if it has no constants as inputs.  The
\emph{arithmetic $(+,\times)$ version} of a monotone Boolean
$(\lor,\land)$ circuit $F$ is the constant-free $(+,\times)$ circuit
obtained by replacing $\lor$-gates with $+$-gates, and $\land$-gates
with $\times$-gates. That is, we replace the ``addition'' gates by
``addition'' gates, and ``multiplication'' gates by ``multiplication''
gates of the corresponding semirings.
A \emph{shadow monomial} of a Boolean term $\bigwedge_{i\in S}x_i$  is a monomial $\prod_{i\in S}x_i^{d_i}$ with $d_i\geq 1$ for all $i\in S$.

The \emph{formal polynomial} of
a monotone Boolean $(\lor,\land)$ circuit $F$ is the polynomial
$P_F(x)=\sum_{b\in B_F}\const{b}\prod_{i=1}^nx_i^{b_i}$ produced by the
arithmetic $(+,\times)$ version of $F$.  \Cref{lem:struct-bool} implies that the circuit
$F$ computes a (monotone) Boolean function
$f:\{0,1\}^n\to\{0,1\}$ iff the formal polynomial $P_F$ of $F$ has the
following two properties.
\begin{itemize}
\item[(i)] Absorption $x\lor xy=x$: for every monomial
  $\prod_{i=1}^nx_i^{b_i}$ of $P_F$, the Boolean term $\bigwedge_{i\in
    \supp{b}}x_i$ is an implicant of $f$; this is the property
  $B_F\subseteq\Up{(\Low{f})}$ in \cref{lem:struct-bool}.

\item[(ii)] Idempotence $x\land x=x$: every prime implicant of $f$ has at least one shadow
  monomial in $P_F$; this is the property $\Supp{\Low{f}}\subseteq\Supp{B_F}$
  in \cref{lem:struct-bool}.
\end{itemize}
In read-$k$ circuits, the degree of variables in shadow monomials
guaranteed by (ii) is restricted The \emph{individual degree} of a monomial is the maximum degree of its variable.

\begin{dfn}[Arithmetic equivalent of \cref{def:readk}]
  \rm A monotone $(\lor,\land)$ circuit $F$ is a \emph{read-$k$}
  circuit if every prime implicant of $f$ has at least one shadow
  monomial in $P_F$ of individual degree  $\leq k$. If
  the individual degree of \emph{every} monomial of $P_F$ is $\leq k$, then
  $F$ is a \emph{syntactically read-$k$} circuit.
\end{dfn}

\begin{rem}
If $G$ and $H$ are monotone $(\lor,\land)$ circuits, then the formal polynomial of the circuit $F=G\lor H$ is $P_{F}=P_{G}+P_{H}$, and
the formal polynomial of the circuit $F=G\land H$ is $P_{F}=P_{G}\cdot P_{H}$. If $g,h:\{0,1\}^n\to\{0,1\}$ are the (monotone) Boolean functions
computed by the circuits $G$ and $H$, then
every prime implicant of $g\lor h$ is a prime implicant of $g$ or of $h$, and every prime implicant of $g\land h$ is the AND of some  prime implicant of $g$ and some  prime implicant of $h$.
Thus, the OR of two read-$k$  circuits is again a read-$k$ circuit, while the AND of such circuits is a read-$r$ circuit for some $k\leq r\leq 2k$.
In particular,
the ``read parameter'' $k$ does not increase at OR gates: it can only increase at AND gates.
\qed
\end{rem}

\subsection{Reducing the total degree}
\label{sec:strassen}

If $F$ is a \emph{syntactically} read-$k$ $(\lor,\land)$ circuit computing a monotone Boolean function $f:\{0,1\}^n\to\{0,1\}$, then all monomials
of the formal polynomial $P_F$ of $F$
have degree\footnote{As customary, the
  \emph{degree} of a monomial $\prod_{i=1}^n x_i^{b_i}$ is the
  sum $b_1+\cdots+b_n$ of the degrees of its variables. Note that, if a monomial has individual degree $\leq k$, then its degree is $\leq kn$. The \emph{degree} $\deg{P}$ of a polynomial is the maximum
  degree of its monomial.} $\leq kn$. However, if $F$ is a (not necessarily syntactically) read-$k$ circuit, then only shadow monomials in the formal
polynomial $P_F$ must  have degree $\leq kn$: in this case, the polynomial $P_F$ may also have other ``redundant'' monomials of very
large degree, up to $2^rn$ where $r$ is the maximum number of AND
gates along an input-output path in the circuit $F$; about such monomials of $P_F$ we only know property (i) above.

 Still, using a simple observation,
usually attributed to Strassen~\cite{strassen73} (see, for example,
\cite[Theorem~2.2]{Shpilka2010}), one can show that the degree of \emph{all}
monomials in $P_F$ can be decreased till~$kn$.  For a polynomial $P$ of
degree $d$, let $\Hom{P}{i}$ be the sub-polynomial of $P$ consisting of
monomials of degree exactly $i$; hence, $P=\Hom{P}{0}+\Hom{P}{1} +\cdots+\Hom{P}{d}$.

\begin{hlem}[Strassen~\cite{strassen73}]
  If a polynomial $P$ of degree $d$ can be produced by an arithmetic circuit of size
  $s$, then for every $r\leq d$ all homogeneous parts
  $\Hom{P}{0},\Hom{P}{1},\ldots,\Hom{P}{r}$ of $P$ can be simultaneously produced by an
  arithmetic circuit of size $\bigO(sr^2)$.
\end{hlem}

\begin{proof}[Proof sketch]
  The idea is very simple. Take an arbitrary $i\in\{0,1,\ldots,r\}$.  If $P=Q+R$, then $\Hom{P}{i}=\Hom{Q}{i}+\Hom{R}{i}$, and if
  $P=Q\cdot R$, then $\Hom{P}{i}=\sum_{j=0}^i \Hom{Q}{j}\cdot
  \Hom{R}{i-j}$. So, we can take $r+1$
  copies $v_0,v_1,\ldots,v_r$ of each gate $v$ and connect them
  accordingly so that at $v_i$ the homogeneous sub-polynomial of total
  degree $i$ of the polynomial produced at the gate $v$ is produced.
  Every addition  $(+)$ gate is replaced by $r+1$ addition gates, and each multiplication $(\times)$ gate is replaced by $\sum_{i=0}^r(2i+1)=(r+1)^2$ gates. Thus, the obtained circuit has at most $s(r+1)^2$ gates.
\end{proof}

The following easy consequence of this lemma shows that, at
the cost of a relatively small increase in circuit size, we can assume
that formal polynomials of read-$k$ circuits have degree $\leq kn$.

\begin{lem}\label{lem:strassen-trop}
  Let $f(x_1,\ldots,x_n)$ be a monotone Boolean function, and $m$ be the
  maximal number of variables in a prime
  implicant of $f$. If $f$ can be computed by a read-$k$ circuit $F$ of size $s$, then $f$ can also be computed by a read-$k$ circuit $H$ of size at most $s$ times $\bigO(k^2m^2)$ whose formal polynomial $P_H$ has degree $\deg{P_H}\leq km$.
\end{lem}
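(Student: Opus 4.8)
\textit{Plan.} The plan is to apply the Homogenization Lemma to the arithmetic $(+,\times)$ version of $F$, discard all monomials of total degree exceeding $km$, and translate the resulting arithmetic circuit back into a monotone Boolean circuit $H$. The point is that the only monomials of the formal polynomial $P_F$ that we must preserve are the $k$-bounded shadow monomials of the prime implicants of $f$, and each of these has total degree at most $km$; hence none of them is lost by truncating $P_F$ to degree $km$.

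First I would record the degree bound that makes the argument work. Let $B_F\subseteq\NN^n$ be the set of exponent vectors produced by $F$. For a lowest one $a\in\Low{f}$, the corresponding prime implicant has $|\supp{a}|\le m$ variables, so any $k$-bounded shadow $b\in B_F$ of $a$ (one exists for every $a\in\Low{f}$ since $F$ is read-$k$) satisfies $b_1+\cdots+b_n=\sum_{i\in\supp{b}}b_i\le k\,|\supp{a}|\le km$. Next, apply the Homogenization Lemma to the arithmetic version $F'$ of $F$: this is a constant-free monotone $(+,\times)$ circuit of size $s$ producing $P_F$, of some degree $d$. With $r:=\min\{km,d\}$ the lemma yields an arithmetic circuit of size $\bigO(sr^2)=\bigO(sk^2m^2)$ that simultaneously produces the homogeneous parts $\Hom{P_F}{0},\ldots,\Hom{P_F}{r}$; adding $r\le km$ further $(+)$-gates to sum these outputs gives an arithmetic circuit of total size still $\bigO(sk^2m^2)$ producing the polynomial $Q$ obtained from $P_F$ by deleting all monomials of degree $>km$. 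In particular $\deg{Q}\le km$, and the set of exponent vectors of $Q$ is $B_Q=\{b\in B_F:\ b_1+\cdots+b_n\le km\}$. Since the Strassen construction uses only $(+)$- and $(\times)$-gates and introduces no constants, this circuit is again constant-free and monotone, hence it is the arithmetic version of the monotone Boolean $(\lor,\land)$ circuit $H$ obtained by replacing $(+)$ by $\lor$ and $(\times)$ by $\land$. Because the produced set of exponent vectors depends only on the underlying graph and not on the semiring, the set $B_H$ of exponent vectors produced by $H$ equals $B_Q$, and therefore the formal polynomial $P_H$ of $H$ has degree $\deg{P_H}\le km$.

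It then remains to check, via \cref{lem:struct-bool}, that $H$ computes $f$ and is a read-$k$ circuit. For the inclusion $B_H\subseteq\Up{(\Low{f})}$: we have $B_H=B_Q\subseteq B_F$, and $B_F\subseteq\Up{(\Low{f})}$ because $F$ computes $f$. For the shadow condition, take any $a\in\Low{f}$ and a $k$-bounded shadow $b\in B_F$ of $a$; by the first step $b_1+\cdots+b_n\le km$, so $b\in B_Q=B_H$, i.e.\ $a$ still has a $k$-bounded shadow in $B_H$. Thus $\Supp{\Low{f}}\subseteq\Supp{B_H\cap\{0,1,\ldots,k\}^n}$; in particular $\Supp{\Low{f}}\subseteq\Supp{B_H}$, so \cref{lem:struct-bool} gives that $H$ computes $f$, and $H$ is then a read-$k$ circuit by \cref{def:readk}. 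Its size is $\bigO(sk^2m^2)$, which is $s$ times $\bigO(k^2m^2)$, as claimed.

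I do not expect a genuine obstacle here: the proof is just the Homogenization Lemma combined with the elementary observation that $k$-bounded shadows of prime implicants have small total degree. The only two points requiring a little care are that the Strassen construction keeps the circuit constant-free and monotone, so that it has a legitimate Boolean counterpart, and that truncating $P_F$ to degree $km$ destroys no $k$-bounded shadow of a prime implicant — which is precisely where the hypothesis that prime implicants of $f$ have at most $m$ variables enters.
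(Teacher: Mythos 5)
Your proposal is correct and takes essentially the same route as the paper: truncate the formal polynomial $P_F$ to its degree-$\leq km$ part via Strassen's Homogenization Lemma, then translate the resulting monotone arithmetic circuit back into a Boolean $(\lor,\land)$ circuit and verify via \cref{lem:struct-bool} that it still computes $f$ and retains the read-$k$ property because every $k$-bounded shadow of a prime implicant has total degree at most $km$. The only cosmetic difference is that you make explicit the summation of the homogeneous parts and the choice $r=\min\{km,d\}$, which the paper leaves implicit.
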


\begin{proof}
Let $A:=\Low{f}$ be the set of the lowest ones of $f$, that is, of
  characteristic $0$-$1$ vectors of prime implicants of~$f$.
  Since, by our assumption, no prime implicant of $f$ has more than $m$ variables, we have $|\supp{a}|\leq m$ for all $a\in A$.
  Suppose that the function $f$ can be computed by a read-$k$ $(\lor,\land)$ circuit $F$ of size $s$, and let $P(x)=\sum_{b\in B}\const{b}\prod_{i=1}^nx_i^{b_i}$
  be the polynomial produced by the arithmetic $(+,\times)$ version of the   circuit $F$; hence, $P=P_F$ is the formal polynomial of the circuit $F$.
  Since the circuit $F$ computes the function $f$,
  \cref{lem:struct-bool} gives us the inclusions
  $\Supp{A}\subseteq\Supp{B}$ and $B\subseteq\Up{A}$.

  Consider
  the sub-polynomial $P'(x):=\sum_{b\in
    B'}\const{b}\prod_{i=1}^nx_i^{b_i}$ of $P$ whose set of exponent vectors is $B'=\{b\in
  B\colon b_1+\cdots+b_n\leq km\}$.  That is, the polynomial
  $P'=\Hom{P}{0}+\Hom{P}{1} +\cdots+\Hom{P}{km}$
  consists of all terms of $P$ of degree $\leq km$; hence,  $\deg{P'}\leq km$.  Since
  the circuit $F$ is a read-$k$ circuit, for every lowest one
  $a\in\Low{f}$ there is a monomial $\prod_{i=1}^nx_i^{b_i}$ in
  $P$ (a shadow of $a$) with $\supp{b}=\supp{a}$ and $b_i\leq k$ for
  all $i\in\supp{b}$. Since then $b_1+\cdots+b_n\leq k|\supp{a}|\leq
  km$, the polynomial $P'$ contains shadows of \emph{all} prime
  implicants of $f$. This is a crucial property that gives us the inclusion $\Supp{A}\subseteq\Supp{B'}$.

  Since the polynomial $P$ can be produced by a $(+,\times)$ circuit
  of size $s$, Strassen's homogenization lemma implies that the sub-polynomial  $P'$
  of $P$ can be produced by an arithmetic $(+,\times)$ circuit $F'$ of
  size $\bigO(sk^2m^2)$.
  Let $H$ be the Boolean  $(\lor,\land)$ version
  of the circuit $F'$ obtained by replacing $+$ gates by  $\lor$-gates, and $\times$ gates by $\land$-gates.
That is, we replace the ``addition'' gates by
``addition'' gates, and ``multiplication'' gates by ``multiplication''
gates of the corresponding semirings.
The polynomial $P'$ is the formal polynomial $P_H$ of the circuit $H$.
In particular, $\deg{P_H}=\deg{P'}\leq km$.
The set of  ``exponent'' vectors produced by the circuit $H$ is the set $B'$ of exponent  vectors produced by the circuit $F'$.
Hence, the Boolean function computed by $H$ is
the Boolean version $h(x)=\bigvee_{b\in B'}\bigwedge_{i\in
\supp{b}}x_i$ of the polynomial $P'(x)=\sum_{b\in
    B'}\const{b}\prod_{i=1}^nx_i^{b_i}$ produced by $F'$.
Since $\Supp{A}\subseteq\Supp{B'}$ and $B'\subseteq
B\subseteq\Up{A}$, \cref{lem:struct-bool} implies that the
\emph{function} $h$ is the same as our function $f$.
Since the $(\lor,\land)$ circuit $F$ was a read-$k$ circuit, the circuit
$H$ is also a read-$k$ circuit.
\end{proof}

\section{From Monotone Arithmetic to Boolean Read-1}
\label{sec:arithmetic}

A monotone arithmetic $(+,\times)$ circuit is a circuit over the
arithmetic semiring $(R,\suma,\daug)$ with $x\suma y:=x+y$ (arithmetic
addition) and $x\daug y:= x\times y$ (arithmetic multiplication); the
domain is the set $R=\pRR$ of all \emph{nonnegative} real numbers;
hence, the adjective ``monotone:'' since there are no negative
constant inputs (like $-1$), there are no cancellations $x-x=0$.  The
main difference of monotone arithmetic $(+,\times)$ circuits from
Boolean and tropical circuits is that they ``produce what they compute.''

This can be easily shown using the following extension to multivariate
polynomials of a basic fact that no univariate polynomial of degree
$d$ can have more than $d$ roots (see, for example, Alon and
Tarsi~\cite[Lemma~2.1]{tarsi}): If $P$ is a nonzero $n$-variate
polynomial in which every variable has degree $\leq d$, and if
$S\subseteq\RR$ is a set of $|S|\geq d+1$ numbers, then $P(x)\neq 0$
holds for at least one point $x\in S^n$.  This is proved
in~\cite{tarsi} by an easy induction on the number $n$ of variables,
an gives the following property of monotone arithmetic circuits.

\begin{fact}\label{fact2}
  If a monotone arithmetic circuit \emph{computes} a given polynomial,
  then the circuit also \emph{produces} that polynomial.
\end{fact}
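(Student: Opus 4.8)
The plan is to reduce the statement to the multivariate polynomial identity lemma of Alon and Tarsi~\cite{tarsi} quoted just above. First I would pin down the meaning of ``compute'' versus ``produce'' in this setting. The circuit $F$ produces the formal polynomial $P_F(x)=\sum_{b\in B_F}\const{b}\prod_{i=1}^n x_i^{b_i}$, and since a monotone arithmetic circuit never cancels terms, every coefficient $\const{b}$ is a positive integer. The key preliminary observation is that, over the arithmetic semiring, the circuit \emph{evaluates} to its produced polynomial: for every input $x\in\pRR^n$ one has $F(x)=P_F(x)$. This is a routine induction on the structure of $F$ --- trivial at sources, immediate at addition gates, and at a multiplication gate it is precisely the statement that $\times$ distributes over $+$ in $\RR$, which of course holds. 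Hence the function computed by $F$ is literally the polynomial function $P_F$.

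Now suppose $F$ computes a polynomial $Q$, i.e.\ $P_F(x)=Q(x)$ for all $x\in\pRR^n$. I want to conclude $P_F=Q$ as \emph{formal} polynomials, which is exactly the assertion ``$F$ produces $Q$.'' Consider the difference $R:=P_F-Q$ as a polynomial over $\RR$, and let $d$ be the maximum individual degree occurring in $R$ (the larger of the individual degrees of $P_F$ and of $Q$). Take $S:=\{0,1,\ldots,d\}\subseteq\pRR$, a set of $|S|=d+1$ nonnegative numbers. If $R$ were a nonzero formal polynomial, the Alon--Tarsi lemma would supply a point $x\in S^n\subseteq\pRR^n$ with $R(x)\neq 0$, that is $P_F(x)\neq Q(x)$ --- contradicting that $F$ computes $Q$. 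Therefore $R\equiv 0$, i.e.\ $P_F=Q$, as desired.

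The only place that needs a word of care is the choice of the test set: since $F$ is promised to compute $Q$ only over the nonnegative reals, $S$ must lie in $\pRR$, and $\{0,1,\ldots,d\}$ serves. I do not anticipate a real obstacle here; the content of the Fact is just the one-line phenomenon that a monotone arithmetic circuit cannot hide any monomial of $P_F$, because every monomial of $P_F$ carries a strictly positive coefficient and nothing is ever subtracted, so distinct formal polynomials already disagree as functions on a small nonnegative grid. (One could equally avoid forming the difference $R$ and instead argue by comparing coefficients of $P_F$ and $Q$ directly, but subtracting $Q$ lets us quote the Alon--Tarsi lemma verbatim.)
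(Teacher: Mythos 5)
Your proposal is correct and follows essentially the same route as the paper's own proof: both reduce the claim to the Alon--Tarsi lemma applied to the difference of the produced and computed polynomials, choosing a finite grid of nonnegative integers as the test set. The only (welcome) additions on your side are making explicit the routine induction showing $F(x)=P_F(x)$, and being a bit more careful that the degree bound $d$ fed to Alon--Tarsi should be the individual degree.
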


\begin{proof}
  Let $F$ be a monotone arithmetic $(+,\times)$ circuit computing a polynomial
  $P$, and let $Q$ be the polynomial produced by $F$. Since the
  circuit $F$ is monotone, it has no negative constants as inputs. So,
  the coefficients in both polynomials $P$ and $Q$ are
  positive. Suppose for the sake of contradiction that the polynomials $P$ and
  $Q$ do not coincide (as formal expressions). Then $P-Q$ is a
  nonzero polynomial of a (possibly large but) \emph{finite} degree
  $d$. By taking any set $S\subseteq \NN$ of $|S|\geq d+1$ numbers,
  the aforementioned result of Alon and Tarsi implies that
  $P(x)-Q(x)\neq 0$ for some $x\in S^n$, a contradiction with our
  assumption that the circuit $F$ computes the polynomial~$P$.
\end{proof}

Say that a polynomial $Q$ is \emph{similar} to a polynomial
$P_A(x)=\sum_{a\in A}\prod_{i=1}^n x_i^{a_i}$ if it is of the form
$Q(x)=\sum_{a\in A}\const{a} \prod_{i=1}^n x_i^{a_i}$ for some
positive integer coefficients $\const{a}\geq 1$ (note that the set $A$ of exponent vectors of $Q$ is the \emph{same} as that of $P_A$).
In particular, the polynomial $P_A$ is similar to itself.
Recall that an arithmetic circuit is \emph{constant-free} if it has no constants as inputs.
For a set $A\subseteq\NN^n$ of vectors, let
\begin{align*}
  \arithm{A} := &\mbox{ min size of a monotone  arithmetic  constant-free $(+,\times)$ circuit producing}\\
  & \mbox{ a polynomial similar to $P_A(x)=\sum_{a\in A}\prod_{i=1}^n
    x_i^{a_i}$.}
\end{align*}
In particular, $\arithm{A}$ is a lower bound on the size of any monotone
arithmetic $(+,\times)$ circuit computing the polynomial $P_A(x)=\sum_{a\in A}\prod_{i=1}^n x_i^{a_i}$.

The model of monotone arithmetic $(+,\times)$ circuits has been studied
in many papers, including
~\cite{gashkov,GS12,jerrum,juk-SIDMA,Raz2011,
schnorr,shamir1980,tiwari1994,valiant80}.
Strong, even exponential lower bounds on $\arithm{A}$ for explicit
sets $A\subseteq\{0,1\}^n$  are  known. Already in 1976, Schnorr~\cite{schnorr} has proved that $\arithm{A}\geq |A|-1$ holds
for every set $A\subseteq\NN^n$ which is \emph{cover-free} in
that $a+b\geq c$ with $a,b,c\in A$ implies $c\in\{a,b\}$.
For example, any set $A\subseteq\{0,1\}^n$ of vectors with $m$ ones, no two of which share $\lfloor m/2\rfloor$ ones in common, is cover-free. Also, as shown in~\cite{schnorr}, the set of characteristic $0$-$1$ vectors of cliques in $K_n$ (viewed as sets of their edges) on the same number of vertices is also cover-free.

Gashkov and Sergeev~\cite{GS12} substantially extended Schnorr's  result by showing that
$\arithm{A}\geq |A|/\max\{t^3,s^2\}-1$ holds for every set
$A\subseteq \NN^n$ which is $(t,s)$-\emph{thin} in the following sense: for
every sets $X,Y\subseteq \NN^n$ of vectors, the inclusion $X+Y\subseteq A$ implies
$|X|\leq t$ or $|Y|\leq s$.
It is easy to see that cover-free sets are
$(1,1)$-thin sets\footnote{Suppose that a set $A$ is \emph{not} a $(1,1)$-thin set.
Then the inclusion $\{x,x'\}+\{y,y'\}\subseteq A$ holds for some vectors $x\neq x'$ and $y\neq y'$. But then the sum $(x+y)+(x'+y')$ of two vectors of $A$ contains a third vector $x+y'$ of $A$, meaning that $A$ is \emph{not} a cover-free set.}. Together with the construction of $(t,t!)$-thin sets
 $A\subseteq\{0,1\}^n$ by Koll\'ar, R\'onyai
and Szab\'o~\cite{KRS96} (via so-called \emph{norm-graphs}), this yields the best known
lower bound
$\arithm{A}\geq 2^{n/2-o(n)}$ on the size of monotone arithmetic
circuits computing any polynomial similar to the explicit multilinear polynomial
$P(x)=\sum_{a\in A}\prod_{i=1}^n x_i^{a_i}$
(see~\cite[Theorem~3]{GS12} or \cite[Appendix~E]{juk-SIDMA} for more
details).

On the other hand, monotone  Boolean read-$k$ $(\lor,\land)$ circuits and monotone arithmetic $(+,\times)$ circuits are interrelated.
By \cref{fact2}, monotone $(+,\times)$ circuits produce what they compute.
Hence, \cref{lem:struct-bool} implies that for every monotone Boolean
function $f:\{0,1\}^n\to\{0,1\}$ and for any $k\geq 1$, we have
\begin{equation}\label{eq:reak-arithm}
\Bool{f}{k} = \min\left\{\arithm{B}\colon \mbox{
    $\Supp{\Low{f}}\subseteq\Supp{B\cap\{0,1,\ldots,k\}^n}$ and
    $B\subseteq\Up{(\Low{f})}$}\right\}\,.
\end{equation}
In particular, for every monotone Boolean
function $f:\{0,1\}^n\to\{0,1\}$ and for every integer $k\geq 1$ there is a
set $B\subseteq\NN^n$ of vectors satisfying the conditions in 
\cref{eq:reak-arithm} such that $\Bool{f}{k}\geq \arithm{B}$ holds, that is,
$\Bool{f}{k}$ is at least the minimum size
of a monotone arithmetic constant-free $(+,\times)$ circuit computing a
polynomial similar to $P(x)=\sum_{b\in B}\prod_{i=1}^n x_i^{b_i}$.
Thus, at least in principle, lower bounds on the size of Boolean read-$k$
$(\lor,\land)$ circuits \emph{can} be obtained by proving lower bounds
on the size of monotone arithmetic $(+,\times)$ circuits. The problem,
however, is that we know only little about the structure of the sets
$B\subseteq\NN^n$ of exponent vectors of the polynomial to be
considered: we only know that the two inclusions in \cref{eq:reak-arithm} hold.  Fortunately, for $k=1$ (read-once circuits), the situation is much better: it
is then enough to prove that $\arithm{B}$ is large for the so-called
``lower envelope'' $B\subseteq\Low{f}$ of the (known to us)
set~$\Low{f}$.

The \emph{lower envelope} $\lenv{B}\subseteq B$ of a set
$B\subseteq\NN^n$ of vectors consists of all vectors $b\in B$ of the
smallest degree, where the \emph{degree} of a vector $b\in\NN^n$ is
the sum $|b|:=b_1+\cdots+b_n$ of its entries. Note that for a monotone
Boolean function $f$, $\lenv{\Low{f}}$ is the set of characteristic
$0$-$1$ vectors of \emph{shortest} implicants of $f$ (those with the smallest
number of variables).  A set $B\subseteq\NN^n$ is \emph{homogeneous}
of degree $m$, if all its vectors have the same degree~$m$; note that
then $\lenv{B}=B$ holds.

If $d$ is the minimum degree of a vector in $B$, then Strassen's
homogenization lemma (see \cref{sec:strassen}) implies that
$\arithm{\lenv{B}}$ is at most $\arithm{B}$ times
$\bigO(d^2)$. However, 
this additional factor $\bigO(d^2)$ can be eliminated
using the fact that the set $B$ contains no vectors of degree \emph{smaller} than~$d$.

\begin{envlem}[Jerrum and Snir~\cite{jerrum}]
  For every $B\subseteq\NN^n$, $\arithm{\lenv{B}}\leq \arithm{B}$.
\end{envlem}

\begin{proof}
  Take a monotone arithmetic constant-free $(+,\times)$ circuit $F$ of size $s=\arithm{B}$
  computing some polynomial $Q(x)=\sum_{b\in B}\const{b}\prod_{i=1}^n  x_i^{b_i}$ similar to the polynomial  $P(x)=\sum_{b\in B}\prod_{i=1}^n x_i^{b_i}$. By \cref{fact2}, the circuit $F$ also \emph{produces} the 
  polynomial $Q$. The polynomial $Q'(x)=\sum_{b\in
  \lenv{B}}\const{b}\prod_{i=1}^n x_i^{b_i}$ is similar to
  the polynomial $P'(x)=\sum_{b\in
  \lenv{B}}\prod_{i=1}^n x_i^{b_i}$. Hence, it is enough to show that
  the polynomial $Q'$ can also be produced by a monotone arithmetic constant-free $(+,\times)$ circuit $F'$ of size at most $s$.
  
 We will obtain the desired circuit $F'$ from the circuit $F$ 
by appropriately discarding some of the edges
  entering $+$-gates.  For a gate $v$ in the circuit $F$, let
  $B_v\subseteq\NN^n$ be the set of exponent vectors of the polynomial
  produced at~$v$, and let $\mdeg{v}$ denote the \emph{minimum} degree $b_1+\cdots+b_n$ of a vector $b\in B_v$. Note that $B_v=B$ holds for the
  output gate $v$ of~$F$. 

  If $v=u\times w$ is a multiplication gate, then
  $B_v=B_u+B_w$ (Minkowski sum). Since the degree of a sum of two
  vectors is the sum of their degrees, we have
  $\mdeg{v}=\mdeg{u}+\mdeg{w}$ and, hence, also
  $\lenv{B_v}=\lenv{B_u+B_w}=\lenv{B_u}+\lenv{B_w}$. So, we do nothing
  in this case. If $v=u+w$ is an addition gate, then $B_v=B_u\cup
  B_w$.  If $\mdeg{u}=\mdeg{w}$, then $\lenv{B_v}=\lenv{B_u\cup
    B_w}=\lenv{B_u}\cup \lenv{B_w}$, and we also do nothing in this case.
  However, if $\mdeg{u}< \mdeg{w}$, then $\lenv{B_v}=\lenv{B_u\cup
    B_w}=\lenv{B_u}$. In this case, we discard the edge $(w,v)$:
  delete the edge $(w,v)$, delete the $+$ operation labeling the gate
  $v$, and contract the remaining edge $(u,v)$. If
  $\mdeg{u}>\mdeg{w}$, then we discard the edge~$(u,v)$.
\end{proof}

A monotone Boolean function $f$ is \emph{homogeneous} if the set
$\Low{f}\subseteq f^{-1}(1)$ of its lowest ones is homogeneous
(meaning that all prime implicants of $f$ have the same number of
variables); note that then $\lenv{\Low{f}}=\Low{f}$ holds.
For a monotone Boolean function $f$, let
\begin{align*}
  \sBool{f}{k} := &\mbox{ min size of a monotone \emph{syntactically} read-$k$  $(\lor,\land)$ circuit computing $f$.}
\end{align*}
It is clear that $\Bool{f}{k}\leq \sBool{f}{k}$ always holds.

\begin{thm}\label{thm:envel}
  For every monotone Boolean function $f$, we have
  \[
  \arithm{\lenv{\Low{f}}}\leq \Bool{f}{1}\leq \sBool{f}{1}\leq \arithm{\Low{f}}\,.
  \]
   In particular, if $f$ is homogeneous, then
  $\arithm{\Low{f}}=\Bool{f}{1}=\sBool{f}{1}$.
\end{thm}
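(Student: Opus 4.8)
The plan is to establish the two outer inequalities separately, the middle one $\Bool{f}{1}\leq\sBool{f}{1}$ being immediate since every syntactically read-$1$ circuit is, in particular, a read-$1$ circuit. The engine throughout is the semiring-independence of the produced exponent-vector set (\cref{sec:preliminaries}), together with \cref{fact2} (a monotone arithmetic circuit produces what it computes) and the structural \cref{lem:struct-bool}, which characterizes, via $\Supp{\Low{f}}\subseteq\Supp{\Exp{F}}$ and $\Exp{F}\subseteq\Up{(\Low{f})}$, when a monotone $(\lor,\land)$ circuit computes a given monotone $f$.

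For the right inequality $\sBool{f}{1}\leq\arithm{\Low{f}}$, I would start from an optimal monotone arithmetic constant-free $(+,\times)$ circuit $F'$ of size $s=\arithm{\Low{f}}$ that \emph{produces} a polynomial similar to $P_{\Low{f}}(x)=\sum_{a\in\Low{f}}\prod_{i=1}^n x_i^{a_i}$; by the definition of ``similar,'' the set of exponent vectors produced by $F'$ is exactly $\Low{f}\subseteq\{0,1\}^n$. Replacing every $+$-gate by a $\lor$-gate and every $\times$-gate by a $\land$-gate yields a monotone $(\lor,\land)$ circuit $F$ of the same size whose produced set $\Exp{F}$ is again $\Low{f}$. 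All these vectors are $0/1$, hence $1$-bounded, so $F$ is even \emph{syntactically} read-$1$; and since $\Supp{\Low{f}}=\Supp{\Exp{F}}$ and $\Exp{F}=\Low{f}\subseteq\Up{(\Low{f})}$, \cref{lem:struct-bool} shows that $F$ computes $f$. Thus $\sBool{f}{1}\leq s$.

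For the left inequality $\arithm{\lenv{\Low{f}}}\leq\Bool{f}{1}$, I would take an optimal read-$1$ $(\lor,\land)$ circuit $F$ for $f$ of size $s=\Bool{f}{1}$ and pass to its arithmetic $(+,\times)$ version $F'$; this $F'$ is constant-free (Boolean circuits here have no constant inputs), has the same size $s$, and produces a polynomial with positive integer coefficients whose exponent-vector set is $\Exp{F}$, so $F'$ witnesses $\arithm{\Exp{F}}\leq s$. By \cref{def:readk}, being read-$1$ means $\Low{f}\subseteq\Exp{F}\subseteq\Up{(\Low{f})}$. The one step needing a (short) argument is the identity $\lenv{\Exp{F}}=\lenv{\Low{f}}$: first, the minimum degree $d$ of a vector in $\Exp{F}$ equals the minimum degree of a vector in $\Low{f}$ (the bound ``$\geq$'' because each $b\in\Exp{F}$ dominates some $a\in\Low{f}$ and hence $|b|\geq|a|$; the bound ``$\leq$'' because $\Low{f}\subseteq\Exp{F}$); and second, if $b\in\Exp{F}$ has $|b|=d$, then some $a\in\Low{f}$ with $a\leq b$ satisfies $|a|\leq d$ and $|a|\geq d$, forcing $|a|=|b|$ and hence $a=b$, so $b\in\Low{f}$. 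With this identity, the Envelope Lemma of Jerrum and Snir gives $\arithm{\lenv{\Low{f}}}=\arithm{\lenv{\Exp{F}}}\leq\arithm{\Exp{F}}\leq s$.

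Chaining the three inequalities yields the displayed chain. For the ``in particular'' part, if $f$ is homogeneous then all lowest ones of $f$ have the same degree, so $\lenv{\Low{f}}=\Low{f}$, the leftmost and rightmost quantities coincide, and the chain collapses to equalities. I do not expect a genuine obstacle: the only subtleties are the envelope identity $\lenv{\Exp{F}}=\lenv{\Low{f}}$ and the bookkeeping remark that ``produces a polynomial similar to $P_{\Exp{F}}$'' is exactly what the arithmetic version of a constant-free Boolean circuit does, so that $\arithm{\cdot}$ is genuinely controlled by the Boolean circuit size.
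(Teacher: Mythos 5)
Your proposal is correct and follows essentially the same route as the paper's own proof: both outer inequalities are established by translating between the arithmetic and Boolean versions of a circuit (exploiting the semiring-independence of the produced exponent-vector set), the left one via the Envelope Lemma together with the identity $\lenv{\Exp{F}}=\lenv{\Low{f}}$, and the right one by observing that an arithmetic circuit producing a polynomial with $0/1$ exponent vectors yields a syntactically read-$1$ Boolean circuit. The small steps you flag as needing care (the envelope identity and the constant-freeness bookkeeping) are exactly the ones the paper spells out.
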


\begin{proof}
  Let $A:=\Low{f}\subseteq f^{-1}(1)$ be the set of the lowest ones of
  $f$; hence, $f(x)=\bigvee_{a\in A}\bigwedge_{i\in\supp{a}}x_i$.  To
  show the first inequality $\arithm{\lenv{A}}\leq \Bool{f}{1}$, let
  $F$ be a monotone read-$1$ Boolean $(\lor,\land)$ circuit of size
  $s=\Bool{f}{1}$ computing $f$, and let $B\subseteq\NN^n$ be the set
  of ``exponent'' vectors produced by $F$. Consider the arithmetic
  $(+,\times)$ version $F'$ of $F$ obtained by replacing $\lor$-gates
  with $+$ gates, and $\land$-gates with $\times$ gates.
  The arithmetic circuit $F'$ has the same number $s=\Bool{f}{1}$ of gates.
  Since the Boolean circuit $F$ is constant-free (by our assumption throughout the paper), its arithmetic version $F'$ is also constant-free.
  So, the circuit $F'$ produces a polynomial $P(x)=\sum_{b\in B}\const{b}\prod_{i=1}^n x_i^{b_i}$ with the same set $B$ of exponent vectors, and some integer coefficients $\const{b}\geq 1$. Since the polynomial $P$ is similar to the polynomial $\sum_{b\in B}\prod_{i=1}^n x_i^{b_i}$, we have $\arithm{B}\leq s$. By  Envelope Lemma,  we have $\arithm{\lenv{B}}\leq \arithm{B}\leq s$. So, it is enough to show that
  $\lenv{A}=\lenv{B}$.

Since the (Boolean) circuit $F$
  is a read-$1$ circuit, we know that the inclusions $A\subseteq
  B\subseteq \Up{A}$ hold.
Let $m$ be the minimum degree $|a|=a_1+\cdots+a_n$ of a vector $a\in
  A$; hence, $\lenv{A}=\{a\in A\colon |a|=m\}$.  The inclusion
  $B\subseteq \Up{A}$ means that for every vector $b\in B$ there is a
  vector $a\in A$ such that $b\geq a$. Together with the inclusion
  $A\subseteq B$, this implies that $\lenv{B}=\{b\in B\colon |b|=m\}$
  and $\lenv{A}\subseteq\lenv{B}$. To show the inclusion
  $\lenv{B}\subseteq\lenv{A}$, take an arbitrary vector $b\in \lenv{B}$; hence, $|b|=m$. Since $B\subseteq \Up{A}$, there must be a vector $a\in A$ such that $b\geq a$. Since the set $A$ has no
  vectors of degree $< m=|b|$, we have  $b=a\in\lenv{A}$.
Thus, $\lenv{A}=\lenv{B}$ holds, as desired.

To show the inequality $\sBool{f}{1}\leq \arithm{A}$, let $F$ be a constant-free arithmetic $(+,\times)$ circuit of size
$s=\arithm{A}$ computing some polynomial $Q$ similar to $P(x)=\sum_{a\in
  A}\prod_{i=1}^n x_i^{a_i}$. Thus, the set of exponent vectors of the polynomial $Q$ is the same set $A=\Low{f}$ as that of the polynomial $P$.
 Let $F'$ be the Boolean
$(\lor,\land)$ version of $F$ obtained by replacing every $+$-gate
with a $\lor$-gate, and every $\times$-gate with a $\land$-gate. The
resulting Boolean circuit $F'$ produces the same set $A$ of exponent
vectors as the arithmetic circuit $F$. Hence, $F'$ computes the
Boolean version $f(x)=\bigvee_{a\in A}\bigwedge_{i\in\supp{a}}x_i$ of
the polynomial~$P$. Since the set $A=\Low{f}$ of exponent vectors of
$P$ consists of only $0$-$1$ vectors, the circuit $F'$ is a
\emph{syntactically} read-$1$ circuit. Hence, $\sBool{f}{1}\leq s$, as desired.
\end{proof}

\subsection{An easy lower bound for monotone arithmetic circuits}
\label{sec:explicit}
The goal of this section is to demonstrate that strong lower bounds on the size of monotone
arithmetic $(+,\times)$ circuits and, hence (by \cref{thm:envel}),
also on the size of monotone Boolean read-$1$ $(\lor,\land)$ circuits can be proved fairly
easily.

The weakness of monotone arithmetic circuits lies in \cref{fact2}:
unlike for monotone \emph{Boolean} circuits (where both idempotence $x\land x=x$ and
absorption $x\lor xy=x$ are allowed), monotone \emph{arithmetic} circuits ``produce what
they compute.'' This weakness results in the following ``balanced
decomposition property'' for arithmetic circuits computing
homogeneous\footnote{A polynomial $f(x)=\sum_{a\in
    A}\const{a}\prod_{i=1}^nx_i^{a_i}$ is monotone if $\const{a}>0$
  for all $a\in A$, and is \emph{homogeneous} of degree $\deg{f}=m$ if
  $a_1+\cdots+a_n=m$ holds for all $a\in A$.} polynomials observed
already by Hyafil~\cite[Theorem~1]{hyafil} and
Valiant~\cite[Lemma~3]{valiant80}.

\begin{decomplem}
If a homogeneous polynomial $f$ of degree $m\geq 3$
  can be computed by a monotone arithmetic $(+,\times)$ circuit of
  size $s$, then $f$ can be written as a sum $f=g_1h_1+\cdots+g_t h_t$
  of $t\leq s$ products of homogeneous polynomials such that $m/3\leq
  \deg{g_i}\leq 2m/3$ for all $i=1,\ldots,t$.
\end{decomplem}

In particular, the inclusions $\mon{g_ih_i}\subseteq\mon{f}$ hold for
all $i=1,\ldots,t$, where $\mon{f}$ denotes the set of all monomials
of~$f$. That is every monomial of $g_ih_i$ must also be a monomial of the 
computed polynomial (no ``redundant'' monomials can be produced); 
this is in stark contrast with Boolean or tropical circuits, where
nothing similar holds.
 The proof of Decomposition Lemma is simple. If a circuit computes the
polynomial $f$, then (by \cref{fact2}) it also produces that
polynomial. Since $f$ is homogeneous, polynomials $g_v$ produced at
intermediate gates $v$ are also homogeneous. By walking backward from
the output gate, and by always choosing that of the two input gates
$v$ with larger $\deg{g_v}$, we will find a gate $v$ with $m/3\leq
\deg{g_v}\leq 2m/3$.  Hence, the polynomial $f$ is of the form
$f=g_vh_v + \cdots$ for some polynomial $h_v$.  Replace the gate $v$
by constant $0$, and argue by induction on circuit size.

\begin{ex}[Perfect matchings]\label{ex:matchings}
  The \emph{perfect matching} function is a monotone Boolean function
  $\match{n}$ of $n^2$ variables, one for each edge of $K_{n,n}$, such
  that $\match{n}(x)=1$ iff the subgraph of $K_{n,n}$ specified by the input vector $x\in\{0,1\}^{n\times n}$
  contains a perfect matching. We will use the
  decomposition lemma to  prove the
  following lower bound:
  \[
  \mbox{For $f=\match{n}$, we have $\Bool{f}{1}=\arithm{\Low{f}}=
    2^{\Omega(n)}$.}
  \]
  \begin{proof}
    The set $\Low{f}$ of the lowest ones of this function consists of
    $|\Low{f}|=n!$ characteristic $0$-$1$ vectors $a\in\{0,1\}^{n\times n}$ of all perfect
    matchings (viewed as sets of their edges).  Since the set
    $\Low{f}$ is homogeneous (of degree $n$), \Cref{thm:envel} yields
    $\Bool{f}{1}=\arithm{\Low{f}}$, and it remains to prove the lower
    bound $\arithm{\Low{f}}=2^{\Omega(n)}$.
The polynomial $\sum_{a\in\Low{f}}\prod_{i,j=1}^nx_{i,j}$
is the well known permanent polynomial $\perm{n}(x)=\sum_{\pi}\prod_{i=1}^nx_{i,\pi(i)}$, 
where the sum is over all $n!$ permutations $\pi:[n]\to[n]$.
Hence, $\arithm{\Low{f}}$ is the minimum size
$s$ of a monotone arithmetic circuit $F$ computing a polynomial
similar to the polynomial $\perm{n}$. 

To apply
the decomposition lemma,
take an arbitrary polynomial of the form $gh$ with
$\mon{gh}\subseteq\mon{\perm{n}}$ and $\deg{g}=r$ for some $n/3\leq r\leq 2n/3$. Since
the polynomial $\perm{n}$ is homogeneous of degree $n$, the polynomials
$g$ and $h$ must be homogeneous of degrees $n/3\leq\deg{g}=r\leq
2n/3$ and $n/3\leq \deg{h}=n-r\leq 2n/3$.  Fix any two monomials
$p\in\mon{g}$ and $q\in\mon{h}$; hence, $p$ corresponds to a
matching in $K_{n,n}$ with $r$ edges, and $q$ corresponds to a
matching in $K_{n,n}$ with $n-r$ edges; since the polynomial $\perm{n}$
is multilinear, these two matchings must be vertex-disjoint.  A
matching in $K_{n,n}$ with $r$ edges is contained in only $(n-r)!$
perfect matchings. So, $|\mon{h}|=|\mon{p\cdot h}|\leq (n-r)!$ and
$|\mon{g}|=|\mon{g\cdot q}|\leq r!$. This gives an upper bound
$|\mon{gh}|=|\mon{g}|\cdot |\mon{h}|\leq r!(n-r)!$ on the number
of monomials in the polynomial $gh$. By Decomposition Lemma, the circuit $F$ must have $s\geq
|\mon{f}|/|\mon{gh}|\geq n!/r!(n-r)!=\binom{n}{r}$ gates. Since
$\binom{n}{r}\geq \binom{n}{n/3}$ for every $n/3\leq r\leq 2n/3$,
we have $\arithm{\Low{f}}\geq \binom{n}{n/3}$.
\end{proof}
\end{ex}

\begin{rem}\label{rem:JS}
  By using the permanent equivalent of Laplace's expansion rule for
  determinants, Jerrum and Snir~\cite{jerrum} have shown that the
  permanent polynomial $\perm{n}$ (of $n^2$ variables) can be
  computed by a monotone arithmetic $(+,\times)$ circuit using at most
  $t:=n(2^{n-1}-1)$ multiplication $(\times)$ gates (using a more subtle argument as in the proof above, it is proved in ~\cite{jerrum} that
   this number of multiplication gates is also
  necessarily.) On the other hand, an argument similar to that used by
  Alon and Boppana~\cite[Lemma 3.15]{alon-boppana87} for monotone Boolean circuits implies that
if a multilinear $n$-variate polynomial $P$ can be computed by a monotone arithmetic $(+,\times)$ circuit with $t$ multiplication $(\times)$ gates,
then a polynomial similar to $P$ can be computed by a monotone arithmetic $(+,\times)$ circuit with $t$ multiplication $(\times)$ gates and $\bigO(tn+t^2)$ addition $(+)$ gates. Thus,
  the minimum size of a monotone arithmetic $(+,\times)$ circuit
  computing $\perm{n}$ is $2^{\Theta(n)}$.
  \qed
\end{rem}

\section{From Tropical (min,+) to Boolean Read-$1$}
\label{sec:tropical}

A tropical $(\min,+)$ circuit is a circuit over the tropical semiring
$(R,\suma,\daug)$ with $x\suma y:=\min\{x,y\}$ and $x\daug y:=x+y$
(arithmetic addition); the domain is the set $R=\pRR$ of all
nonnegative real numbers.

Note that in the tropical $(\min,+)$ semiring, powering $x_i^{a_i}=x_i\daug
x_i\daug\cdots\daug x_i$ ($a_i\in\NN$ times) turns into multiplication
by scalars $a_ix_i=x_i+x_i+\cdots +x_i$.  So, a (generic) monomial
$\prod_{i=1}^n x_i^{a_i}$ turns into the tropical ``monomial''
$\skal{a,x}=a_1x_1+\cdots+a_nx_n$, the scalar product of vectors $a$
and $x$, and a polynomial $\sum_{a\in A}\const{a}\prod_{i=1}^n
x_i^{a_i}$ turns into the \emph{tropical} $(\min,+)$ polynomial
$f(x)=\min_{a\in A} \skal{a,x}+\const{a}$ with ``exponent'' vectors
$a\in A$ and ``coefficients'' $\const{a}\in\RR_+$;
vectors $a\in A$ are usually called \emph{feasible solutions}
of the corresponding minimization problem.
For example, an arithmetic
polynomial $P(x,y)=2x^3y+4xy^2$ turns into the tropical polynomial
$f(x,y)=\min\{3x+y+2,x+2y+4\}$

A $(\min,+)$ circuit $F$ \emph{approximates} a given minimization problem $f:\pRR^n\to\pRR$  within a factor $k\geq 1$
if for every input weighting $x\in\pRR^n$, the inequalities $f(x)\leq
F(x)\leq k\cdot f(x)$ hold. That is, the circuit is not allowed to output any better (smaller) than optimal values but is allowed to output
up to $k$ times worse than the optimal values.
 In
particular, the circuit $F$ \emph{solves} the problem $f$ exactly
(approximates $f$ within factor $k=1$) if $F(x)=f(x)$ holds for all
$x\in\pRR^n$.

The minimization problem $f:\pRR^n\to\pRR$ represented by a tropical
polynomial $f(x)=\min_{a\in A} \skal{a,x}+\const{a}$ is
\emph{constant-free} if $\const{a}=0$ for all $a\in A$.  Combinatorial
optimization problems are usually constant-free.  For example, in the
famous \emph{MST problem} (minimum weight spanning tree problem) on a
given graph $G$, the goal is to compute the constant-free $(\min,+)$
polynomial $f(x)=\min_{a\in A} \skal{a,x}$, where $A$ is the set of
characteristic $0$-$1$ vectors of spanning trees of $G$ (viewed as
sets of their edges). In the not less prominent \emph{assignment
  problem}, $A$ is the set of characteristic $0$-$1$ vectors of
perfect matchings, etc.

Since constant inputs in $(\min,+)$ circuits can only affect the ``coefficients'' $\const{a}$, such inputs ``should'' be of little use when solving constant-free minimization problems. This intuition was confirmed in \cite[Lemma~3.2]{JS20a} using a simple argument (which we include for completeness):  when dealing with tropical $(\min,+)$ circuits approximating constant-free minimization problems, we can safely restrict ourselves to constant-free circuits.  The
\emph{constant-free version} of a $(\min,+)$ circuit $F$ is obtained
by replacing all constant inputs with constant~$0$.

\begin{lem}\label{lem:const-free}
  If a $(\min,+)$ circuit $F$ approximates a constant-free
  minimization problem  within a factor $k\geq 1$, then the constant-free version of $F$ also
  approximates this problem within the same factor.
\end{lem}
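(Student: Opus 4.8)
The plan is to compare the constant-free version $F_0$ of $F$ with $F$ itself on every input weighting, using the key structural fact that any $(\min,+)$ circuit $F$ computes a function of the form $F(x) = \min_{a \in A}\skal{a,x} + c_a$ where $A = \Supp{\Exp{F}}$ and each $c_a \ge 0$ is a ``coefficient'' assembled from the constant inputs of $F$ (this follows from the inductive description of produced polynomials in \cref{sec:preliminaries} together with the distributivity of $+$ over $\min$). Replacing every constant input by $0$ yields $F_0(x) = \min_{a \in A}\skal{a,x}$: the set $A$ of exponent-vector supports is unchanged, only the coefficients are zeroed out. First I would record these two formulas and note the immediate pointwise comparison $F_0(x) \le F(x)$ for all $x \in \pRR^n$, since each $c_a \ge 0$.

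Next I would invoke the hypothesis: for every $x$, $f(x) \le F(x) \le k\cdot f(x)$, where $f(x) = \min_{a \in B}\skal{a,x}$ is the constant-free problem (feasible solutions $B$; all coefficients $0$). The upper bound $F_0(x) \le F(x) \le k\cdot f(x)$ is then free from the previous paragraph. The real content is the lower bound $f(x) \le F_0(x)$, i.e.\ that zeroing out the constants does not let the circuit undershoot the optimum. For this I would use a scaling argument: for a parameter $\lambda > 0$, evaluate the hypothesis at the scaled weighting $\lambda x$. Since $f$ and $F_0$ are both ``linear'' in the tropical sense — $f(\lambda x) = \lambda f(x)$ and $F_0(\lambda x) = \lambda F_0(x)$, because all their coefficients vanish — while in $F(\lambda x) = \min_{a\in A}\lambda\skal{a,x} + c_a$ the additive constants $c_a$ do \emph{not} scale, we get $F(\lambda x) = \lambda F_0(x) + O(1)$ as $\lambda \to \infty$ (more precisely, for $\lambda$ large enough the minimizing index $a$ stabilizes to one attaining $F_0(x)$, so $F(\lambda x) - \lambda F_0(x)$ stays bounded). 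Dividing the inequality $f(\lambda x) \le F(\lambda x)$ by $\lambda$ gives $f(x) \le F_0(x) + O(1/\lambda)$, and letting $\lambda \to \infty$ yields $f(x) \le F_0(x)$, as desired.

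Combining the two bounds gives $f(x) \le F_0(x) \le k\cdot f(x)$ for all $x \in \pRR^n$, which is exactly the assertion that $F_0$ approximates $f$ within factor $k$. One should also handle the degenerate case where $A$ (equivalently $B$, or the output) is empty or where some $\skal{a,x}$ term is identically zero, but these are routine.

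\textbf{Main obstacle.} The only delicate point is the scaling step: I must be careful that $F_0(\lambda x) = \lambda F_0(x)$ genuinely holds (it does, since a constant-free $(\min,+)$ circuit computes a positively homogeneous function of degree $1$), and that the constants $c_a$ in $F$ are nonnegative and bounded independently of $x$ (they are finite sums/combinations of the finitely many constant inputs, hence bounded, and nonnegative because the semiring domain is $\pRR$). Once positive homogeneity of the constant-free circuit is established, the limit argument is immediate; alternatively one can avoid limits entirely by arguing directly that the index $a \in A$ minimizing $F_0(x)$ also (eventually) minimizes $F(\lambda x)$, giving $F_0(x) = \lim_{\lambda\to\infty} F(\lambda x)/\lambda \ge \lim f(\lambda x)/\lambda = f(x)$. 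I expect the cleanest write-up to use exactly this homogeneity-plus-limit packaging.
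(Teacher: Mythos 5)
Your proposal is correct and follows essentially the same route as the paper: after observing that the constant-free version only differs from $F$ by nonnegative additive constants (so the upper bound is immediate), both arguments establish the lower bound by scaling the input weighting by $\lambda$ and exploiting that $f$ and the constant-free circuit are positively homogeneous while the constants in $F$ are not. The paper's only stylistic difference is that it avoids taking a limit by choosing a single explicit $\lambda$ (namely $\lambda = 2c/d$ where $c$ is the maximum constant and $d$ the assumed deficit) and deriving a direct contradiction.
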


\begin{proof}
  Let $f(x)=\min_{a\in A} \skal{a,x}$ be a (constant-free)
  minimization problem approximated by the circuit $F$ within factor
  $k$, and let $g(x)=\min_{b\in B}\ \skal{x,b}+\const{b}$ be the
  tropical $(\min,+)$ polynomial produced by~$F$. Since constant
  inputs can only affect the ``coefficients'' $\const{b}$, the
  polynomial produced by the constant-free version $\cf{F}$ of~$F$ is
  the constant-free version $\cf{g}(x)=\min_{b\in B}\ \skal{x,b}$ of
  the polynomial $g(x)$.  Since the circuit $F$ approximates $f$
   within the factor $k\geq 1$,
  the inequalities $f(x)\leq g(x)\leq k\cdot f(x)$ hold for all
  weightings $x\in\pRR^n$.  We have to show that $f(x)\leq
  \cf{g}(x)\leq k\cdot f(x)$ also holds for all $x\in\pRR^n$.  Since
  the constants $\const{b}$ are nonnegative, we have
  $\cf{g}(x)\leq g(x)\leq k\cdot f(x)$ for all $x\in\pRR^n$.  To show
  that $f(x)\leq \cf{g}(x)$ holds as well, suppose for the sake of
  contradiction that $f(x_0)>\cf{g}(x_0)$ holds for some input
  weighting $x_0\in\pRR^n$. Then the difference $d=f(x_0)-\cf{g}(x_0)$
  is positive.  We can assume that the constant $c:=\max_{b\in
    B}\const{b}$ is also positive, for otherwise, there would be
  nothing to prove. So the constant $\lambda:=2c/d$ is positive.
  Since $\cf{g}(x_0)=f(x_0)-d$, we obtain $g(\lambda x_0)\leq
  \cf{g}(\lambda x_0)+c =\lambda\cdot \cf{g}(x_0)+c
  =\lambda[f(x_0)-d]+c=\lambda \cdot f(x_0)-c=f(\lambda x_0)-c$, which
  is strictly smaller than $f(\lambda x_0)$, a contradiction with
  $f(x)\leq g(x)$ for all $x\in\pRR^n$.
\end{proof}

If a tropical $(\min,+)$ circuit $F$ solves
the minimization problem $f_A(x)=\min_{a\in A}\ \skal{a,x}$ on a given set $A$ of feasible solutions, then the set $B_F\subseteq\NN^n$ of ``exponent'' vectors does not need to coincide with $A$. For example,
the circuit $F=\min\{x,y\}+\min\{x,y\}$ solves the minimization problem
 $f_A=\min\{2x,2y\}$ on $A=\{(2,0),(0,2)\}$ by producing the set
 $B_F=\{(2,0), (1,1), (0,2)\}$ of ``exponent'' vectors. Note that the vector $(1,1)$ is the convex combination $\tfrac{1}{2}(2,0)+\tfrac{1}{2}(0,1)$ of the vectors of $A$.

And indeed,
 using a version of Farkas' lemma, Jerrum and Snir~\cite{jerrum}
have shown that the structures of the sets $A$ and $B_F$ are related via convexity.
Namely, a $(\min,+)$ circuit $F$ solves the minimization problem $f_A$ of a set $A\subseteq\NN^n$ of feasible solutions if and only if every vector
of $A$ contains\footnote{A vector $x\in\RR^n$ \emph{contains} a vector $y\in\RR^n$ if $x\geq y$ holds, that is, if $x_i\geq y_i$ for all $i=1,\ldots,n$.} some convex combination of vectors in $B_F$,
and every vector of $B_F$ contains some convex combination of
vectors in~$A$. The ``if'' direction here is simple: the scalar product
of $x\in\RR^n$ with a convex combination of some collection of vectors is at least the minimum scalar product
of $x\in\RR^n$ with some of these vectors. The Farkas lemma is used
in~\cite{jerrum} to show the ``only if'' direction.

In the case of $0$-$1$ optimization, feasible solutions $a\in A$ are $0$-$1$ vectors. In this case, the following properties of sets $B_F$ can be easily proved without any use of Farkas' lemma.

\begin{lem}[Structure]\label{lem:struct-trop}
Let $F$ be a constant-free
$(\min,+)$ circuit, and $B_F\subseteq\NN^n$ be the set of ``exponent''
vectors produced by~$F$.
  If the circuit $F$ approximates the minimization problem
  $f_A(x)=\min_{a\in A}\skal{a,x}$ on an antichain
  $A\subseteq\{0,1\}^n$ within a factor $k\geq 1$, then $B_F\subseteq\Up{A}$ and for every vector
  $a\in A$ there is a vector $b\in B_F$ such that $\supp{b}=\supp{a}$
  and $\skal{a,b}\leq k\cdot\skal{a,a}$.  In particular, if $k=1$, then
  $A\subseteq B_F\subseteq\Up{A}$.
\end{lem}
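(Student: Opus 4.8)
The plan is to analyze the structure of the tropical polynomial $g(x)=\min_{b\in B_F}\skal{b,x}$ produced by $F$ and to exploit the approximation inequalities $f_A(x)\leq g(x)\leq k\cdot f_A(x)$ evaluated at carefully chosen $0/1$-style weightings. Since $F$ is constant-free, all ``coefficients'' of $g$ vanish, so $g(x)=\min_{b\in B_F}\skal{b,x}$, and the set $B_F$ is exactly the set of exponent vectors produced syntactically (over any semiring), so it lies in $\NN^n$.

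First I would prove $B_F\subseteq\Up{A}$. Suppose some $b\in B_F$ satisfies $b\notin\Up{A}$, i.e.\ for every $a\in A$ there is a coordinate $i=i(a)\in\supp{a}\setminus\supp{b}$. Consider the weighting $x$ with $x_i=0$ for $i\in\supp{b}$ and $x_i=N$ (a huge number) for $i\notin\supp{b}$. Then $g(x)\leq\skal{b,x}=0$, so the approximation bound forces $f_A(x)=0$ as well (since $f_A(x)\leq g(x)=0$ and $f_A\geq 0$); but every $a\in A$ has $\skal{a,x}\geq x_{i(a)}=N>0$, so $f_A(x)\geq N>0$, a contradiction. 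Hence $B_F\subseteq\Up{A}$.

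Next, fix $a\in A$ and I would produce the promised $b\in B_F$ with $\supp{b}=\supp{a}$ and $\skal{a,b}\leq k\cdot\skal{a,a}$. Evaluate at the characteristic vector $x=a$ itself (so $x_i=1$ for $i\in\supp{a}$, else $0$): then $f_A(a)\leq\skal{a,a}=|\supp{a}|$, and in fact $f_A(a)=|\supp{a}|$ because $A$ is an antichain, so no other $a'\in A$ can satisfy $\supp{a'}\subseteq\supp{a}$, forcing $\skal{a',a}=|\supp{a'}\cap\supp{a}|<|\supp{a'}|$\,---\,wait, more simply, any $a'\neq a$ in $A$ has $\supp{a'}\not\subseteq\supp{a}$, so $\skal{a',a}\leq |\supp{a'}|-1$ only if $a'$ misses a coordinate, but we just need $f_A(a)\leq\skal{a,a}$, which is immediate. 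Now pick $b\in B_F$ attaining $g(a)=\skal{b,a}$; then $\skal{b,a}=g(a)\leq k\cdot f_A(a)\leq k\skal{a,a}$, giving the degree bound. For $\supp{b}=\supp{a}$: the inclusion $B_F\subseteq\Up{A}$ gives some $a''\in A$ with $b\geq a''$, hence $\supp{a''}\subseteq\supp{b}$; and $\skal{b,a}\geq\skal{a'',a}\geq|\supp{a''}|$ would be too big unless $\supp{a''}\subseteq\supp{a}$\,---\,combined with $A$ being an antichain and the bound $\skal{b,a}=\sum_{i\in\supp{a}}b_i$ being finite, one argues $a''=a$ and $\supp{b}\subseteq\supp{a}$; the reverse inclusion $\supp{a}\subseteq\supp{b}$ comes from $b\geq a''=a$. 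I should double-check this chain carefully, since the antichain hypothesis is doing real work here and conflating ``$b$ contains some $a''$'' with ``$b$ contains $a$'' is the subtle point.

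The main obstacle is precisely this last identification $\supp{b}=\supp{a}$: the structure lemma of Jerrum and Snir only guarantees that $b$ contains a \emph{convex combination} of vectors of $A$, not a single vector, so in the general (non-$0/1$) setting one cannot pin down $\supp{b}$. The $0/1$ hypothesis on $A$ together with $A$ being an antichain is what lets us avoid Farkas' lemma and get the clean support equality; the delicate step is ruling out that $b$ ``covers'' $a$ via a different feasible solution $a''\subsetneq$-incomparable to $a$, which is impossible exactly because $A$ is an antichain. The final sentence ``if $k=1$ then $A\subseteq B_F\subseteq\Up{A}$'' is then immediate: $\skal{a,b}\leq\skal{a,a}$ with $\supp{b}=\supp{a}$ and $b\in\NN^n$ forces $b_i\leq 1$ for all $i\in\supp{b}$, i.e.\ $b=a$.
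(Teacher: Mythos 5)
Your first half ($B_F\subseteq\Up{A}$) is correct and essentially identical to the paper's argument (the paper uses $x_i=1$ where you use $x_i=N$ off $\supp{b}$; either works, as any positive value suffices). The gap is in the second half, and it is exactly at the point you flag as ``delicate''; your attempt to patch it does not go through.

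Evaluating at $x=a$ (i.e.\ $x_i=0$ for $i\notin\supp{a}$) cannot control the off-support coordinates of the minimizing $b$. Concretely, if $b=a+\vec{e}_j$ for some $j\notin\supp{a}$, then $\skal{b,a}=\skal{a,a}$, so $b$ is indistinguishable from $a$ under this weighting, yet $\supp{b}\neq\supp{a}$. Your fallback -- using $B_F\subseteq\Up{A}$ to get $a''\leq b$ and then arguing via the antichain property -- also fails: the antichain hypothesis rules out $a''\lneq a$, but it does \emph{not} rule out $a''$ being incomparable to $a$, and nothing in the inequality $\skal{b,a}\leq k\skal{a,a}$ forces $\supp{a''}\subseteq\supp{a}$ (the quantity $\skal{a'',a}=|\supp{a''}\cap\supp{a}|$ can be small precisely when $a''$ overshoots $\supp{a}$). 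So you cannot conclude $a''=a$, and you cannot conclude $\supp{b}\subseteq\supp{a}$.

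The fix is the one the paper uses: weight the coordinates \emph{outside} $\supp{a}$ not by $0$ but by a large finite value, namely $x_i=kn+1$ for $i\notin\supp{a}$ and $x_i=1$ for $i\in\supp{a}$. Then any minimizer $b$ with $\skal{b,x}\leq k\skal{a,a}\leq kn$ must have $b_i=0$ for all $i\notin\supp{a}$ (else $\skal{b,x}\geq kn+1$), giving $\supp{b}\subseteq\supp{a}$ directly. Only \emph{after} that does the chain $a''\leq b$, $\supp{a''}\subseteq\supp{b}\subseteq\supp{a}$, antichain $\Rightarrow a''=a \Rightarrow \supp{b}=\supp{a}$ go through, and then $\skal{a,b}=\skal{b,x}\leq k\skal{a,a}$ falls out because $x$ restricted to $\supp{a}$ equals $a$. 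Your final deduction for the $k=1$ case is fine once the support equality is in hand.
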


\begin{proof}
Let $B:=B_F$.
  Since the circuit $F$ is constant-free, the problem solved by $F$ is
  of the form $f_B(x)=\min_{b\in B}\skal{b,x}$. Since $F$ approximates
  the problem $f_A$ within the factor $k$, inequalities $f_A(x)\leq
  f_B(x)\leq k\cdot f_A(x)$ hold for all $x\in\pRR^n$. To show the
  inclusion $B\subseteq\Up{A}$, take an arbitrary vector $b\in B$, and
  consider the weighting $x\in\{0,1\}^n$ such that $x_i:=0$ for
  $i\in\supp{b}$, and $x_i:=1$ for $i\not\in\supp{b}$. Take a vector
  $a\in A$ on which the minimum $f_A(x)=\skal{a,x}$ is achieved. Then
  $\skal{a,x}=f_A(x)\leq f_B(x)\leq \skal{b,x}=0$. Thus,
  $\supp{a}\subseteq\supp{b}$. Since $b\in\NN^n$ and $a$ is a $0$-$1$
  vector, this yields $a\leq b$, as desired.

  Now take an arbitrary vector $a\in $A, and consider the weighting
  $x\in\{1,kn+1\}^n$ with $x_i:=1$ for all $i\in \supp{a}$ and
  $x_i:=kn+1$ for all $i\not\in\supp{a}$. Let $b\in B$ be a vector on
  which the minimum $f_B(x)=\skal{b,x}$ is achieved. Then
  $\skal{b,x}=f_B(x)\leq k\cdot f_A(x)\leq k\cdot \skal{a,x}=k\cdot
  \skal{a,a}\leq kn$.  If $b_i\neq 0$ held for some
  $i\not\in\supp{a}$, then we would have $\skal{b,x}\geq b_ix_i =
  b_i(kn+1) > kn$, a contradiction. Thus, the inclusion
  $\supp{b}\subseteq\supp{a}$ holds.  Since $B\subseteq\Up{A}$, there
  is a vector $a'\in A$ such that $a'\leq b$. Hence,
  $\supp{a'}\subseteq \supp{b}\subseteq\supp{a}$. Since both $a$ and
  $a'$ are $0$-$1$ vectors, this yields $a'\leq a$ and, since the set
  $A$ is an antichain, we have $a'=a$ and, hence, also
  $\supp{b}=\supp{a}$. By the definition of the weighting $x$, this
  yields
  $\skal{a,b}=\skal{b,x}=f_B(x)\leq k\cdot\skal{a,a}$, as desired.
\end{proof}

For a finite set $A\subseteq\NN^n$ of vectors, let
\begin{align*}
  \kMin{A}{k} := &\mbox{ smallest size of a $(\min,+)$ circuit
    approximating the minimization}\\
  &\mbox{ problem $g(x)=\min_{a\in A}\skal{a,x}$ on $A$ within the factor
    $k$.}
\end{align*}

\begin{thm}\label{thm:trop-read-once}
  Let $f$ be a monotone Boolean function,  $m$ be the largest
  number of variables in a prime implicant of $f$, $k\geq 1$ be an integer, and  $r=(k-1)m+1$. Then
  \[
  \Bool{f}{r}\leq \kMin{\Low{f}}{k}\leq
  \Bool{f}{k}\,.
  \]
   In particular,
  $\kMin{\Low{f}}{1}=\Bool{f}{1}$.
\end{thm}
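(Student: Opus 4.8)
\textbf{Proof plan for \cref{thm:trop-read-once}.}
The plan is to establish the two inequalities separately, using \cref{lem:struct-trop} and \cref{lem:struct-bool} as the bridges between tropical and Boolean circuits, together with the observation that both kinds of circuits produce the \emph{same} set of exponent vectors once we replace $\min$ by $\lor$ and $+$ by $\land$ (or vice versa). First, by \cref{lem:const-free} we may assume throughout that the $(\min,+)$ circuits under consideration are constant-free, so that \cref{lem:struct-trop} applies verbatim and no coefficients interfere with the translation.

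For the right-hand inequality $\kMin{\Low{f}}{k}\leq \Bool{f}{k}$: I would take a monotone read-$k$ $(\lor,\land)$ circuit $F$ of size $\Bool{f}{k}$ computing $f$, and let $B_F$ be its set of exponent vectors. Form the tropical $(\min,+)$ version $F'$ of $F$ (replace $\lor$ by $\min$, $\land$ by $+$); it is constant-free, has the same size, and produces the same set $B_F$, hence computes $g(x)=\min_{b\in B_F}\skal{b,x}$. Writing $A:=\Low{f}$, the read-$k$ property gives, for each $a\in A$, a shadow $b\in B_F$ with $\supp{b}=\supp{a}$ and every entry $b_i\leq k$, so $\skal{a,b}=\sum_{i\in\supp{a}}b_i\leq k|\supp{a}|\leq km$; meanwhile $\skal{a,a}=|\supp{a}|$, so $\skal{a,b}\leq km\cdot\skal{a,a}/|\supp{a}|$ — here I need to be a little careful, since what \cref{lem:struct-trop} wants as the certificate of a factor-$k'$ approximation is $\skal{a,b}\leq k'\skal{a,a}$, i.e.\ $k'=\skal{a,b}/|\supp{a}|\leq k$. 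Combined with $B_F\subseteq\Up{A}$ (which holds by \cref{lem:struct-bool}(ii), since $F$ computes $f$), the converse direction of the structure lemma — the easy ``if'' direction, which says that these two conditions suffice for a factor-$k$ approximation — shows $F'$ approximates $f_A$ within factor $k$, giving $\kMin{A}{k}\leq\Bool{f}{k}$. The one genuine gap to fill is that \cref{lem:struct-trop} is stated only as a \emph{necessary} condition; I expect I will need a short companion statement (or an inline argument) that $B\subseteq\Up{A}$ together with ``for every $a\in A$ some $b\in B$ has $\supp{b}=\supp{a}$ and $\skal{a,b}\leq k\skal{a,a}$'' is \emph{sufficient} for factor-$k$ approximation — this follows because $f_B(x)\geq f_A(x)$ from $B\subseteq\Up{A}$, and $f_B(x)\leq\skal{b,x}\leq k\,f_A(x)$ by picking, for the optimal $a$ at input $x$, its shadow $b$ and using that $x$ is nonnegative so $\skal{b,x}\leq k\skal{a,x}$ when $b\leq ka$ coordinatewise on $\supp{a}$ — wait, that needs $b_i\leq k a_i = k$, which is exactly the read-$k$ hypothesis, so it goes through.

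For the left-hand inequality $\Bool{f}{r}\leq\kMin{\Low{f}}{k}$ with $r=(k-1)m+1$: take a $(\min,+)$ circuit $C$ of size $\kMin{A}{k}$ approximating $f_A$ within factor $k$; by \cref{lem:const-free} assume $C$ is constant-free, with exponent-vector set $B:=B_C$. By \cref{lem:struct-trop}, $B\subseteq\Up{A}$ and for each $a\in A$ there is $b\in B$ with $\supp{b}=\supp{a}$ and $\skal{a,b}\leq k\skal{a,a}=k|\supp{a}|$. Now form the Boolean $(\lor,\land)$ version $F$ of $C$: it produces the same set $B$, and since $\Supp{A}\subseteq\Supp{B}$ (from $\supp{b}=\supp{a}$) and $B\subseteq\Up{A}$, \cref{lem:struct-bool} tells us $F$ computes $f$. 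It remains to bound the ``read'' parameter: for $a\in A$, the good shadow $b$ has $\supp{b}=\supp{a}$ and $\sum_{i\in\supp{b}}b_i\leq k|\supp{a}|$, and since each $b_i\geq 1$ there are $|\supp{a}|$ terms, so no single $b_i$ can exceed $k|\supp{a}|-(|\supp{a}|-1)=(k-1)|\supp{a}|+1\leq (k-1)m+1=r$. Thus every $a\in A$ has an $r$-bounded shadow in $B$, i.e.\ $F$ is a (semantic) read-$r$ circuit computing $f$, giving $\Bool{f}{r}\leq\kMin{A}{k}$. Finally, setting $k=1$ gives $r=1$ and the two inequalities collapse to $\Bool{f}{1}\leq\kMin{A}{1}\leq\Bool{f}{1}$, proving $\kMin{\Low{f}}{1}=\Bool{f}{1}$. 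The main obstacle, as noted, is supplying the ``sufficiency'' half of the structural correspondence that \cref{lem:struct-trop} only states in the ``necessity'' direction; once that easy observation is in place, everything else is bookkeeping with supports and degrees.
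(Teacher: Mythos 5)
Your proposal is correct and follows essentially the same route as the paper: tropicalize a read-$k$ circuit to get the upper bound on $\kMin{\Low{f}}{k}$, booleanize a constant-free approximating $(\min,+)$ circuit (via \cref{lem:const-free} and \cref{lem:struct-trop}) to get the lower bound, and derive the read-$r$ bound from the pigeonhole estimate $b_j \leq k|S| - (|S|-1) = (k-1)|S|+1 \leq r$. The ``sufficiency'' step you flag as a gap is exactly what the paper also supplies inline—it does not cite a converse of \cref{lem:struct-trop} but simply proves $g_A(x)\leq g_B(x)\leq k\cdot g_A(x)$ directly from $B\subseteq\Up{A}$ and the $k$-bounded shadow, which is the same short calculation you write out.
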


\begin{proof}
  Let $A:=\Low{f}\subseteq\{0,1\}^n$ be the set of
  the lowest ones of $f$. To show the inequality
  $\kMin{A}{k}\leq\Bool{f}{k}$, let $F$ be a read-$k$ $(\lor,\land)$
  circuit of size $\Bool{f}{k}$ computing $f$, and let
  $B:=B_F\subseteq\NN^n$ be the set of ``exponent'' vectors produced by
  $F$.
  The tropical $(\min,+)$ version $F'$ of $F$ is a constant-free
  $(\min,+)$ circuit obtained by replacing $\lor$  gates with $\min$ gates,
  and $\land$ gates with addition gates.
  That is, we (again) replace ``addition''   gates by
``addition'' gates, and ``multiplication'' gates by ``multiplication''
gates of the corresponding semirings.
  The circuit $F'$ produces the
  same set $B$ of ``exponent'' vectors. Hence, the circuit $F'$ solves
  the minimization problem $g_B(x)=\min_{b\in B}\skal{b,x}$ on the set
  $B$. The minimization problem on the given set $A$ is
  $g_A(x)=\min_{a\in A}\skal{a,x}$.  It thus remains to show that for
  every input weighting $x\in\pRR^n$ the inequalities $g_A(x)\leq
  g_B(x)\leq k\cdot g_A(x)$ hold. So, take an arbitrary input
  weighting $x\in\pRR^n$.

  Since the Boolean circuit $F$ computes $f$, \cref{lem:struct-bool}
  gives us the inclusion $B\subseteq\Up{A}$, that is, for every vector
  $b\in B$ there is a vector $a\in A$ such that $b\geq a$. Since the
  weights are nonnegative, this gives the first inequality $g_A(x)\leq
  g_B(x)$. To show the second inequality $g_B(x)\leq k\cdot g_A(x)$,
  take a vector $a\in A$ on which the minimum $g_A(x)=\skal{a,x}$
  on the input weighting $x\in\pRR^n$ is
  achieved, and let $S:=\supp{a}$ be the support of $a$. Since $F$ is
  a read-$k$ circuit, there is a vector $b\in B$ such that
  $\supp{b}=S$ and $b_i\leq k$ for all $i\in S$.  Thus, $g_B(x)\leq
  \skal{b,x}=\sum_{i\in S}b_ix_i\leq k\cdot \sum_{i\in S}x_i= k\cdot
  \skal{a,x}= k\cdot g_A(x)$, as desired.

  To show the inequality, $\Bool{f}{r}\leq \kMin{A}{k}$ take a
  tropical $(\min,+)$ circuit $F$ of size $\kMin{A}{k}$
  approximating the (constant-free) minimization problem
  $g_A(x)=\min_{a\in A}\skal{a,x}$ within the factor $k$, and let
   $B:=B_F\subseteq\NN^n$ be the set
  of ``exponent'' vectors produced by the circuit $F$.  By
  \cref{lem:const-free}, we can assume that the circuit $F$ is
  constant-free, that is, has no nonzero constant inputs.  So, the
  polynomial $g_B(x)=\min_{b\in B}\skal{b,x}$ produced by $F$ is also
  constant-free.  The Boolean $(\lor,\land)$ version $F'$ of $F$
  (obtained by replacing $\min$ gates with $\lor$ gates, and addition gates
  with $\land$ gates) produces the same set $B$ of ``exponent'' vectors.
  Since the $(\min,+)$ circuit $F$ approximates the problem $g_A$ within the factor $k$,
  \cref{lem:struct-trop} implies that the set $B$ has the following
  two properties: (i) $B\subseteq\Up{A}$, and (ii) for every vector
  $a\in A$ there is a vector $b\in B$ such that $\supp{b}=\supp{a}$
  and $\skal{a,b}\leq k\cdot\skal{a,a}$. This, in particular, yields
  the inclusion $\Supp{A}\subseteq\Supp{B}$. Together with
  $B\subseteq\Up{A}$, \cref{lem:struct-bool} implies that the
  $(\lor,\land)$ circuit $F'$ computes our Boolean function $f$, and it
  remains to show that $F'$ is a read-$r$
  circuit for $r:=(k-1)m+1$.

  To show this, take an arbitrary lowest one $a\in A$ of $f$, and let
  $S:=\supp{a}$ be its support; hence, $|S|\leq m$.  By
  property (ii), there is a vector $b\in B$ such that $\supp{b}=S$ and
  $\sum_{i\in S}b_i\leq k|S|$. It remains to show that the vector $b$ is $r$-bounded.
 Suppose for the sake of contradiction that
  $b_j\geq r+1=(k-1)m+2$ holds for some position $j\in S$. Since $b_i\geq
  1$ holds for all $i\in S$, we then have $\sum_{i\in S}b_i\geq r+1 +
  (|S|-1)=(k-1)m+2+ (|S|-1) \geq (k-1)|S|+2+ (|S|-1)=k|S|+1$, a
  contradiction with $\sum_{i\in S}b_i\leq k|S|$.
\end{proof}

\section{From Non-Monotone Multilinear to Monotone Read-1}
\label{sec:multilin}

Due to the lack of strong lower bounds for  (non-monotone)
arithmetic $(+,\times,-)$ circuits, and because they seem to be the
most intuitive circuits for computing \emph{multilinear} polynomials,
a successful approach has been to consider a restriction called
``multilinearity'' of arithmetic circuits, first introduced by Nisan
and Wigderson~\cite{nisan97}.

Recall that a polynomial is multilinear if it does not have any
variable with degree larger than $1$.  An arithmetic $(+,\times,-)$
circuit $F$ is \emph{syntactically multilinear} if
the two subcircuits rooted at inputs of any multiplication $(\times)$  gate
have no input variables in common.
The circuits $F$ is (semantically)
\emph{multilinear} if the polynomial \emph{functions} computed at its gates polynomials are multilinear. For example, the polynomial
function $f=y$ computed at a gate producing the polynomial
$P=x^2+y-x^2$ is multilinear. Raz~\cite[Proposition~2.1]{Raz-ACM} observed that  minimal semantically multilinear $(+,\times,-)$ \emph{formulas}
(circuits whose underlying graphs are trees) are syntactically
multilinear. It remains not clear if every semantically
multilinear \emph{circuit} can be efficiently simulated by a
syntactically multilinear circuit.

There are several impressing results concerning multilinear (as well
as syntactically multilinear) arithmetic $(+,\times,-)$ circuits and
formulas; see, for example, the surveys~\cite{ChenKW11,Shpilka2010}. In
particular, Raz~\cite{Raz-ACM} proved that any multilinear arithmetic
\emph{formula} computing the permanent or the determinant of an
$n\times n$ matrix is of size $n^{\Omega(\log n)}$.  Furthermore,
Raz~\cite{Raz06} proved that a gap between multilinear arithmetic
\emph{formulas} and \emph{circuits} can be super-polynomial. Proving
super-polynomial lower bounds of the size of multilinear arithmetic
\emph{circuits} remains an open problem.

Due to the lack of even super-linear lower bounds on the size of (unrestricted)
DeMorgan $(\lor,\land,\neg)$ circuits, and by analogy with arithmetic
circuits, the multilinearity restriction was also imposed on DeMorgan
circuits.  Recall that a DeMorgan $(\lor,\land,\neg)$
circuit\footnote{Let us note that every Boolean $(\lor,\land,\neg)$
  circuit (with negations applied to any gates, not necessarily to only
  inputs) can be easily transformed into an equivalent DeMorgan
  $(\lor,\land,\neg)$ circuit by only doubling the circuit size (see,
  e.g., \cite[p.~195]{wegener}): we double all AND and OR gates, one
  output of a pair is negated, the other one not; after that, we can
  move negation gates toward the input variables by applying the
  DeMorgan rules.}  $F$ is an $(\lor,\land)$ circuit whose inputs are
the variables $x_1,\ldots,x_n$ and their negations
$\bar{x}_1,\ldots,\bar{x}_n$.  As before, the \emph{size} of a circuit
is the total number of gates in it. A \emph{monotone} Boolean circuit
is a DeMorgan circuit without negated input literals as inputs.

A DeMorgan $(\lor,\land,\neg)$ circuit $F$ is \emph{syntactically
  multilinear} if the two subcircuits rooted at inputs of any AND gate
have no input literals of the same variable in common. For example,
the circuit $F=(x\lor x\bar{y})y$ is \emph{not} syntactically
multilinear. Sengupta and Venkateswaran~\cite{sengupta} considered
the \emph{connectivity} function which accepts an input
$x\in\{0,1\}^{\binom{n}{2}}$ iff the subgraph $G_x$ of $K_n$  specified by the characteristic $0$-$1$ vector $x$ of its set of edges is connected. By adopting the proof of Jerrum
and Snir~\cite{jerrum} of a lower bound $(4/3)^{n-1}/n$ on the minimum
size of monotone arithmetic $(+,\times)$ circuits computing the directed spanning
tree polynomial, it was shown in~\cite{sengupta} that every monotone
syntactically multilinear $(\lor,\land)$ circuit computing the connectivity
function must have at least $\sqrt{(4/3)^{n-1}/n}$ gates.

Krieger~\cite{Krieger07} has shown that if the set $\Low{f}$ of the lowest
ones of a monotone Boolean function $f$ is cover-free (that is, if
$a,b,c\in\Low{f}$ and $a+b\geq c$ imply $c\in\{a,b\}$), then every
syntactically multilinear DeMorgan $(\lor,\land,\neg)$ circuit computing $f$
must have at least $|\Low{f}|-1$ gates.

\begin{rem}
 As mentioned in \cref{sec:explicit},  already in 1976, Schnorr~\cite{schnorr} has proved a general lower bound
  $\arithm{A}\geq |A|-1$ on the monotone \emph{arithmetic} circuit
  complexity of polynomials, whose sets $A\subseteq\NN^n$ of exponent vector are
  cover-free.  This surprising similarity of  Krieger's bound
  in~\cite{Krieger07} with Schnorr's bound, as well as a possibility
  to adopt in~\cite{sengupta} the argument of Jerrum and Snir, already
  served as an indication that there ``should'' be some \emph{general}
  relation between multilinear DeMorgan $(\lor,\land,\neg)$ circuits and
  monotone arithmetic circuits.  Our
  \cref{thm:multilinear,thm:mon-multilin} below give such a relation,
  even for semantically (not only syntactically) multilinear
  DeMorgan $(\lor,\land,\neg)$ circuits: such circuits are \emph{not} stronger
  than monotone arithmetic circuits.
\qed\end{rem}

Following the analogy with arithmetic circuits, Ponnuswami and
Venkateswaran~\cite{Ponnuswami04} relaxed the \emph{syntactic}
multilinearity restriction of Boolean circuits to their
\emph{semantic} multilinearity. A Boolean function
$f(x_1,\ldots,x_n)$ \emph{depends} on the $i$th variable $x_i$ if
$f(a)\neq f(b)$ holds for some two vectors $a,b\in\{0,1\}^n$ that
differ only in the $i$th position. The following simple fact is well known;
see, for example, \cite[Theorem~1.17]{hammer}.

\begin{fact}[Folklore]\label{fact3}
  A Boolean function $f(x_1,\ldots,x_n)$ depends on the $i$th variable
  $x_i$ iff $x_i$ or $\bar{x}_i$ appears in at least one prime
  implicant of~$f$.
\end{fact}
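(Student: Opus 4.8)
The plan is to prove the two implications separately, in each direction translating between a pair of input vectors differing only in coordinate $i$ (a witness for dependence on $x_i$) and the operation of deleting literals from a term. It is convenient to first dispose of the constant functions: a constant $f$ depends on no variable, and under the usual convention it has no prime implicant containing a literal, so both sides of the equivalence are false; hence I would assume $f$ is non-constant from the outset. Throughout, if $\ell\in\{x_i,\bar{x}_i\}$ is a literal of $x_i$ occurring in a term $t$, I would write $t=\ell\wedge t'$, where $t'$ is the AND of the remaining literals of $t$ (with $t'$ the empty conjunction, i.e.\ $\equiv 1$, when $t=\ell$).

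For the ``if'' direction, suppose $t$ is a prime implicant of $f$ containing a literal $\ell$ of $x_i$, and write $t=\ell\wedge t'$ as above. First I would use primality: $t'$ is a proper subterm of $t$, hence not an implicant of $f$ (when $t=\ell$, primality forces $f\not\equiv 1$, which plays the same role), so there is a vector $a$ with $t'(a)=1$ and $f(a)=0$. Since $t'$ does not mention $x_i$, flipping the $i$th coordinate of $a$ does not change $t'(a)$; moreover $a$ must falsify $\ell$, for otherwise $t(a)=1$ would give $f(a)=1$. Letting $b$ be the vector obtained from $a$ by setting coordinate $i$ so that $\ell(b)=1$, I would get $t(b)=1$, hence $f(b)=1\neq 0=f(a)$, while $a$ and $b$ differ only in coordinate $i$; so $f$ depends on $x_i$.

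For the ``only if'' direction, suppose $f$ depends on $x_i$, witnessed by $a,b$ differing only in coordinate $i$ with, say, $f(a)=1$ and $f(b)=0$. The minterm $t_a=\bigwedge_{a_j=1}x_j\wedge\bigwedge_{a_j=0}\bar{x}_j$ is then an implicant of $f$ (it is satisfied only by $a$), and it contains the literal $\ell$ of $x_i$ selected by $a_i$. I would shrink $t_a$ to a prime implicant $t$ of $f$ by repeatedly deleting literals while the implicant property survives; then $t$ is a subterm of $t_a$ with $t_a\leq t\leq f$, so in particular $t(a)=1$. The key step is to check that $t$ still contains $\ell$: since $t$ is a subterm of $t_a$, the only literal of $x_i$ it could contain is $\ell$; if it contained none, then $t$ would be independent of $x_i$, so $t(b)=t(a)=1$, forcing $f(b)=1$ and contradicting $f(b)=0$. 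Hence some prime implicant of $f$ contains $x_i$ or $\bar{x}_i$.

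The argument is entirely elementary; the only place calling for care --- the closest thing to an obstacle --- is the bookkeeping at single-literal terms, where the residual term $t'$ is the empty conjunction: there ``primality'' must be read as the statement that the empty conjunction is not an implicant of $f$, equivalently $f\not\equiv 1$, and it is exactly this reading that furnishes the vector $a$ with $f(a)=0$ in the first direction. Once this convention is pinned down, both directions run without further complications.
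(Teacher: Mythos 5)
Your proof is correct. The \emph{if} direction is essentially the paper's argument: write the prime implicant as $t=\ell\wedge t'$, use primality to find $a$ with $t'(a)=1$ and $f(a)=0$, observe $\ell(a)=0$, and flip coordinate $i$ to land in $f^{-1}(1)$. The \emph{only if} direction, however, takes a genuinely different route. The paper argues by contrapositive: since $f$ is the OR of its prime implicants, if none of them mentions $x_i$ or $\bar{x}_i$, the whole disjunction is independent of $x_i$. You instead argue directly and constructively: from a witness pair $(a,b)$ differing only in coordinate $i$ with $f(a)=1$, $f(b)=0$, you form the minterm $t_a$ of $a$ (which is an implicant of $f$ and contains the literal $\ell$ of $x_i$ determined by $a_i$), shrink it to a prime implicant $t$, and note that if $\ell$ were dropped then $t$ would not mention $x_i$ and hence would also be satisfied by $b$, contradicting $f(b)=0$. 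Your version actually exhibits a prime implicant containing the literal rather than deducing its existence, and it does not rely on the disjunctive representation of $f$; the paper's version is shorter once that representation is taken as known. You are also more explicit than the paper about the bookkeeping at single-literal prime implicants (where $t'$ is the empty conjunction) and about excluding constant functions at the outset; the paper's phrasing silently glides past both points, and in fact, under the paper's own definition of ``term'' (a nonempty AND of literals), the constant-$\mathbb{1}$ function would formally have every single literal as a prime implicant, so your care here is warranted, not pedantic.
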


\begin{proof}
  The ``only if'' direction follows from the obvious fact that every
  Boolean function $f$ is an OR of its prime implicants.  So, if
  neither $x_i$ nor $\bar{x}_i$ appears in any prime implicant of $f$,
  then $f$ does not depend on the $i$th variable.  To show the ``if''
  direction, let $t=zt'$ be a prime implicant of $f$, where
  $z\in\{x_i,\bar{x}_i\}$. Since the implicant $t$ is prime, the term
  $t'$ is not an implicant of $f$. That is, there is a vector
  $a\in\{0,1\}^n$ such that $t'(a)=1$ but $f(a)=0$ and, hence, also
  $t(a)=0$ (because $t\leq f$). Let $b$ be the vector $a$ with its
  $i$th bit $a_i$ replaced by $1-a_i$. Then $t(b)=1$ and, hence, also
  $f(b)=1$, meaning that the function $f$ depends on the $i$th
  variable.
\end{proof}

Say that two Boolean functions are \emph{independent} if they
depend on disjoint sets of variables.

\begin{dfn}[Multilinear circuits]
  \rm A DeMorgan $(\lor,\land,\neg)$ circuit $F$ is \emph{multilinear}
  (or \emph{semantically multilinear}) if the two Boolean functions
  $g$ and $h$ computed at the inputs to any AND gate are independent.
\end{dfn}
By \cref{fact3}, the functions $g$ and $h$  are independent iff their prime implicants share no common variables (negated or not). However, the terms actually \emph{produced} at the gates \emph{computing} these functions
\emph{can} share common variables. This explains the use of the adjective
``semantically.''
For example, the circuit $F=(x\lor xy)(\bar{y}\lor \bar{y}z)$ is not
syntactically multilinear, but is (semantically) multilinear because
$g=x\lor xy$ depends only on $x$, while $h=\bar{y}\lor \bar{y}z$
depends only on~$y$.

The \emph{upward closure} of a Boolean function
$f:\{0,1\}^n\to\{0,1\}$ is the monotone Boolean function
\[
\up{f}(x):=\bigvee_{z\leq x}f(z)\,.
\]
For example, the upward closure of the parity function
$f=x_1\oplus x_2\oplus\cdots\oplus x_n$ is $\up{f}=x_1\lor
x_2\lor\cdots\lor x_n$.  Note that $\up{f}=f$ holds for
\emph{monotone} functions~$f$.
Also note that $\up{(g\lor h)}=\up{g}\lor\up{h}$ holds for any Boolean functions
$g,h:\{0,1\}^n\to\{0,1\}$. Thus, if $\PI{f}$ is the set of all prime implicants of a Boolean function
$f:\{0,1\}^n\to\{0,1\}$, then
\[
\up{f}=\up{\big(\bigvee_{t\in \PI{f}}t\big)}=\bigvee_{t\in \PI{f}}\up{t}=\bigvee_{t\in \PI{f}}\pos{t}\,,
\]
where $\pos{t}$ is the \emph{positive factor} of a term $t$ obtained from $t$ by replacing every negated literal
$\bar{x}_i$ with constant~$1$. Thus, the upward closure $\up{f}$ of any
Boolean function $f$ is the OR of positive factors of prime implicants
of $f$.

A \emph{lowest one} of a (not necessarily monotone) Boolean function
$f:\{0,1\}^n\to\{0,1\}$ is a vector $a\in\{0,1\}^n$ such that $f(a)=1$
but $f(b)=0$ for all $b < a$; for vectors $a,b\in\RR^n$ we write $b<a$
if $b\leq a$ and $b_i < a_i$ for at least one position~$i$.
 Let, as before, $\Low{f}\subseteq
f^{-1}(1)$ denote the set of all lowest ones of~$f$. For example,
the set $\{\vec{e}_1,\ldots,\vec{e}_n\}$ of $n$ unit vectors is the
set of the lowest ones of $x_1\oplus x_2\oplus\cdots\oplus x_n$ as well as
of $x_1\lor x_2\lor\cdots\lor x_n$. Note that, unlike for \emph{monotone} Boolean functions, $a\in\Low{f}$ does not exclude that $f(c)=0$ holds for some vectors $c\geq a$. For example, $a=(1,0)$ is a lowest one of the function $f=x\bar{y}\lor\bar{x}y=x\oplus y$, but $f(c)=0$ for $c=(1,1)$.

\begin{rem}\label{rem:env-of-up}
It is easy to
verify that the lowest ones of a function $f:\{0,1\}^n\to\{0,1\}$ and of its downward closure $\up{f}$ are  the
same, that is, $\Low{\up{f}}=\Low{f}$ holds. Indeed, if $a\in
\Low{f}$, then $f(a)=1$ but $f(b)=0$ for all $b<a$. Hence, also
$\Low{\up{f}}(a)=1$ but $\Low{\up{f}}(b)=0$ for all $b<a$. This shows
the inclusion $\Low{f}\subseteq \Low{\up{f}}$.  If $a\in
\Low{\up{f}}$, then $\Low{\up{f}}(a)=1$ but $\Low{\up{f}}(b)=0$ and,
hence, also $f(b)=0$ holds for all $b<a$. Since $\up{f}(a)=1$ still holds,
this can happen only if
$f(a)=1$. Hence, the converse inclusion $\Low{\up{f}}\subseteq
\Low{f}$ also holds.
\qed
\end{rem}

For a Boolean function $f$, let $\linBool{f}$ denote the minimum size
of a (semantically) multilinear DeMorgan $(\lor,\land,\neg)$ circuit
computing $f$. For a \emph{monotone} Boolean function $f$, let
$\mlinBool{f}$ denote the minimum size of a \emph{monotone}
(semantically) multilinear $(\lor,\land)$ circuit computing~$f$.
It is clear that $\linBool{f}\leq  \mlinBool{f}$ holds for every \emph{monotone} Boolean function~$f$.

\begin{thm}[Arbitrary functions]\label{thm:multilinear}
  For every Boolean function $f$, we have
  \[
  \arithm{\lenv{\Low{f}}}\leq \Bool{\up{f}}{1} \leq
  \mlinBool{\up{f}}\leq \linBool{f}\,.
  \]
\end{thm}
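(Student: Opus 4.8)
The plan is to prove the displayed chain by establishing its three inequalities separately, from left to right.

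\emph{Leftmost inequality} $\arithm{\lenv{\Low{f}}}\leq \Bool{\up{f}}{1}$. This will be essentially free. Since $\up{f}$ is a monotone Boolean function, \cref{thm:envel} applied to $\up{f}$ gives $\arithm{\lenv{\Low{\up{f}}}}\leq\Bool{\up{f}}{1}$, and by \cref{rem:env-of-up} we have $\Low{\up{f}}=\Low{f}$, hence $\lenv{\Low{\up{f}}}=\lenv{\Low{f}}$.

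\emph{Middle inequality} $\Bool{\up{f}}{1}\leq\mlinBool{\up{f}}$. I would prove the stronger fact that \emph{every} monotone multilinear $(\lor,\land)$ circuit is automatically a read-$1$ circuit. So let $F$ be such a circuit computing the monotone function $g=\up{f}$. By \cref{lem:struct-bool} the inclusion $B_F\subseteq\Up{(\Low{g})}$ already holds, so by the definition of a read-$1$ circuit it remains to show $\Low{g}\subseteq B_F$, i.e.\ that the $0$-$1$ characteristic vector of every prime implicant of $g$ lies in $B_F$. I would show, by induction along the circuit, that for every gate $v$ computing a (necessarily monotone) function $g_v$ and every prime implicant $p$ of $g_v$, the $0$-$1$ exponent vector of $p$ belongs to $B_v$. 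The input case is immediate. At an OR gate $v=u\lor w$ we have $B_v=B_u\cup B_w$; since $g_u,g_w$ are monotone, every prime implicant of $g_v=g_u\lor g_w$ is a prime implicant of $g_u$ or of $g_w$ (an easy consequence of monotonicity, as already noted in the paper), and the inductive hypothesis applies. At an AND gate $v=u\land w$ we have $B_v=B_u+B_w$ (Minkowski sum), and multilinearity of $F$ means that $g_u$ and $g_w$ depend on disjoint sets of variables; hence every prime implicant $p$ of $g_v=g_u\land g_w$ has the form $p=p_u\land p_w$ with $p_u,p_w$ prime implicants of $g_u,g_w$ and $\supp{p}=\supp{p_u}\sqcup\supp{p_w}$, so the $0$-$1$ vector of $p$ is the sum of the $0$-$1$ vectors of $p_u$ and $p_w$, which lie in $B_u$ and $B_w$ by induction; thus it lies in $B_v=B_u+B_w$. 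At the output gate this gives $\Low{g}\subseteq B_F$.

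\emph{Rightmost inequality} $\mlinBool{\up{f}}\leq\linBool{f}$ — this is the main point. Given a multilinear DeMorgan $(\lor,\land,\neg)$ circuit $F$ computing $f$, I would form the monotone circuit $\pos{F}$ by replacing each negated input $\bar{x}_i$ with the constant $1$ and then eliminating constants without increasing size (as the paper already observes for monotone circuits). The goal is to show that $\pos{F}$ is a monotone multilinear $(\lor,\land)$ circuit computing $\up{f}$, which yields $\mlinBool{\up{f}}\leq|\pos{F}|\leq|F|$ and hence the inequality. I would prove by induction along the circuit that each subcircuit $\pos{v}$ computes $\up{g_v}$ and is multilinear, relying on three facts: (1) $\up{(g\lor h)}=\up{g}\lor\up{h}$, which is noted in the paper; (2) if $g$ and $h$ depend on disjoint sets of variables, then $\up{(g\land h)}=\up{g}\land\up{h}$, proved by taking witnesses $z'\le x$ for $\up{g}(x)=1$ and $z''\le x$ for $\up{h}(x)=1$ and gluing them into a single $z\le x$ that agrees with $z'$ on $g$'s variables and with $z''$ on $h$'s variables; and (3) $\up{g}$ depends only on variables on which $g$ depends, which ensures that the disjoint-support condition is inherited at every AND gate so that $\pos{F}$ stays multilinear. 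The base cases are $\pos{(x_i)}=x_i=\up{x_i}$ and $\pos{(\bar{x}_i)}=1=\up{(\neg x_i)}$ (the latter because $\vec{0}\le x$ always).

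The step I expect to be the real obstacle is the rightmost inequality, and within it the claim that the substitution $\bar{x}_i\mapsto 1$ yields precisely $\up{f}$ and not something strictly larger. This \emph{fails} without multilinearity: e.g.\ $F=(x_1\land\bar{x}_1)\lor x_2$ computes $x_2$, yet $\pos{F}=x_1\lor x_2\neq\up{(x_2)}=x_2$; so the argument must genuinely use the disjoint-support hypothesis at every AND gate, via facts (2) and (3) above. A secondary, routine point is verifying that removing the constant inputs $1$ from $\pos{F}$ keeps the circuit multilinear, which holds because a constant function is independent of all variables and because contracting a gate of the form $1\land u$ into $u$ leaves the functions computed at all remaining gates unchanged.
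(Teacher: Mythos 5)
Your proof is correct and follows the same overall decomposition as the paper: the left inequality via \cref{thm:envel} and $\Low{\up{f}}=\Low{f}$, the middle via an inductive argument (as in \cref{lem:multilin-to-read-1}) that monotone multilinear circuits are read-$1$, and the right via showing $\pos{F}$ stays multilinear and computes $\up{f}$ (as in \cref{lem:multilin-pos}), relying on the same three facts, namely $\up{(g\lor h)}=\up{g}\lor\up{h}$, $\up{(g\land h)}=\up{g}\land\up{h}$ for independent $g,h$, and that $\up{g}$ depends on no more variables than $g$. The only cosmetic deviation is that for the second fact you glue two arbitrary witnesses $z'\leq x$ and $z''\leq x$ along the disjoint supports, whereas the paper derives it from \cref{fact4} about lowest ones; both are fine.
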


We will actually prove slightly stronger results:
(i) instead of just an inequality $\Bool{\up{f}}{1} \leq
  \mlinBool{\up{f}}$, we shown that every multilinear \emph{monotone} $(\lor,\land)$ circuit \emph{is} a read-$1$ circuit (\cref{lem:multilin-to-read-1}), and (ii)
  instead of just an inequality $\mlinBool{\up{f}}\leq \linBool{f}$, we shown that
  if a multilinear DeMorgan $(\lor,\land,\neg)$ circuit $F$ computes a Boolean function $f$, then the monotone $(\lor,\land)$ circuit $\pos{F}$, obtained from $F$ by replacing every negated input variable $\bar{x}_i$ with constant $1$, is also multilinear and computes~$\up{f}$ (\cref{lem:multilin-pos}).

Let us first prove the following easy consequence of
\cref{thm:multilinear} for multilinear $(\lor,\land,\neg)$ circuits
computing \emph{monotone} Boolean functions $f$, and then prove \cref{thm:multilinear}
itself.

\begin{thm}[Monotone functions]\label{thm:mon-multilin}
  For every monotone Boolean function $f$, we have
  \[
  \arithm{\lenv{\Low{f}}}\leq \Bool{f}{1} \leq \linBool{f} =
  \mlinBool{f} \leq\arithm{\Low{f}}\,.
  \]
  In particular, of $f$ is also homogeneous, then
  $\arithm{\Low{f}}=\Bool{f}{1}=\linBool{f}=\mlinBool{f}$.
\end{thm}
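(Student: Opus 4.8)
The plan is to reduce everything to \cref{thm:multilinear} and \cref{thm:envel}, using only that $\up{f}=f$ for monotone $f$, the trivial inequality $\linBool{f}\le\mlinBool{f}$, and one small structural observation. Since $f$ is monotone, $\up{f}=f$, so plugging $f$ into \cref{thm:multilinear} already yields
\[
\arithm{\lenv{\Low{f}}}\le\Bool{f}{1}\le\mlinBool{f}\le\linBool{f}.
\]
As noted in \cref{sec:multilin}, every \emph{monotone} multilinear $(\lor,\land)$ circuit is in particular a multilinear DeMorgan circuit, so $\linBool{f}\le\mlinBool{f}$ holds as well; combining the two bounds relating $\mlinBool{f}$ and $\linBool{f}$ forces $\linBool{f}=\mlinBool{f}$. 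This gives the first three comparisons of the theorem, $\arithm{\lenv{\Low{f}}}\le\Bool{f}{1}\le\linBool{f}=\mlinBool{f}$, and it remains to establish $\mlinBool{f}\le\arithm{\Low{f}}$ and then read off the homogeneous case.

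For the last inequality I would re-use the construction behind the bound $\sBool{f}{1}\le\arithm{\Low{f}}$ from \cref{thm:envel}: take a constant-free monotone arithmetic $(+,\times)$ circuit $F$ of size $\arithm{\Low{f}}$ producing a polynomial similar to $P_{\Low{f}}$ (by \cref{fact2} it produces what it computes, so its set of produced exponent vectors is exactly $\Low{f}\subseteq\{0,1\}^n$), and Booleanize it into a circuit $F'$; by \cref{lem:struct-bool}, $F'$ computes $f$, and it is \emph{syntactically} read-$1$ because $\Low{f}$ consists of $0$-$1$ vectors. The one new point needed is that \emph{every syntactically read-$1$ monotone $(\lor,\land)$ circuit is (semantically) multilinear}. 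To see this, observe that the set of exponent vectors produced at any gate is obtained purely by (nonempty) unions and Minkowski sums --- no cancellations --- so, arguing by induction on the distance to the output gate, every vector produced at an internal gate $v$ can be enlarged, by adding some vector of $\NN^n$, to a vector of the output set $B_{F'}\subseteq\{0,1\}^n$; hence $B_v\subseteq\{0,1\}^n$ for \emph{every} gate $v$. In particular, at any $\land$-gate $v=u\land w$ we have $B_u+B_w=B_v\subseteq\{0,1\}^n$, which forces $\supp{b}\cap\supp{b'}=\emptyset$ for all $b\in B_u$ and $b'\in B_w$; thus the variables occurring in the polynomial produced at $u$ are disjoint from those occurring at $w$, so the Boolean functions computed at $u$ and at $w$ are independent, i.e.\ the circuit is multilinear. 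Consequently $\mlinBool{f}\le\sBool{f}{1}\le\arithm{\Low{f}}$, completing the chain.

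The ``in particular'' clause is then immediate: if $f$ is homogeneous, then $\lenv{\Low{f}}=\Low{f}$, so the leftmost term $\arithm{\lenv{\Low{f}}}$ equals the rightmost term $\arithm{\Low{f}}$ and the whole chain collapses to a string of equalities. The argument is routine given \cref{thm:multilinear} and \cref{thm:envel}; the only place demanding a little care is the adjective ``semantically'' in the observation above --- a gate may \emph{produce} a polynomial mentioning a variable on which the \emph{function} it computes does not depend --- but since the produced polynomial at each gate is multilinear (all exponent entries in $\{0,1\}$), the computed function depends only on the variables actually occurring in that polynomial, which is exactly what the disjointness argument needs.
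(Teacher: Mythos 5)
Your proposal is correct and follows essentially the same route as the paper: it reduces the first three comparisons to \cref{thm:multilinear} (using $\up{f}=f$) plus the trivial $\linBool{f}\leq\mlinBool{f}$, and establishes $\mlinBool{f}\leq\arithm{\Low{f}}$ by Booleanizing a minimal monotone arithmetic circuit for a polynomial similar to $P_{\Low{f}}$ and observing that the resulting circuit is multilinear. The only difference is presentational: where the paper asserts in one sentence that multilinearity of the output polynomial forces disjoint variable-sets at inputs of every multiplication gate, you spell out the underlying top-down induction showing that every gate produces only $0$-$1$ exponent vectors, and isolate the general observation that a syntactically read-$1$ monotone $(\lor,\land)$ circuit is automatically multilinear; this is a useful clarification but not a different argument.
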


\begin{proof}
  The inequality $\arithm{\lenv{\Low{f}}}\leq\Bool{f}{1}$ is given by
  \cref{thm:envel}.  Since $f$ is monotone, we have $\up{f}=f$. So,
  the equality $\linBool{f}=\mlinBool{f}$ follows from a trivial upper
  bound $\linBool{f}\leq \mlinBool{f}$ and from the lower bound
  $\linBool{f}\geq \mlinBool{\up{f}}=\mlinBool{f}$ given by
  \cref{thm:multilinear}. It therefore remains to prove the upper
  bound $\mlinBool{f}\leq \arithm{\Low{f}}$.

  For this, let $A:=\Low{f}$ be the set of the lowest ones of $f$, and
  take a monotone  arithmetic constant-free $(+,\times)$ circuit $F$ of
  size $s=\arithm{A}$ computing some polynomial $P(x)=\sum_{a\in
    A}\const{a}\prod_{i=1}^nx_i^{a_i}$ similar to $\sum_{a\in
    A}\prod_{i=1}^nx_i^{a_i}$. By \cref{fact2}, the circuit $F$ also
  \emph{produces} the polynomial $P$. Let $F'$ be the Boolean $(\lor,\land)$ version
  of the circuit $F$ obtained by replacing each $+$-gate by a
  $\lor$-gate, and each $\times$-gate by $\land$-gate.  The circuit
  $F'$ produces the same set $A$ of exponent vectors and, hence,
  computes our Boolean function $f(x)=\bigvee_{a\in
    A}\bigwedge_{i\in\supp{a}}x_i$. Since $A$ consists of only $0$-$1$
  vectors, the polynomial $P$ produced by the arithmetic circuit $F$
  is multilinear, meaning that the polynomials produced at inputs of
  any multiplication gate cannot share any variables in common. Thus,
  the Boolean version $F'$ of $F$ is (even syntactically) multilinear.
\end{proof}

\begin{rem}
  Ponnuswami and Venkateswaran~\cite{Ponnuswami04} proved a lower bound
  $\mlinBool{f}=\Omega(2^{.459 n})$ for the perfect matching function
  $f=\match{n}$ (which we considered in \cref{sec:explicit}).  On the
  other hand, using arguments tighter than we used in \cref{ex:matchings},
  Jerrum and Snir~\cite{jerrum} have proved a lower bound
  $\arithm{\Low{f}}\geq n(2^{n-1}-1)$ for $f$.  The function
  $f$ is homogeneous (each prime implicant has $n$
  variables).  So, by \cref{thm:mon-multilin}, the same lower bound  $\linBool{f}\geq n(2^{n-1}-1)$ holds even for non-monotone circuits.
  \qed\end{rem}

\begin{rem}
Lingas~\cite{lingas} has proved a lower bound $\mlinBool{f}\geq \arithm{\Low{f}}/\bigO(m^2)$ for every monotone homogeneous Boolean function, where $m$ is the number of variables in the prime implicants of $f$. On the other hand,
\cref{thm:mon-multilin} shows that, for homogeneous monotone
functions $f$, we actually have the equality
$\mlinBool{f}=\arithm{\Low{f}}$, and even the equality
$\linBool{f}=\arithm{\Low{f}}$. That is, multilinear (not necessarily
monotone)
DeMorgan $(\lor,\land,\neg)$ circuits computing monotone homogeneous Boolean functions have the \emph{same} power as monotone arithmetic constant-free
$(+,\times)$ circuits.
\qed\end{rem}

\subsection{Proof of Theorem~\ref{thm:multilinear}}
Since lowest ones of a Boolean function $f$ and of its upward closure
$g:=\up{f}$ are the same (see \cref{rem:env-of-up}), we have $\lenv{\Low{f}}=\lenv{\Low{g}}$.  By
\cref{thm:envel}, $\arithm{\lenv{\Low{f}}}=\arithm{\lenv{\Low{g}}} \leq \Bool{g}{1}$. This shows the first
inequality in~\cref{thm:multilinear}. To prove the remaining two
inequalities of~\cref{thm:multilinear}, we first establish (in
\cref{lem:properties}) the behavior of sets of lowest ones
as well as of upward closures of functions computed at
the gates of DeMorgan $(\lor,\land,\neg)$ circuits.
When doing this, we will use the following simple property of independent Boolean functions following from \cref{fact3}. For $0$-$1$ vectors $a,b\in\{0,1\}^n$, $a\lor
b\in\{0,1\}^n$ denotes their componentwise~OR. For example, if
$a=(1,1,0)$ and $b=(0,1,1)$ then $a+b=(1,2,1)$ but $a\lor b=(1,1,1)$.

\begin{fact}\label{fact4}
  Let $g,h:\{0,1\}^n\to\{0,1\}$ be Boolean functions, and let
  $b\in\Low{g}$ and $c\in\Low{h}$ be their lowest ones. If $g$ and $h$
  are independent, then $\supp{b}\cap\supp{c}=\emptyset$
  \mbox{\rm (}hence, also $b\lor c=b+c$\mbox{\rm )} and $g(b\lor
  c)=h(b\lor c)=1$.
\end{fact}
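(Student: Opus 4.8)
The plan is to first determine, for each lowest one, which coordinates of the cube can possibly be nonzero, and then to use independence to separate these coordinate sets. Write $D_g\subseteq[n]$ for the set of indices $i$ on which $g$ depends, and define $D_h$ analogously. I would first show that $\supp{b}\subseteq D_g$ for every $b\in\Low{g}$: if some $i\in\supp{b}$ were not in $D_g$, then the vector $b'$ obtained from $b$ by resetting its $i$th coordinate to $0$ would satisfy $b'<b$ and, since $g$ does not depend on $x_i$, also $g(b')=g(b)=1$, contradicting that $b$ is a lowest one of $g$. The identical argument gives $\supp{c}\subseteq D_h$.

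Next I would invoke \cref{fact3}: the hypothesis that $g$ and $h$ are independent---that no variable occurs in a prime implicant of both---is precisely the statement $D_g\cap D_h=\emptyset$. Combined with the two inclusions above, this yields $\supp{b}\cap\supp{c}\subseteq D_g\cap D_h=\emptyset$, and from disjoint supports the identity $b\lor c=b+c$ is immediate.

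For the remaining claim I would observe that $b$ and $b\lor c$ agree on every coordinate in $D_g$: off $\supp{c}$ the two vectors coincide, and $\supp{c}\subseteq D_h$ is disjoint from $D_g$. Since $g$ does not depend on any coordinate outside $D_g$, flipping such coordinates one at a time never alters the value of $g$, so $g$ is a function of the $D_g$-coordinates alone; hence $g(b\lor c)=g(b)=1$. The symmetric argument, using $\supp{b}\subseteq D_g$ together with $D_g\cap D_h=\emptyset$, gives $h(b\lor c)=h(c)=1$.

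The whole argument is routine. The only step that deserves a moment's care is the passage from ``$g$ does not depend on $x_i$ for each $i\notin D_g$'' to ``$g$ depends only on the $D_g$-coordinates,'' which is exactly the iterated single-coordinate flip described above; everything else follows directly from the definitions and \cref{fact3}.
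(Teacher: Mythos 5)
Your proof is correct and follows essentially the same route as the paper's: show that the support of a lowest one lies inside the dependency set of the function, then use independence ($D_g\cap D_h=\emptyset$) to get disjoint supports, and finally use the fact that the two vectors agree on the relevant dependency coordinates to conclude $g(b\lor c)=g(b)=1$ and symmetrically for $h$. The only cosmetic difference is that the paper derives $\supp{b}\subseteq D_g$ by exhibiting a prime implicant $t$ with $t(b)=1$ and showing $\supp{b}$ equals the set of its positive literals (then invoking \cref{fact3}), whereas you argue directly from the definition of ``depends on'' and the minimality of $b$ --- a slightly shorter path to the same inclusion. One tiny slip: you frame $D_g\cap D_h=\emptyset$ as a consequence of \cref{fact3}, but in this paper independence is \emph{defined} as depending on disjoint sets of variables, so that disjointness is the hypothesis verbatim; \cref{fact3} is only needed if one prefers the prime-implicant reformulation.
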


\begin{proof}
  Note that $t(b)=1$ holds for some prime implicant $t=\bigwedge_{i\in
    S}x_i\land\bigwedge_{j\in T}\bar{x}_j$ of $g$ with $S=\supp{b}$:
  we have $S\subseteq\supp{b}$ since $t(b)=1$, and $\supp{b}\subseteq
  S$ since $g(b)=1$ and $b$ is a \emph{lowest} one of~$g$. So, since
  $g$ and $h$ are independent, the disjointness
  $\supp{b}\cap\supp{c}=\emptyset$ follows from \cref{fact3}.  In
  particular, $b+c=b\lor c$ is a $0$-$1$ vector.  Since the function
  $g$ does not depend on any variable $x_i$ with $i\in\supp{c}$, we
  have $g(b\lor c)=g(b+c)=g(b+\vec{0})=g(b)=1$. Similarly, since
  function $h$ does not depend on any variable $x_i$ with
  $i\in\supp{b}$, we also have $h(b\lor
  c)=h(b+c)=h(\vec{0}+c)=h(c)=1$.
\end{proof}

Recall that the Minkowski sum of two sets $A,B\subseteq\RR^n$ is the set $A+B=\{a+b\colon a\in A, b\in B\}$.

\begin{lem}\label{lem:properties}
  Let $g,h:\{0,1\}^n\to\{0,1\}$ be Boolean functions.
  \begin{itemize}
    \Item{i} If $f=g\lor h$, then
    $\Low{f}\subseteq\Low{g}\cup\Low{h}$ and
    $\up{f}=\up{g}\lor\up{h}$.

    \Item{ii} If $f=g\land h$ and $g,h$ are independent, then
    $\Low{f}\subseteq\Low{g}+\Low{h}$ and $\up{f}=\up{g}\land\up{h}$.
  \end{itemize}
\end{lem}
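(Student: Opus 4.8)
The plan is to prove the two parts separately and, within each, to handle the claim about lowest ones and the claim about upward closures in turn, using \cref{fact4} for everything that involves independence. Part~(i) needs no independence hypothesis. For the inclusion $\Low{f}\subseteq\Low{g}\cup\Low{h}$ I would take a lowest one $a$ of $f=g\lor h$: since $f(a)=1$, at least one of $g(a)=1$, $h(a)=1$ holds, say the former; and for every $b<a$ we have $f(b)=0$, hence $g(b)=0$, so $a\in\Low{g}$. For the identity $\up{f}=\up{g}\lor\up{h}$ I would simply expand the definition: $\up{f}(x)=\bigvee_{z\le x}\bigl(g(z)\lor h(z)\bigr)=\bigl(\bigvee_{z\le x}g(z)\bigr)\lor\bigl(\bigvee_{z\le x}h(z)\bigr)=\up{g}(x)\lor\up{h}(x)$.

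For part~(ii) I would first record the elementary observation that, since $\{0,1\}^n$ is finite, whenever $g(a)=1$ there is a lowest one $b$ of $g$ with $b\le a$ (any $\le$-minimal vector below $a$ on which $g$ equals $1$); this holds for non-monotone $g$ as well. For $\Low{f}\subseteq\Low{g}+\Low{h}$: take $a\in\Low{f}$, so $g(a)=h(a)=1$, and pick lowest ones $b\le a$ of $g$ and $c\le a$ of $h$. By \cref{fact4}, $\supp{b}\cap\supp{c}=\emptyset$ (hence $b\lor c=b+c$) and $g(b\lor c)=h(b\lor c)=1$, so $f(b+c)=1$; moreover $b+c=b\lor c\le a$, so the minimality of $a$ as a lowest one of $f$ forces $b+c=a$, i.e.\ $a\in\Low{g}+\Low{h}$. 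For $\up{f}=\up{g}\land\up{h}$: the inequality $\up{f}\le\up{g}\land\up{h}$ is immediate from $f=g\land h\le g$ and $f\le h$ together with monotonicity of the upward-closure operation under $\le$. Conversely, if $\up{g}(x)=\up{h}(x)=1$, pick $z_1\le x$ with $g(z_1)=1$ and $z_2\le x$ with $h(z_2)=1$, then lowest ones $b\le z_1$ of $g$ and $c\le z_2$ of $h$; \cref{fact4} again gives $b\lor c=b+c$ and $g(b+c)=h(b+c)=1$, hence $f(b+c)=1$, while $b+c=b\lor c\le z_1\lor z_2\le x$, so $\up{f}(x)=1$.

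The only step I expect to require genuine care is the ``$\ge$'' inclusion in $\up{f}=\up{g}\land\up{h}$: this is precisely where independence of $g$ and $h$ is indispensable (without it the identity fails, e.g.\ for $g=x$ and $h=\bar{x}$, where $g\land h\equiv 0$ but $\up{g}\land\up{h}=x$), and \cref{fact4} is exactly the tool that merges a witness for $g$ and a witness for $h$, which live on disjoint coordinate sets, into a single $0$-$1$ vector below $x$ that witnesses both conjuncts simultaneously. Everything else is a routine unwinding of the definitions of lowest ones and of upward closure.
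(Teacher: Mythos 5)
Your proof is correct and follows essentially the same route as the paper's: both parts reduce the upward-closure identities and the lowest-one inclusions to the same case analysis, with \cref{fact4} doing the real work in part~(ii). The only cosmetic difference is that in part~(i) you prove $\up{f}=\up{g}\lor\up{h}$ in one step by distributing $\bigvee_{z\le x}$ over the disjunction, whereas the paper argues the two inequalities separately; and you make explicit the (correct, finiteness-based) observation that $g(a)=1$ yields a lowest one of $g$ below $a$, which the paper uses tacitly.
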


\begin{proof}
  To show (i), let $f=g\lor h$. The inclusion $\Low{f}\subseteq\Low{g}\cup\Low{h}$ is trivial:
  if $a\in\Low{f}$, then $g(a)=1$ or $h(a)=1$, and both $g(b)=0$
  and $h(b)=0$ hold for every vector $b<a$. Thus, either $a\in\Low{g}$
  or $a\in\Low{h}$, as desired.   To show the inequality
  $\up{f}\leq \up{g}\lor\up{h}$, take any vector $x\in\{0,1\}^n$ for
  which $\up{f}(x)=1$ holds; hence, $x\geq a$ for some lowest one
  $a\in\Low{f}$. Then, as we have just shown, either $a\in\Low{g}$ or
  $a\in\Low{h}$ (or both) hold. Hence, either $\up{g}(x)=1$ or $\up{h}(x)=1$, as
  desired.  To show the opposite inequality $\up{f}\geq
  \up{g}\lor\up{h}$, take any vector $x\in\{0,1\}^n$ for which
  $\up{g}(x)=1$ holds. Then $g(z)=1$ and, hence, also $f(z)=1$ holds
  for some $z\leq x$, meaning that $\up{f}(x)=1$, as desired. The same
  happens if $\up{h}(x)=1$.

  To show (ii), let $f=g\land h$, where the functions $g$ and $h$ are
  independent. Take an arbitrary lowest one $a\in \Low{f}$. Since then
  $g(a)=1$ and $h(a)=1$, there are lowest ones $b\in\Low{g}$ and
  $c\in\Low{h}$ such that $b\leq a$ and $c\leq a$; hence, $a\geq b\lor
  c$.  Since the functions $g$ and $h$ are independent, \cref{fact4}
  yields $b\lor c=b+c$ and $g(b+c)=h(b+c)=1$; hence, also
  $f(b+c)=1$. Since $a\geq b+c$ and since vector $a$ is a
  \emph{lowest} one of $f$, this yields the equality $a=b+c$; hence,
  $a\in\Low{g}+\Low{h}$.  This shows the inclusion
  $\Low{f}\subseteq\Low{g}+\Low{h}$.

  The inequality $\up{f}\leq \up{g}\land\up{h}$ is trivial and holds
  for arbitrary (not necessarily independent) Boolean functions $g$ and
  $h$: if $\up{f}(x)=1$, then $f(z)=1$ holds for some vector $z\leq x$
  and, hence, both $g(z)=1$ and $h(z)=1$ also hold. To show the converse
  inequality $\up{g}\land\up{h}\leq \up{f}$ for independent functions
  $g$ and $h$, take any vector $x\in\{0,1\}^n$ for which both
  $\up{g}(x)=1$ and $\up{h}(x)=1$ hold.  Then $g(b)=1$ and $h(c)=1$
  hold for some lowest one $b\leq x$ of $g$ and for some lowest one
  $c\leq x$ of $h$. Since the functions $g$ and $h$ are independent,
  \cref{fact4} yields $g(b\lor c)=h(b\lor c)=1$ and, hence, also
  $f(b\lor c)=1$. Since $b\lor c\leq x$, this yields $\up{f}(x)=1$, as
  desired.
\end{proof}

\begin{rem}\label{rem:dependent}
In general, if  $f=g\land h$, and if the functions $g$ and $h$ are not independent, then
even the inclusion $\Low{f}\subseteq\Low{g}\lor\Low{h}:=\{b\lor c\colon b\in\Low{g},x\in\Low{h}\}$ does not need to hold.
    Take, for example,
    $g=x\bar{y}\,\bar{z}\lor xy$ and
  $h=\bar{x}\,\bar{y}z\lor yz$; hence, $f=xyz$. The functions $g$ and $h$ are dependent (their prime implicants
  $xy$ and $yz$ share a common variable $y$).  The only lowest one of $f$ is $a=(1,1,1)$, the
  only lowest one of $g$ is $b=(1,0,0)$, and the only lowest one of
  $h$ is $c=(0,0,1)$. But $a\neq b\lor c$. Also, in general, the
  inequality $\up{g}\land\up{h}\leq \up{f}$ does not need to
  hold. Take, for example,  $g=x\bar{y}\lor z$
  and $h=\bar{x}y\lor z$; hence, $f=z$. On the vector
  $a=(1,1,0)$, we have $\up{f}(a)=f(a)=0$, but $\up{g}(a)\geq
  g(1,0,0)=1$ and $\up{h}(a)\geq h(0,1,0)=1$.
\qed\end{rem}

We now turn to the actual proof of the remaining two inequalities
$ \mlinBool{\up{f}}\leq \linBool{f}$ and $\Bool{\up{f}}{1}\leq \mlinBool{\up{f}}$ claimed in \cref{thm:multilinear}. This is done in
\cref{lem:multilin-pos,lem:multilin-to-read-1} bellow.

We can view every DeMorgan $(\lor,\land,\neg)$ circuit $F(x)$
computing a Boolean function $f(x)$ of $n$ variables as a
\emph{monotone} $(\lor,\land)$ circuit $H(x,y)$ of $2n$ variables with
the property that $f(x)=H(x,\overline{x})$ holds for all
$x\in\{0,1\}^n$, where
$\overline{x}=(\overline{x}_1,\ldots,\overline{x}_n)$ is the
complement of $x=(x_1,\ldots,x_n)$. For example, if $x=(1,0,1,1)$, then
$\bar{x}=(0,1,0,0)$. The \emph{monotone version} of the
circuit $F(x)$ is the monotone circuit $\pos{F}(x)=H(x,\vec{1})$
obtained by replacing every negated input literal $\overline{x}_i$ in the circuit $F$ with constant~$1$.

Take, for example, the circuit
$F=(x\bar{y}\lor z)(\bar{x}y\lor z)$ computing the Boolean function
$f=z$. Its monotone version $\pos{F}=(x\cdot 1\lor z)(1\cdot y\lor z)=(x\lor z)(y\lor z)$ computes the Boolean function  $xy\lor z$ which is  \emph{different} from
the upward closure $\up{f}=z$ of $f$. The following lemma shows that
this cannot happen for \emph{multilinear} circuits~$F$.

\begin{lem}[Multilinear to monotone multilinear]\label{lem:multilin-pos}
  Let $F$ be a DeMorgan $(\lor,\land,\neg)$ circuit computing a
  Boolean function $f$.  If $F$ is multilinear, then the circuit
  $\pos{F}$ is also multilinear and computes $\up{f}$.  In particular,
  $\mlinBool{\up{f}}\leq \linBool{f}$ holds.
\end{lem}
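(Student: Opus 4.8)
The plan is to argue by structural induction, over the gates of $F$ in topological order, that for every gate $v$ of $F$ computing a Boolean function $g_v$ the corresponding gate of $\pos{F}$ computes the upward closure $\up{g_v}$; applied to the output gate this yields that $\pos{F}$ computes $\up{f}$. At the inputs the claim is immediate: a positive input $x_i$ is left untouched and satisfies $\up{x_i}=x_i$, while a negated input $\bar{x}_i$ is replaced in $\pos{F}$ by the constant $1$, which is exactly $\up{\bar{x}_i}$. For the inductive step I will invoke \cref{lem:properties}: at an OR gate $v=u\lor w$ we have $\up{g_v}=\up{g_u}\lor\up{g_w}$ by part~(i), which holds for arbitrary functions; at an AND gate $v=u\land w$ the functions $g_u$ and $g_w$ are independent because $F$ is multilinear, so part~(ii) gives $\up{g_v}=\up{g_u}\land\up{g_w}$. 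In both cases, combining with the induction hypothesis for $u$ and $w$ (whose subcircuits in $F$ are again multilinear) gives the claim for $v$.

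Next I will verify that $\pos{F}$ is itself multilinear, i.e.\ that at every AND gate $v=u\land w$ of $\pos{F}$ the functions $\up{g_u}$ and $\up{g_w}$ are independent. The point is that $\up{g}$ depends only on variables on which $g$ depends: this follows from \cref{fact3} together with the identity $\up{g}=\bigvee_{t\in\PI{g}}\pos{t}$, since every variable occurring in a prime implicant of $\up{g}$ occurs, in some polarity, in a prime implicant of $g$; alternatively, if $g$ is independent of $x_i$, then clearing the $i$th bit of any $z\leq x$ leaves both $g(z)$ and the relation $z\leq x$ intact, so $\up{g}$ too is independent of $x_i$. Since $g_u$ and $g_w$ depend on disjoint sets of variables, so do $\up{g_u}$ and $\up{g_w}$, and hence $\pos{F}$ is multilinear.

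For the concluding inequality $\mlinBool{\up{f}}\leq\linBool{f}$, take a multilinear DeMorgan $(\lor,\land,\neg)$ circuit $F$ computing $f$ with exactly $\linBool{f}$ gates; then $\pos{F}$ is a monotone multilinear $(\lor,\land)$ circuit computing $\up{f}$ with no more gates (replacing input nodes by the constant $1$ creates no gates), and the constant $1$ inputs of $\pos{F}$ can be eliminated for free via the usual simplifications $1\land x=x$ and $1\lor x=1$, which change neither the number of gates nor the functions computed at the surviving gates, and therefore preserve multilinearity. I expect the one genuinely delicate point to be the claim that upward closure preserves independence (equivalently, introduces no new variable dependences); everything else is a routine application of \cref{lem:properties} and of the already-noted removability of constants from monotone $(\lor,\land)$ circuits.
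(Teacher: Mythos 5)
Your proof is correct and follows essentially the same route as the paper: both argue gate by gate (in topological order) that $\pos{F}$ computes $\up{g_v}$ at each gate $v$, using \cref{lem:properties}(i) at OR gates and \cref{lem:properties}(ii) at AND gates with the required independence supplied by multilinearity of $F$, and both deduce that $\pos{F}$ is itself multilinear from \cref{fact3} together with the observation that $\up{g}$ is an OR of positive factors of prime implicants of $g$, so upward closure introduces no new variable dependences. Your explicit handling of the constant-$1$ inputs left in $\pos{F}$ is a reasonable precaution that the paper treats globally in \cref{sec:readk} rather than inside this proof; the only small slip is the remark that the simplifications $1\land x=x$ and $1\lor x=1$ leave the gate count unchanged---in fact each application strictly decreases it, but of course it never increases it, which is all that is needed for $\mlinBool{\up{f}}\leq\linBool{f}$.
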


\begin{proof}
  Suppose that the circuit $F(x)=H(x,\bar{x})$ is multilinear.
   To show that the (monotone) circuit $\pos{F}(x)=H(x,\vec{1})$ is multilinear,
  let $g$ and $h$ be the Boolean functions computed at some AND gate of
  the circuit $F(x)=H(x,\bar{x})$. Since the circuit $F(x)$ is
  multilinear, the functions $g$ and $h$ are independent.  By
  \cref{fact3}, this happens precisely when their prime implicants share no
  variables in common. Since the upward closure of any
Boolean function is the OR of positive factors of its prime implicants,
\cref{fact3} implies that the functions  $\up{g}$ and $\up{h}$ are also independent.

Let us now show that the monotone version $\pos{F}=H(x,\vec{1})$ of $F$
  computes the upward closure $\up{f}$ of $f$.  Upward closures of
  input variables $x_i$ are the variables $\up{x_i}=x_i$ themselves,
  while upward closures of negated input variables $\bar{x}_i$ are
  constant-$1$ functions $\up{\bar{x}_i}=1$.  Let $g$ and $h$ be the
  Boolean functions computed at the two inputs of an arbitrary gate of
  $F$. If this is an OR gate, then \cref{lem:properties} yields the
  equality $\up{(g\lor h)}=\up{g}\lor\up{h}$. If this is an AND gate
  then, since the circuit $F$ is multilinear, the functions $g$ and
  $h$ are independent, and \cref{lem:properties} also yields the
  equality $\up{(g\land h)}=\up{g}\land\up{h}$. Thus, in the circuit
  $\pos{F}=H(x,\vec{1})$, the upward closures $\up{g}$ of the
  functions $g$ computed at the gates of $F$ are computed. Since this
  also holds for the output gate of $F$, at which the function $f$ is
  computed, the upward closure $\up{f}$ of $f$ is computed at this
  gate in the circuit $\pos{F}$, as desired.
\end{proof}

\begin{lem}[Monotone multilinear to read-$1$]\label{lem:multilin-to-read-1}
  Monotone multilinear Boolean circuits are read-$1$ circuits.  In
  particular, $\Bool{f}{1}\leq \mlinBool{f}$ holds for every monotone
  Boolean function~$f$.
\end{lem}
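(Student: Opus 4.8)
The plan is to prove, by induction on the structure of a monotone multilinear $(\lor,\land)$ circuit $F$, the stronger containment $\Low{g_v}\subseteq B_v$ for every gate $v$ of $F$, where $g_v$ denotes the Boolean function computed at $v$ and $B_v\subseteq\NN^n$ is the set of exponent vectors produced at~$v$. Since $F$ is a monotone circuit, every $g_v$ is monotone, so its lowest ones are genuine $0$-$1$ vectors. Granting the claim and applying it to the output gate, we obtain $\Low{f}\subseteq B_F$; combined with the inclusion $B_F\subseteq\Up{(\Low{f})}$ supplied by \cref{lem:struct-bool} (which applies because $F$ computes~$f$), \cref{def:readk} tells us that $F$ is a read-$1$ circuit. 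Taking $F$ to be a smallest monotone multilinear circuit computing $f$ then yields $\Bool{f}{1}\leq\mlinBool{f}$.

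For the induction, the base case is immediate: at an input node $v=x_i$ we have $B_v=\{\vec{e}_i\}=\Low{x_i}$. At an OR gate $v=u\lor w$ we have $B_v=B_u\cup B_w$, and by \cref{lem:properties}(i), $\Low{g_v}\subseteq\Low{g_u}\cup\Low{g_w}$; the inductive hypotheses $\Low{g_u}\subseteq B_u$ and $\Low{g_w}\subseteq B_w$ then give $\Low{g_v}\subseteq B_u\cup B_w=B_v$.

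At an AND gate $v=u\land w$ we invoke multilinearity: the functions $g_u$ and $g_w$ are independent, so \cref{lem:properties}(ii) gives $\Low{g_v}\subseteq\Low{g_u}+\Low{g_w}$ (Minkowski sum); here for $b\in\Low{g_u}$ and $c\in\Low{g_w}$ the supports are disjoint by \cref{fact4}, so $b+c=b\lor c$ is again a $0$-$1$ vector, consistently with $\Low{g_v}$ consisting of $0$-$1$ vectors. Since $B_v=B_u+B_w$ and, by induction, $\Low{g_u}\subseteq B_u$ and $\Low{g_w}\subseteq B_w$, we conclude $\Low{g_v}\subseteq\Low{g_u}+\Low{g_w}\subseteq B_u+B_w=B_v$, completing the induction.

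The only step where anything beyond bookkeeping happens is the AND gate, and there the real work has already been done in \cref{fact4} and \cref{lem:properties}: multilinearity is precisely what forces the exact $0$-$1$ vectors of $\Low{f}$ — not merely some higher-degree shadows of them — to be produced, which is exactly the read-$1$ condition. So I expect no genuine obstacle; the argument is a short structural induction resting on the earlier lemmas.
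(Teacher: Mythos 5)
Your proof is correct and follows essentially the same route as the paper's: both arguments establish the invariant $\Low{g_v}\subseteq B_v$ at every gate by induction, with the OR and AND cases handled by \cref{lem:properties}(i) and \cref{lem:properties}(ii) respectively, and multilinearity invoked exactly once to make the AND case go through. The only cosmetic difference is that you phrase it as a structural induction with the invariant made explicit at each node, while the paper does induction on the gate count and peels off the output gate; the substance is identical.
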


\begin{proof}
  Let $F$ be a monotone multilinear $(\lor,\land)$ circuit computing a
  monotone Boolean function $f$.  Let $B_F\subseteq\NN^n$ be the set of
  ``exponent'' vectors produced by $F$. By \cref{lem:struct-bool}, the
  inclusion $B_F\subseteq\Up{(\Low{f})}$ holds. So, to show that $F$ is
  a read-$1$ circuit, we have only to show that also the inclusion
  $\Low{f}\subseteq B_F$ holds, i.e., that every lowest one $a\in
  \Low{f}$ of $f$ is produced by the circuit~$F$.

  Let $G$ and $H$ be the subcircuits of $F$ whose output gates
  enter the output gate of $F$, and let $g$ and $h$ be the
  monotone Boolean functions computed by these subcircuits. Let also
  $B_G\subseteq\NN^n$ and  $B_H\subseteq\NN^n$ be the sets of
  ``exponent'' vectors produced by the subcircuits $G$ and $H$. We
  argue by induction on the number $s$ of gates in~$F$. In the basis
  case $s=1$, we have $G=x_i$ and $H=x_j$ for some
  $i,j\in[n]$. Hence, $B_G=\{\vec{e}_i\}=\Low{g}$ and
  $B_H=\{\vec{e}_j\}=\Low{h}$.  So, if $F=G\lor H=x_i\lor x_j$
  then $B_F=B_G\cup B_H=\{\vec{e}_i,\vec{e}_j\} =\Low{f}$. If
  $F=G\land H=x_i\land x_j$, then $i\neq j$ due to the
  multilinearity of the circuit~$F$. Hence, also in this case, we have
  $B_F=B_G+B_H=\{\vec{e}_i+\vec{e}_j\}=\Low{f}$.

  Now suppose that the lemma holds for all monotone multilinear
  $(\lor,\land)$ circuits of size at most $s$, and let $F$ be a
  monotone multilinear $(\lor,\land)$ circuit of size $s+1$. Since the
  circuit $F$ is multilinear, both subcircuits $G$ and $H$ are also
  multilinear. Since each of $G$ and $H$ has at most $s$ gates,
  the lemma holds for both these subcircuits. Thus, both inclusions
  $\Low{g}\subseteq B_G$ and $\Low{h}\subseteq B_H$ hold.

  If $F=G\lor H$, then $B_F=B_G\cup B_H$ and \cref{lem:properties}
  gives the inclusion $\Low{f}\subseteq \Low{g}\cup \Low{h}$. So,
  the desired inclusion $\Low{f}\subseteq B_F$ follows from the
  induction hypothesis.  If $F=G\land H$, then $B_F=B_G+B_H$
  (Minkowski sum). Since the circuit $F$ is multilinear, the functions
  $g$ and $h$ are independent, and \cref{lem:properties} yields
  $\Low{f}\subseteq \Low{g}+\Low{h}$. So, the desired inclusion
  $\Low{f}\subseteq B_F$ follows again from the induction hypothesis.
\end{proof}

\subsection{Multilinear Circuits Impede Zero Terms}
\label{sec:impede}

\Cref{lem:multilin-pos} rises a natural question: if $F=F(x,\bar{x})$
is a DeMorgan $(\lor,\land,\neg)$ circuit computing a Boolean function
$f$, when does its monotone version $\pos{F}=F(x,\vec{1})$
computes~$\up{f}$? It can be easily shown that such are exactly
DeMorgan circuits $F$ that ``impede'' zero terms in the following
sense.

Every DeMorgan $(\lor,\land,\neg)$ circuit $F$  not only computes some Boolean function $f:\{0,1\}^n\to\{0,1\}$ but also \emph{produces} (purely
syntactically) a unique set $\T{F}$ of Boolean terms in a natural way:

\begin{itemize}
\item[$\circ$] if $F=z$ is an input literal $z\in\{x_i,\bar{x}_i\}$, then $\T{F}=\{z\}$;

\item[$\circ$] if $F=F_1\lor F_2$, then $\T{F_1\lor F_2}=\T{F_1}\cup\T{F_2}$;

\item[$\circ$] if $F=F_1\land F_2$, then
$\T{F}=\left\{t_1\land t_2\colon t_1\in \T{F_1}, t_2\in
  \T{F_2}\right\}$.
\end{itemize}
During the production of terms, the ``annihilation''
law $x\land \bar{x}=0$ is not used.  So, $\T{F}$ can contain
\emph{zero terms}, that is, terms containing a variable $x_i$ together
with its negation~$\bar{x}_i$. For example, the set $\T{F}=\{x\bar{y},x\bar{y}z,xy\bar{y}, xy\bar{y}z\}$
of terms produced by the circuit
$F=(x\lor xy)(\bar{y}\lor\bar{y}z)$ contains two zero terms $xy\bar{y}$
and $xy\bar{y}z$.

Recall that the \emph{positive factor} $\pos{t}$ of a Boolean term $t$ is obtained by
replacing every its negated literal $\bar{x}_i$ with constant~$1$.
Let us say that a DeMorgan $(\lor,\land,\neg)$ circuit $F$ computing a
Boolean function $f$ \emph{impedes zero terms} if positive factors $\pos{t}$ of zero terms $t\in\T{F}$ produced by $F$ (if there are any) are implicants of
$\up{f}$, that is, if $\pos{t}\leq \up{f}$ holds for every zero term
$t\in\T{F}$. Note that such a circuit does \emph{not} forbid a
production of zero terms as such, but rather ``impedes'' produced zero
terms to unfold the full power of cancellations $x\land\bar{x}=0$.

\begin{fact}\label{fact:impede}
  Let $F$ be a DeMorgan $(\lor,\land,\neg)$ circuit computing a
  Boolean function $f$ with $f(\vec{0})=0$. The circuit $\pos{F}$
  computes $\up{f}$ if and only if $F$ impedes zero terms.
\end{fact}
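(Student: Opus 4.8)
The plan is to compute directly the Boolean function that $\pos{F}$ computes, express it in terms of the term set $\T{F}$ produced by $F$, and then compare it with $\up{f}$. First, note that exactly as for monotone circuits (cf. the formula $F(x)=\bigvee_{b\in B_F}\bigwedge_{i\in\supp{b}}x_i$ used in the proof of \cref{lem:struct-bool}), the monotone $2n$-variable circuit $H$ underlying $F$ computes $H(x,y)=\bigvee_{t\in\T{F}}t(x,y)$. Hence $F$ computes $f(x)=H(x,\bar{x})=\bigvee_{t\in\T{F}}t(x,\bar{x})$, and its monotone version computes $\pos{F}(x)=H(x,\vec{1})=\bigvee_{t\in\T{F}}t(x,\vec{1})$. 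Now $t(x,\vec{1})=\pos{t}(x)$ whenever the term $t$ contains at least one positive literal, and this is exactly where the hypothesis $f(\vec{0})=0$ enters: a term built only from negated literals would satisfy $t(\vec{0},\vec{1})=1$, forcing $f(\vec{0})=1$ (and zero terms always contain a positive literal). So $f(\vec{0})=0$ guarantees that every $\pos{t}$ with $t\in\T{F}$ is a genuine nonempty term, and therefore $\pos{F}$ computes $g:=\bigvee_{t\in\T{F}}\pos{t}$.

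Next I would split $\T{F}=\T{F}^{\mathrm{nz}}\cup\T{F}^{\mathrm{z}}$ into its nonzero and zero terms. Since $t(x,\bar{x})\equiv 0$ for a zero term $t$ while $t(x,\bar{x})=t(x)$ for a nonzero term $t$, the function computed by $F$ is $f=\bigvee_{t\in\T{F}^{\mathrm{nz}}}t$. Using that $\up{}$ distributes over $\lor$ (\cref{lem:properties}(i)) and that $\up{t}=\pos{t}$ for every nonzero term $t$ (as recorded in \cref{sec:multilin}), we obtain $\up{f}=\bigvee_{t\in\T{F}^{\mathrm{nz}}}\pos{t}$. Combining this with the first paragraph, $\pos{F}$ computes $g=\up{f}\ \lor\ \bigvee_{t\in\T{F}^{\mathrm{z}}}\pos{t}$; in particular $g\geq\up{f}$ always holds.

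It then follows immediately that $g=\up{f}$ if and only if $g\leq\up{f}$, which in turn holds if and only if $\pos{t}\leq\up{f}$ for every zero term $t\in\T{F}$ — and this is precisely the definition of ``$F$ impedes zero terms.'' Both directions of the Fact drop out of this single equivalence (with the case $\T{F}^{\mathrm{z}}=\emptyset$ being vacuous on both sides). There is no genuine obstacle here; the only points deserving care are (a) the precise description of what $\pos{F}$ produces, where $f(\vec{0})=0$ is used to rule out degenerate constant-$\vec{1}$ positive factors, and (b) the identity $\up{t}=\pos{t}$, which holds for nonzero terms but fails for zero terms — and it is exactly this failure that forces the zero terms to be controlled separately, which is what the impedance condition does.
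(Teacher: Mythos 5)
Your proof is correct and follows essentially the same route as the paper's: both decompose $\T{F}$ into zero and nonzero terms, use $\up{(g\lor h)}=\up{g}\lor\up{h}$ and the identity $\up{t}=\pos{t}$ for nonzero terms to show $\up{f}=\bigvee_{t\in\T{F}^{\mathrm{nz}}}\pos{t}\leq\bigvee_{t\in\T{F}}\pos{t}=\pos{F}$, and then observe that equality holds precisely when the positive factors of the zero terms are absorbed into $\up{f}$, which is the impedance condition. One tiny nit: the identity $t(x,\vec{1})=\pos{t}(x)$ in fact holds for \emph{every} term (both sides are constant $1$ when $t$ has only negated literals), so the qualification ``whenever $t$ contains a positive literal'' is unnecessary; the real role of $f(\vec{0})=0$, as you and the paper both correctly use it, is to rule out purely-negated terms from $\T{F}$ so that the positive factors appearing are genuine terms rather than the constant~$1$.
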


\begin{proof}
  Since $F$ computes $f$, we have $f=\bigvee_{t\in T}t$, where
  $T=\T{F}$ is the set of all terms produced by the circuit~$F$.
  Since $f(\vec{0})=0$, none of the terms $t\in T$ consist of solely
  negated variables.  For every term $t$, we have $\up{t}=0$ (the
  constant $0$ function) if $t$ is a zero term, and $\up{t}=\pos{t}$
  if $t$ is a nonzero term.  So, if $T_0\subseteq T$ is the set of all
  zero terms in $T$, then (where the second equality follows from
  \cref{lem:properties}(i)):
  \[
  \up{f}=\up{\bigg(\bigvee_{t\in T}t\bigg)}=\bigvee_{t\in T}\up{t}
  =\bigvee_{t\in T\setminus T_0}\up{t}
  =\bigvee_{t\in T\setminus T_0}\pos{t}\leq \bigvee_{t\in T}\pos{t}=\pos{F}
  \]
  with the equality iff $\pos{t}\leq \up{f}$ holds for all terms $t\in
  T\setminus T'$.
\end{proof}

DeMorgan $(\lor,\land,\neg)$ circuits that do not produce zero terms
at all obviously impede zero terms.  Such circuits were considered by
several authors, starting with Kuznetsov~\cite{kuznetsov} (already in
1981, under the name ``circuits without null-chains''), where he
 proved a surprisingly large
lower bound $2^{n/3}$ on the size of such circuits computing an
explicit $n$-variate Boolean function. Sengupta and
Venkateswaran~\cite{sengupta-noncancel} also considered DeMorgan
$(\lor,\land,\neg)$ circuits that do not produce zero terms (under the
name of ``non-cancellative circuits''). They showed that for every
such circuit $F$ computing a Boolean function $f$, the monotone
version $\pos{F}$ of $F$ computes $\up{f}$. Since non-cancellative
circuits produce no zero terms, this also follows from
\cref{fact:impede}.

Multilinear DeMorgan $(\lor,\land,\neg)$ circuits already \emph{can}
produce zero terms. For example, the DeMorgan $(\lor,\land,\neg)$
circuit $F=(x\lor xy)(\bar{y}\lor\bar{y}z)$ computing $f=x\bar{y}$ is
multilinear but produces zero terms $xy\bar{y}$ and $xy\bar{y}z$.
Still, together with \cref{lem:multilin-pos}, \cref{fact:impede}
implies that multilinear DeMorgan circuits impede the produced zero
terms as well.

\section{The Read-1/Read-2 Gap Can be Exponential}
\label{sec:gaps}

\Cref{thm:trop-read-once,thm:envel,thm:multilinear,thm:mon-multilin}
show that read-$1$ $(\lor,\land)$ circuits are
not weaker than monotone arithmetic constant-free  $(+,\times)$
circuits, not weaker than tropical $(\min,+)$ circuits,
 and not weaker than (non-monotone) multilinear $(\lor,\land,\neg)$
circuits.
 Let us now show that already read-$2$ $(\lor,\land)$
circuits can be much smaller than read-$1$ $(\lor,\land)$ circuits. For this, let $n=m^2$ and consider the
following monotone Boolean function of $n$ variables whose inputs are Boolean $m\times
m$ matrices $x=(x_{i,j})$:
\[
\mbox{$\lines{n,2}(x)=1$ iff every line of $x$ has at least one $1$,}
\]
where \emph{lines} are rows and columns; hence, there are $2m$
lines. Examples of lowest ones $a\in \Low{f}$ of $f=\lines{n,2}$ in the
case $n=9$ is a permutation matrix
$a=\left(\begin{smallmatrix}0&1&0\\1&0&0\\0&0&1\end{smallmatrix}\right)$
but also matrices like
$a=\left(\begin{smallmatrix}0&1&1\\1&0&0\\1&0&0\end{smallmatrix}\right)$.
The function $\lines{n,2}$ is a special ($2$-dimensional) version of so-called ``blocking lines'' functions described in \cref{app:blocking-lines}.

The \emph{dual} of a Boolean function $f(x_1,\ldots,x_n)$
is\footnote{As before, for variables $x_i$, we write $\bar{x}_i$ instead of $\neg x_i$. } $\dual{f}(x_1,\ldots,x_n) :=\neg
f(\bar{x}_1,\ldots,\bar{x}_n)$. That is, we negate the input bits as
well as the obtained value. For example, by using DeMorgan rules
$\neg(x\lor y)=\neg x\land\neg y$ and $\neg(x\land y)=\neg x\lor\neg y$, we obtain that
the dual of
$f(x)=\bigwedge_{S\in\f}\big(\bigvee_{i\in S} x_i\big)$ is
$\dual{f}(x)=\bigvee_{S\in\f}\big(\bigwedge_{i\in S} x_i\big)$. That is,
$f(x)=1$ iff every set $S\in \f$ contains an $i\in S$ with $x_i=1$,
while $\dual{f}(x)=1$ iff there is a set $S\in \f$ with $x_i=1$ for
all $i\in S$.

Recall that $\sBool{f}{k}$ denotes the minimum size of a monotone \emph{syntactically} read-$k$  $(\lor,\land)$ circuit computing
a monotone Boolean function $f$.
In particular, $\Bool{f}{k}\leq \sBool{f}{k}$ always holds.

\begin{lem}\label{lem:gap}
  For the function $f=\lines{n,2}$, we have $\linBool{f}\geq
  \kMin{\Low{f}}{1}=\Bool{f}{1}=2^{\Omega(\sqrt{n})}$ but $\kMin{\Low{f}}{2}\leq
  \Bool{f}{2}\leq \sBool{f}{2}\leq 2n$ and $\Bool{\dual{f}}{1}\leq 2n$.
\end{lem}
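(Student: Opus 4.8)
The plan is to handle the three upper bounds with the obvious circuits and to obtain the lower bound from the earlier theorems together with \cref{ex:matchings}. For the upper bounds, I would write $f=\lines{n,2}$ as the AND $\bigwedge_{i=1}^{m}\big(\bigvee_{j=1}^{m}x_{i,j}\big)\wedge\bigwedge_{j=1}^{m}\big(\bigvee_{i=1}^{m}x_{i,j}\big)$ of the $2m$ ``line-ORs.'' Computing the line-ORs costs $2m(m-1)$ gates and the final AND of $2m$ inputs costs $2m-1$ gates, for a total of $2m^2-1=2n-1$ gates; since each cell variable $x_{i,j}$ occurs in exactly two line-ORs (its row's and its column's), every monomial of the formal polynomial of this circuit has individual degree $\le2$, so the circuit is \emph{syntactically} read-$2$ and $\sBool{f}{2}\le 2n$. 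Together with the trivial bound $\Bool{f}{2}\le\sBool{f}{2}$ and the right-hand inequality of \cref{thm:trop-read-once} (applied with $k=2$), this gives $\kMin{\Low{f}}{2}\le\Bool{f}{2}\le\sBool{f}{2}\le 2n$. Dually, $\dual{f}(x)=\bigvee_{L}\bigwedge_{i\in L}x_i$, where $L$ runs over the $2m$ lines, is computed by the OR of the $2m$ ``line-ANDs''; the same count gives $2n-1$ gates, but now the formal polynomial $\sum_{i}\prod_{j}x_{i,j}+\sum_{j}\prod_{i}x_{i,j}$ is multilinear, so that circuit is syntactically read-$1$ and $\Bool{\dual{f}}{1}\le\sBool{\dual{f}}{1}\le 2n$.

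For the lower bound, observe that $f$ is monotone (adding $1$'s can only help covering lines), so $\up{f}=f$; hence \cref{thm:mon-multilin} gives $\linBool{f}\ge\Bool{f}{1}\ge\arithm{\lenv{\Low{f}}}$, while \cref{thm:trop-read-once} with $k=1$ gives $\kMin{\Low{f}}{1}=\Bool{f}{1}$. It therefore suffices to prove $\arithm{\lenv{\Low{f}}}=2^{\Omega(\sqrt n)}$, and the key step is to identify $\lenv{\Low{f}}$: a prime implicant of $f$ is a minimal set of cells meeting every row and every column; such a set has at least $m$ cells (one per row), and a set of exactly $m$ cells meets all $m$ columns only if it also has exactly one cell per column, i.e. is an $m\times m$ permutation matrix, and conversely every permutation matrix is a minimal line-covering set. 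Hence $\lenv{\Low{f}}$ is exactly the set of characteristic $0$-$1$ vectors of $m\times m$ permutation matrices, which is precisely $\Low{\match{m}}$. Now \cref{ex:matchings} yields $\arithm{\lenv{\Low{f}}}=\arithm{\Low{\match{m}}}\ge\binom{m}{m/3}=2^{\Omega(m)}=2^{\Omega(\sqrt n)}$, as claimed.

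I do not anticipate a genuine obstacle: once the earlier theorems are in place this is essentially bookkeeping. The only two points that need a moment's care are the combinatorial identity $\lenv{\Low{\lines{n,2}}}=\Low{\match{m}}$, which is where the exponential read-$1$ lower bound is imported from the permanent, and, on the upper-bound side, the observation that in the natural line-OR/line-AND circuits each variable belongs to exactly two factors — which is exactly what fixes the syntactic read-parameter at $2$ for $f$ and at $1$ for $\dual{f}$.
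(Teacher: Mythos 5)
Your proof is correct and follows essentially the same route as the paper's: you build the same line-OR/line-AND circuit of size $\leq 2n$ for $f$ (observing that each variable lies in exactly two lines to fix the syntactic read-parameter at $2$) and the dual OR-of-ANDs circuit for $\dual{f}$ (noting its formal polynomial is multilinear, hence read-$1$), invoke \cref{thm:trop-read-once} and \cref{thm:mon-multilin}/\cref{thm:multilinear} for the tropical and multilinear inequalities, and obtain the lower bound by identifying $\lenv{\Low{f}}$ with the set of permutation matrices, i.e.\ $\Low{\match{m}}$, and importing the bound $\arithm{\Low{\match{m}}}\geq\binom{m}{m/3}=2^{\Omega(\sqrt n)}$ from \cref{ex:matchings} via the Envelope Lemma / \cref{thm:envel}. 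The only cosmetic difference is that you give a slightly sharper gate count ($2n-1$ rather than the paper's $\leq 2m^2$).
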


\begin{proof}
  Let $A:=\Low{f}$ be the set of the lowest ones of the function
  $f=\lines{n,2}$.  The equality $\kMin{A}{1}=\Bool{f}{1}$ and the
  inequality $\kMin{A}{2}\leq \Bool{f}{2}$ are given by
  \cref{thm:trop-read-once}, while the inequality $\linBool{f}\geq
  \Bool{f}{1}$ is given by \cref{thm:multilinear}.

  To show the upper bound $\Bool{\dual{f}}{1}\leq 2m^2=2n$ for the dual $\dual{f}$ of the function $f=\lines{n,2}$, note that, for
  every input matrix $x$, we have
  $\dual{f}(x)=1$ iff $x$ has only $1$s on at least one line of $x$.
 In particular, prime implicants of $\dual{f}$ are the ANDs of variables
  corresponding to the $2m$ lines in the matrix~$x$. Hence, to obtain
  a read-$1$ circuit of size $\leq 2m^2=2n$ for $\dual{f}$ it is enough
  to take the OR of these ANDs.

  To show the lower bound $\Bool{f}{1}=2^{\Omega(\sqrt{n})}$, recall that
  $m\times m$ matrices $a\in f^{-1}(1)$ accepted by $f$ must have
  at least one $1$
  in each line (row or column).  None of such matrices can have fewer
  than $m$ $1$s, because otherwise, it would have an all-$0$ row or an
  all-$0$ column.  So, the smallest number of $1$s in a matrix $a\in
  A$ is $m$, and the matrices in $A$ with this number of $1$s are
  permutation matrices (with exactly one $1$ in each row and in each
  column). This means that the lower envelope $\lenv{A}$ of $A$ is the
  set $\Low{g}$ of the lowest ones of the perfect matching function
  $g=\match{m}$, and we already know that
  $\arithm{\Low{g}}=2^{\Omega(m)}$ holds
  (\cref{ex:matchings}). Together with \cref{thm:envel}, this yields
  $\Bool{f}{1}\geq \arithm{\lenv{A}}=\arithm{\Low{g}}=2^{\Omega(m)}$.

  To show the upper bound $\sBool{f}{2}\leq 2m^2=2n$, observe that $f$ can
  be computed by a trivial $(\lor,\land)$ circuit 
  \[
    F(x)=\bigwedge_{i=1}^m\big(\bigvee_{j=1}^n m_{i,j}\big)\land
    \bigwedge_{j=1}^m\big(\bigvee_{i=1}^m x_{i,j}\big)
  \]
  of size at most $2m^2$.  That is, we first compute the $2m$ ORs of
  variables along each line, and take the AND of these values. The
  arithmetic $(+,\times)$ version of $F$ produces the polynomial
  \[
  P(x)=\prod_{i=1}^m\big(\sum_{j=1}^m x_{i,j}\big)\cdot 
  \prod_{j=1}^m\big(\sum_{i=1}^m x_{i,j}\big)\,.
  \]
  Since no variable $x_{i,j}$ appears in this polynomial with a degree
  larger than $2$, the circuit $F$ is a syntactically
  read-$2$ circuit, as desired.
\end{proof}

\section{Concluding Remarks and Open Problems}
\label{sec:conclusions}

We have shown that already very restricted monotone Boolean
$(\lor,\land)$ circuits (read-$1$ circuits) capture the power of three
different types of circuits: monotone \emph{arithmetic} $(+,\times)$
circuits, \emph{tropical} $(\min,+)$ circuits, and non-monotone
Boolean \emph{multilinear} $(\lor,\land,\neg)$ circuits. The next
natural problem is to understand the power of read-$k$ $(\lor,\land)$
circuits for $k\geq 2$, with $k=2$ being the first nontrivial case.

It is clear that $\Bool{f}{1}\geq \Bool{f}{2}\geq \ldots\geq
\Bool{f}{k}\geq \ldots,\geq \BBool{f}$ holds for any monotone Boolean
function $f$, where $\BBool{f}$ is the minimum size of a monotone
$(\lor,\land)$ circuit computing~$f$. Super-polynomial lower bounds on
$\BBool{f}$ can be proved using the celebrated ``Method of
Approximations'' invented by Razborov~\cite{razb-clique,razb-perm,Razborov89}.
However, this method, as well as its later ``symmetric'' versions,
can be only applied to Boolean functions with a very special
combinatorial property: \emph{both} minterms and maxterms\footnote{A \emph{minterm} (resp., \emph{maxterm}) of
  a monotone Boolean function $f$ is a minimal under inclusion set of
  variables such that setting all these variables to $1$ (resp., to
  $0$) forces $f$ to output $1$ (resp., $0$) regardless of the values
  given to other variables. Since the function $f$ is monotone,
  every minterms intersects every maxterm.
  Note that prime implicants of $f$ are ANDs of all variables in minterms; ORs of all variables in maxterms are known as prime \emph{implicates} of~$f$; see, for example,~\cite{hammer}.}
  must be highly ``dispersed'' (not too
many of them can share a given number of variables in common).  For
example, already the application in~\cite{razb-perm} of the Method of
Approximations to prove the lower bound $\BBool{f}=n^{\Omega(\log n)}$
for the perfect matching function $f=\match{n}$ (which we considered
in \cref{sec:explicit}) is rather nontrivial, going deeply into the
structure of maxterms of this particular function: unlike the
minterms, the maxterms of $\match{n}$ are dispersed not highly enough
(see, e.g.,~\cite[Chapter~9]{myBFC-book} for more information).

In a sharp contrast, lower bounds
on the size of monotone read-$1$ circuits can be obtained \emph{without}
using the Method of Approximations: as demonstrated in \cref{sec:explicit},
an exponential lower bound $\Bool{f}{1}=2^{\Omega(n)}$ for  $f=\match{n}$
can be proved using a relatively simple argument: we had only to consider
the minterms of $f$.
 But what about read-$k$ circuits
for larger values of~$k$? In particular, what about read-$2$ circuits?

\begin{probl}\label{probl:1}
  Can super-polynomial lower bounds on the size of read-$2$ or at
  least of syntactically read-$2$ circuits be proved without using the
  Method of Approximations?
\end{probl}

\begin{table}[t]
  \begin{center}
    \begin{tabular}[t]{l@{\hskip 0.5cm}c@{\hskip 0.5cm}c@{\hskip
          0.5cm}c}
      \toprule
      Underlying semiring &  $\suma$-idempotence
      & $\daug$-idempotence & absorption\\
      $(R,\suma,\daug)$ & $x\suma x=x$
      & $x\daug x=x$ & $x\suma(x\daug y)=x$\\
      \midrule
      Arithmetic $(+,\times)$  &  {\large $-$}  & {\large $-$} & {\large $-$}\\[0.2em]
      Tropical $(\min,+)$  &  {\large $+$}  & {\large $-/+$} & {\large $+$}\\[0.2em]
      Read-$1$ $(\lor,\land)$ & {\large $+$} & {\large $-/+$} & {\large $+$} \\[0.2em]
      Multilinear $(\lor,\land)$ & {\large $+$} & {\large $-/+$} & {\large $+$} \\[0.2em]
      Tight $(\lor,\land)$ & {\large $+$} & {\large $+$} & {\large $-$} \\[0.2em]
      Unrestricted $(\lor,\land)$ & {\large $+$} & {\large $+$} & {\large $+$} \\
      \bottomrule
    \end{tabular}
  \end{center}
  \caption[]{\footnotesize Laws allowed $(+)$ or forbidden $(-)$ in various models of circuits. In read-$1$ and multilinear $(\lor,\land)$ circuits, as well as in tropical $(\min,+)$ circuits the usage of ``multiplicative'' idempotence  is only \emph{partially} forbidden $(-/+)$, because the usage of the absorption law $x\lor xy=x$ or, respectively, $\min\{x,x+y\}=x$ in these circuits is unrestricted; hence, the produced ``redundant'' terms can be eliminated using these laws.}
  \label{tab:laws}
\end{table}

Monotone Boolean read-$2$ circuits constitute the \emph{first} model
of computation---after tropical and monotone arithmetic
circuits---which can use \emph{both} the idempotence $x\land x=x$
\emph{and} the absorption $x\lor xy=x$ laws (albeit the usage of
idempotence is restricted). Let us stress that only
\emph{together} these two laws can unfold their full power.

Namely, the model where absorption  $x\lor xy=x$  is allowed (without any restriction), but (when producing prime implicants) idempotence
$x\land x=x$  is not allowed, is that
of read-$1$ circuits considered in this paper. We have seen
that exponential lower bounds for such circuits
\emph{can} be relatively easily proved
without using the Method of Approximations (\cref{ex:matchings}).

On the other hand, the model where idempotence $x\land x=x$ is allowed
(without any restriction), but absorption $x\lor xy=x$ is not allowed,
is that of so-called ``tight'' $(\lor,\land)$ circuits.  A monotone $(\lor,\land)$
circuit $F$ computing a Boolean function $f$ is \emph{tight} if the set
$B_F\subseteq\NN^n$ of exponent vectors of the formal $(+,\times)$ polynomial of $F$ satisfies the equality $\Supp{B_F}=\Supp{\Low{f}}$,
 not only the inclusions $\Supp{\Low{f}}\subseteq\Supp{B_F}$ and
 $B_F\subseteq\Up{(\Low{f})}$ as given by \cref{lem:struct-bool}.
That is, the circuit $F$ is tight if \emph{every} monomial of the
formal $(+,\times)$ polynomial of $F$ is a shadow of some prime
implicant of~$f$ (see \cref{tab:laws} for a schematic
comparison of various types of circuits).  Thus, tight $(\lor,\land)$ circuits cannot use the
absorption law $x\lor xy=x$, but (unlike in read-$k$ circuits) the
usage of idempotence law $x\land x=x$ is not restricted (degrees of
variables in the formal polynomial can be arbitrarily large). Note that
the read-$2$ $(\lor,\land)$ circuit used in the proof of
\cref{lem:gap} to compute the function $\lines{n,2}$ is tight. This shows
that tight circuits of degree already $2$ can be exponentially smaller
than (not necessarily tight) read-$1$ circuits.

Still, despite their alleged power, lower bounds for tight
$(\lor,\land)$ circuits (of arbitrary high degree) \emph{can} be
proved without using the Method of Approximations. This was demonstrated
in~\cite[Theorem~2]{juk-count}, where
a lower bound
$2^{\Omega(n)}$ on the size of tight $(\lor,\land)$ circuits computing
the perfect matching function $\match{n}$ is shown
using a fairly simple
argument similar to
that we used in \cref{sec:explicit} for read-$1$ circuits.
The point is that, because of the absence of the absorption $x\lor xy=x$, the
complexity of the function $\match{n}$ is also predetermined by the minterms of that function \emph{alone}. However, this argument fails if  the
absorption law is allowed.

In the case of read-$1$ circuits, we were able (in \cref{thm:envel}) to eliminate the
influence of absorption $x\lor xy=x$ by considering lower envelopes. But already in read-$2$ circuits, absorption can
(at least potentially) show its power.  So, a solution of
\cref{probl:1} could probably shed some light on where the power of
multiplicative idempotence $x\land x=x$ \emph{in combination with}
absorption $x\lor xy=x$ comes from.

The next natural question is: can larger allowed ``degree $k$ of
idempotence'' always substantially decrease the size of read-$k$
circuits? \Cref{lem:gap} shows that, for $k=1$, the gap
$\Bool{f}{k}/\Bool{f}{k+1}$ can be exponential.  But what about larger
values of~$k$?

\begin{probl}[Degree hierarchy]\label{probl:hierarchy}
  Can the gap $\Bool{f}{k}/\Bool{f}{k+1}$ or at least the gap
  $\Bool{f}{r}/\Bool{f}{k}$ be super-polynomial for $k\geq 2$ and
  $r$ not ``much'' smaller than $k$?
\end{probl}

To show such a gap, we need a function $f$ for which $\Bool{f}{r}$ is
``large'' but $\Bool{f}{k}$ is ``small.'' Hence,
\cref{probl:hierarchy} \emph{cannot} be solved using the Method of
Approximations because any lower bound on $\Bool{f}{r}$ obtained using this
method holds for \emph{every} $r$.

Yet another natural question is whether the gaps between the read-$k$ $(\lor,\land)$ circuit complexities of Boolean functions $f$ and their \emph{duals} $\dual{f}$ can be large.
\Cref{lem:gap} shows that, at least for $k=1$, the gap
$\Bool{f}{k}/\Bool{\dual{f}}{k}$
\emph{can} be large.

\begin{probl}[Duals]\label{probl:dual}
  Can the gap $\Bool{f}{k}/\Bool{\dual{f}}{k}$ be super-polynomial for
  all $k\geq 2$? In particular, can it be such for $k=2$?
\end{probl}
Note that also this question \emph{cannot} be answered using the
Method of Approximations because $\BBool{\dual{f}}=\BBool{f}$ always
holds: given a $(\lor,\land)$ circuit for $f$, we can obtain a
$(\lor,\land)$ circuit of the same size for the dual function
$\dual{f}$ by just interchanging AND and OR gates.

As mentioned in \cref{sec:readk}, the model of ``read-$k$ circuits'' is by analogy with the well-known computation model of ``read-$k$ times branching programs.''
So, let us briefly recall this latter model.
A (nondeterministic) \emph{branching program} (BP), also known as a \emph{switching-and-rectifier
network}, is a directed
acyclic graph, each edge of which is either a \emph{switch} (is
labeled by either a variable $x_i$ or by a negated variable $\bar{x}_i$)
or is a \emph{rectifier} (is labeled by constant $1$).
There is one node $s$ of zero indegree and one
node $t$ of zero outdegree. The term \emph{defined} by an $s$-$t$ path
is the AND of labels of its edges.  The Boolean function \emph{computed}
by a branching program is the OR of terms defined by all $s$-$t$ paths. The \emph{size} of
such a program is the total number of switches.  A variable $x_i$ is
\emph{read} along a path if $x_i$ or $\bar{x}_i$ appears as a label of
some edge along that path.
A branching program is a \emph{syntactically read-$k$} program
if no variable is read more than $k$ times along \emph{any} $s$-$t$
path (``syntactically'' because the restriction is on \emph{all} $s$-$t$ paths).

\markov

In \emph{monotone} branching programs, none of the edges is labeled by a negated variable $\bar{x}_i$.
By \cref{lem:struct-bool}, a monotone branching program $F$ computes a
monotone Boolean function $f$ iff the term defined by any $s$-$t$ path
is an implicant of $f$, and for every prime implicant $p$ of $f$ there
is an $s$-$t$ path in $F$ (a shadow path of $p$) along which only the
variables of $p$ are read. The program $F$
is a \emph{semantically read-$k$} BP if every
prime implicant of $f$ has \emph{at least one} shadow path along which no
variable is read more than $k$ times (see \cref{fig:markov} for an example). Thus, monotone semantically read-$k$ branching programs correspond to
read-$k$ $(\lor,\land)$ circuits considered in this paper: the restriction is only on shadow $s$-$t$ paths: there are no restrictions on the remaining $s$-$t$ paths.

For a monotone Boolean function $f$, let $\BP{f}{k}$ denote the minimum
number of switches in a monotone semantically read-$k$ branching
program computing~$f$. Due to the sequential nature of computation
in branching programs (rather than parallel
nature, as in the case of circuits), their structure
could be easier to analyze.

\begin{probl}[Read-$k$ branching programs]\label{probl:networks}
\Cref{probl:1,probl:hierarchy,probl:dual} but for monotone semantically read-$k$ branching programs instead of circuits, that is, for the measure $\BP{f}{k}$ instead of $\Bool{f}{k}$.
\end{probl}

As possible candidates for separating functions $f$ in
\cref{probl:hierarchy,probl:dual,probl:networks}, one could try so-called ``blocking
lines'' functions, including the functions  $\lines{n,k}$ and $\cov{n,k}$ described in \cref{ex:long-lines,ex:short-lines} of
\cref{app:blocking-lines} ($n$ stands for the number of variables of these functions). Each of these functions can be
computed by a monotone syntactically read-$k$ $(\lor,\land)$ circuit with $\leq kn$ gates, as well as by a monotone syntactically read-$k$ branching program with $\leq kn$ switches.

That the blocking lines function $\lines{n,2}$ exhibits a large read-$1$/read-$2$ gap is shown by \cref{lem:gap}.
That  blocking lines functions $\cov{n,k}$ \emph{can} exhibit large gaps
for read-$k$ branching programs even for larger parameters $k$
was shown by
Okolnishnikova~\cite{Okolnishnikova04}.
Namely, she has proved that if $k\geq 4$ is a constant and $1\leq r\leq \sqrt{k}$, then every (even not monotone but) \emph{syntactically}
read-$r$ branching program computing $\cov{n,k}$ must
have an exponential (in $n$) number of switches.
Using different (\emph{non}-monotone) functions, Thathachar~\cite{Thathachar98} has proved such a gap even for $r=k-1$.
But, to my best knowledge,
no similar gaps are known for monotone but \emph{semantically} read-$k$
branching programs (where the read-$k$ restriction is only on shadow $s$-$t$ paths).

\appendix

\section{Blocking lines functions}
\label{app:blocking-lines}

The Boolean function $\lines{n,2}$ we used in the
proof of \cref{lem:gap} is just a very special case of the following
more general construction of Boolean functions that have small read-$k$
circuits but could (apparently) require large read-$r$ circuits for
$r<k$. That these functions sometimes indeed \emph{can} exhibit such gaps is shown by \cref{lem:gap}, as well as by the aforementioned result of Okolnishnikova~\cite{Okolnishnikova04}.

Let $\Ln\subseteq 2^P$ be a family of subsets of a finite set $P$; let
us call elements $p\in P$ \emph{points}, and sets $L\in\Ln$
\emph{lines}.  Suppose that the family $\Ln$ is  $m$-\emph{uniform} (each
line has exactly $m$ points), and is $k$-\emph{regular} (each
point belongs to exactly $k$ lines). By double-counting, we have $m|\Ln|=k|P|$.  The \emph{blocking lines
  function} $f_{\Ln}$ has $n=|P|$ variables $x_p$, one for each point
$p\in P$ and, for every input $x\in\{0,1\}^P$, $f_{\Ln}(x)=1$ iff the set of points $S_x=\{p\in P\colon
x_p=1\}$ blocks (intersects) every line $L\in\Ln$. The monotone  circuit
$F_{\Ln}(x)=\bigwedge_{L\in\Ln}\big(\bigvee_{p\in L}x_p\big)$
 computes $f_{\Ln}$ and has $(m-1)|\Ln|+(|\Ln|-1)=m|\Ln|-1$ fanin-$2$
gates. Moreover, since no point belongs to more than $k$ lines,
$F_{\Ln}$ is a read-$k$ circuit. In particular, we have an upper bound
$\Bool{f_{\Ln}}{k}\leq m|\Ln|=k|P|=kn$. Note that for the dual
$\dual{f}_{\Ln}(x)=\bigvee_{L\in\Ln}\bigwedge_{p\in L}x_p$
of $f_{\Ln}$, we even have $\Bool{\dual{f}_{\Ln}}{1}\leq kn$, that is, the duals of blocking lines functions can be computed by small \emph{read-$1$} circuits. Since none of the points belongs to more
than $k$ lines, every implicant $t_S=\bigwedge_{p\in S} x_p$ of
$f_{\Ln}$ corresponds to a blocking set $S\subseteq P$ consisting of
$|S|\geq |\Ln|/k$ points, while shortest (prime) implicants $t_S$ of $f_{\Ln}$ (those with
the smallest number of variables) correspond to \emph{smallest}
blocking sets, that is, blocking sets $S\subseteq P$ consisting of
$|S|=|\Ln|/k$ pairwise noncollinear points; two points are
\emph{collinear} if they both belong to some line. Note that maxterms of $f_{\Ln}$ are sets $\{x_p\colon p\in L\}$ of variables corresponding to 
lines $L\in\Ln$, while minterms of $f_{\Ln}$ are sets $\{x_p\colon p\in S\}$ of variables corresponding to blocking sets $S\subseteq P$, none proper subset of which is a blocking set. 

Each point of a smallest blocking set $S\subseteq P$
blocks (intersects) its \emph{own} collection of $k$ lines: for every point $p\in S$ there is a collection $\Ln_p\subseteq\Ln$ of $|\Ln_p|=k$ lines such that $\Ln_p\cap\Ln_q=\emptyset$ holds for all points $p\neq q\in S$ (for, otherwise, points $p$ and $q$ would be collinear).
That is, each point $p\in S$ is the \emph{only} point of $S$ blocking the $k$ lines $\Ln_p$. In other words, each variable $x_p$ of a shortest prime implicant $t_S=\bigwedge_{p\in S} x_p$ of $f_{\Ln}$
blocks its \emph{own} collection of $k$ lines.
Intuitively, this means that the variable $x_p$ ``should'' be accessed by a circuit or a branching program at least $k$ times to produce this implicant.

\begin{ex}[Many short lines]\label{ex:short-lines}
  Instead of $2$-dimensional tensors (matrices), as in the case of the
  function $\lines{n,2}$ used in \cref{lem:gap}, one can consider $k$-dimensional tensors for
   $k\geq 3$. Let $n$ be of the form $n=m^k$.  As the
  underlying set $P$ of points, take the set $P=[m]^k$ of $k$-tuples
  $p=(p_1,\ldots,p_k)\in[m]^k$, and consider the family $\Ln\subseteq
  2^{P}$ of all (combinatorial) lines, where the \emph{line} in the
  $i$th direction through a point $p\in P$ is the set
  $L=\big\{(p_1,\ldots,p_{i-1},\ast,p_{i+1},\ldots,p_k)\colon
  \ast=1,\ldots,n\big\}$ of $|L|=m$ points.  Thus, we have
  $|\Ln|=km^{k-1}=(k/m)|P|$ distinct lines, no two sharing more than one point.
  Since each point belongs to exactly $k$ lines (there are $k$
  possible positions for $\ast$), the family $\Ln$ is
  $k$-regular. Hence, the corresponding (to this family $\Ln$)
  blocking lines function $\lines{n,k}(x):=f_{\Ln}(x)$ can be computed
  by a monotone read-$k$ $(\lor,\land)$ circuit of size $\leq m|\Ln|=k|P|=km^k=kn$; in particular,  $\Bool{\lines{n,k}}{k}\leq kn$ holds.
  Note that in this family $\Ln$ of lines, two points ($k$-tuples) $p\neq q\in [m]^k$ are collinear iff they differ in exactly one position. In particular, any set of \emph{pairwise} collinear points must entirely lie in one line.
  In the case
  $k=2$ (matrices, as in \cref{lem:gap}), smallest blocking sets consist of the entries of a permutation  matrix.
\qed\end{ex}

\begin{ex}[Few long lines]\label{ex:long-lines}
  Let $n$ be of the form $n=\binom{m}{k}$, were $m$ is divisible by
  $k$.  As the underlying set $P$ of points, take the collection
  $P=\binom{[m]}{k}$ of all $|P|=n$ $k$-element sets $p\subseteq [m]:=\{1,\ldots,m\}$.  For
  $i\in [m]$, let the line in the $i$th direction be the set
  $L_i=\{p\in P\colon i\in p\}$ of $|L_i|=\binom{m-1}{k-1}$ points
  containing~$i$. Let $\Ln=\{L_1, \ldots,L_m\}$ be the family of all $m$
  lines (in all $m$ directions). Since every point $p\in P$ consists
  of $k$ \emph{distinct} elements of $[m]$, each of them belongs to
  exactly $k$ lines; hence, the family $\Ln$ is $k$-regular For the
  corresponding (to this family $\Ln$) blocking lines function
  $\cov{n,k}(x):=f_{\Ln}(x)$, we have an upper bound
  $\Bool{\cov{n,k}}{k}\leq m|\Ln|=m
  \binom{m-1}{k-1}=k\binom{m}{k}=k|P|=kn$.
  In this family $\Ln$ of lines, two points $p\neq q\in \binom{[m]}{k}$ are collinear iff $p\cap q\neq \emptyset$, and a set $S\subseteq \binom{[m]}{k}$ of points ($k$-element subsets of $[m]$) is a blocking set iff the union of these subsets is the entire set $[m]$.
   Hence, smallest blocking sets $S\subseteq P$ consist of $|S|=m/k$ points forming a partition of
  $[m]$ into $m/k$ disjoint blocks of size $k$.
  In the case $k=2$, points $p\in P$ correspond to the edges $p=\{i,j\}$ of the complete graph
  $K_m$ on $[m]$, and the line $L_i$ in the $i$th direction is the set of
  all $m-1$ edges incident with vertex $i$. Then $\cov{n,2}(x)=1$ iff
  the subgraph $G_x$ of $K_m$ specified by $x$ has no isolated
  vertices. Smallest blocking sets in this case are  perfect matchings in~$K_m$.
\qed\end{ex}

\paragraph{Acknowledgments}
I am thankful to both referees for very useful comments and
suggestions.

\bibliographystyle{plain}

\footnotesize
\renewcommand{\baselinestretch}{0}

\end{document}